\documentclass[english,12pt]{article}

\usepackage{hyperref}

\usepackage[T1]{fontenc}
\usepackage[latin1]{inputenc}
\usepackage{geometry}
\geometry{verbose,tmargin=2cm,bmargin=2cm,lmargin=2cm,rmargin=2cm}
\usepackage{float}
\usepackage{mathrsfs}
\usepackage{amsmath}
\usepackage{amssymb}
\usepackage{graphicx}
%\usepackage{subfigure} %this contradicts with hyperref package

%change color of equation number when using \ref
\hypersetup{
    colorlinks=true,
    linkcolor=blue,
    filecolor=magenta,      
    urlcolor=cyan,
    citecolor = red,
}

\makeatletter

%%%%%%%%%%%%%%%%%%%%%%%%%%%%%% LyX specific LaTeX commands.
\usepackage{algorithm,algpseudocode}

%\floatstyle{ruled}
%\newfloat{algorithm}{tbp}{loa}
%\providecommand{\algorithmname}{Algorithm}
%\floatname{algorithm}{\protect\algorithmname}

%%%%%%%%%%%%%%%%%%%%%%%%%%%%%% User specified LaTeX commands.
%%%%%%%%%%%%%%%%%%%%%%%%%%%%%% LyX specific LaTeX commands.
%% Because html converters don't know tabularnewline

%%%%%%%%%%%%%%%%%%%%%%%%%%%%%% Textclass specific LaTeX commands

\usepackage{amsthm}

\usepackage{mathrsfs}

\usepackage{amsfonts}

\usepackage{epsfig}

\usepackage{bm}

\usepackage{mathrsfs}

\usepackage{enumerate}

\@ifundefined{definecolor}{\@ifundefined{definecolor}
 {\@ifundefined{definecolor}
 {\usepackage{color}}{}
}{}
}{}

\usepackage{subfig}\usepackage[all]{xy}

\newtheorem{theorem}{Theorem}[section]
\newtheorem{lem}{Lemma}[section]
\newtheorem{rem}{Remark}[section]

\newcounter{hypA}

\newcounter{hypB}

\newcounter{hypD}
\newenvironment{hypD}{\refstepcounter{hypD}\begin{itemize}
 \item[({\bf D\arabic{hypD}})]}{\end{itemize}}

\usepackage{babel}\date{}

\usepackage{babel}

\makeatother

\usepackage{babel}

\usepackage{mathtools}

\DeclarePairedDelimiter\floor{\lfloor}{\rfloor}

\usepackage[title]{appendix}

\begin{document}

%+Title
\begin{center}

{\Large \textbf{Score-Based Parameter Estimation for a Class of \\[0.2cm] Continuous-Time State Space Models}}

\vspace{0.5cm}

BY ALEXANDROS BESKOS$^{1}$, DAN CRISAN$^{2}$, AJAY JASRA$^{3}$, NIKOLAS KANTAS$^{2}$ \& HAMZA RUZAYQAT$^{3}$

{\footnotesize $^{1}$Department of Statistical Science, University College London, London, WC1E 6BT, UK.}
{\footnotesize E-Mail:\,} \texttt{\emph{\footnotesize a.beskos@ucl.ac.uk}}\\
{\footnotesize $^{2}$Department of Mathematics, Imperial College London, London, SW7 2AZ, UK.}
{\footnotesize E-Mail:\,} \texttt{\emph{\footnotesize d.crisan@imperial.ac.uk, n.kantas@imperial.ac.uk}}\\
{\footnotesize $^{3}$Computer, Electrical and Mathematical Sciences and Engineering Division, King Abdullah University of Science and Technology, Thuwal, 23955, KSA.}
{\footnotesize E-Mail:\,} \texttt{\emph{\footnotesize ajay.jasra@kaust.edu.sa, hamza.ruzayqat@kaust.edu.sa}}
\end{center}

%%% Local Variables: 
%%% mode:latex
%%% TeX-master: "ex_article"
%%% End: 

\begin{abstract}
We consider the problem of parameter estimation for a class of continuous-time state space models (SSMs).
In particular, we explore the case of a partially observed diffusion, 
with data also arriving according to a diffusion process. Based upon a standard identity of the score function, we consider two particle
filter based methodologies to estimate the score function. Both methods rely on an
online estimation algorithm for the score function, as described, e.g., in~\cite{backward}, of 
 $\mathcal{O}(N^2)$ cost, with $N\in\mathbb{N}$
the number of particles. The first approach employs a simple Euler discretization
and standard particle smoothers % (see, e.g., \cite{crisan_bain})
and is of cost $\mathcal{O}(N^2 + N\Delta_l^{-1})$ per unit time, where \mbox{$\Delta_l=2^{-l}$}, $l\in\mathbb{N}_0$, is the time-discretization step. 
The second approach is new and based upon a novel diffusion bridge construction.
It yields a new backward type Feynman-Kac formula in continuous-time for the score function and is presented 
along with a particle method for its approximation. Considering a time-discretization, the cost is $\mathcal{O}(N^2\Delta_l^{-1})$ per unit time. 
To improve computational costs, we then consider multilevel methodologies for the score function. We illustrate our parameter estimation method via stochastic gradient approaches in several numerical examples.\\
\noindent \textbf{Keywords}: Score Function, Parameter Estimation, Particle Filter, Diffusion Bridges.\\
\noindent\textbf{AMS subject classifications}: 65C05, 65C35, 60G35, 60J60, 60J65, 60H10, 60H35, 91G60
\end{abstract}

\section{Introduction}

We consider the problem of parameter estimation for continuous-time SSMs. These are models comprising stochastic differential equations (SDEs)
describing a hidden dynamic state and their observations.
Such models are ubiquitous in
a large number of practical applications in science, engineering, finance, and economics, see \cite{handbook} for an overview.
Inference in SSMs, also known as hidden Markov models, hinges upon computing conditional probability distributions of 
the dynamic hidden state given the acquired observations and unknown static parameters. This is referred to as the stochastic filtering problem, 
which is in general intractable, but reliable 
numerical approximations are routinely available \cite{crisan_bain,handbook}. 
The problem of inferring the unknown parameters is more challenging.
In this paper we focus on maximum likelihood inference and gradient methods that are performed in an online manner. The offline or iterative case given a fixed batch of observations can also be treated using our proposed methods.
The approach considered in this article is to make use of the gradient of the log-likelihood, commonly referred to as the \emph{score function}, 
within a stochastic gradient algorithm 
(see e.g.~\cite{backward,forward_smoothing}). 
Intrinsically, there are several challenges arising with such an approach. Firstly,
when one adopts a continuous-time model and 
assumes access to arbitrarily high frequency observations, one does not observe 
in practice truly in continuous-time, therefore some sort of time-discretization is required.
Secondly, in both discrete-time and continuous-time formulations, there are very few cases when the score function is analytically available. 
Both these issues imply that numerical approximations are required.

We consider two different approaches for the numerical approximation of the score function. 
The first is to simply time-discretize a representation of the score function and then apply 
discrete-time numerical approximation schemes \cite{backward, forward_smoothing, paris}. 
The second, is to develop a numerical approximation scheme directly on continuous-time path-space to estimate the score function
and then (necessarily) discretize the algorithm in time. The order of designing the estimation method and time-discretization can be rather important. Often  the second approach
is preferable in terms of both performance and robustness as the discretization mesh vanishes, 
see e.g.~\cite{omiros_ISdiffusions, omiros_DA}. % for a related discussion in the context of discretely observed diffusions.
We then use the score estimate for implementing recursive maximum likelihood, where the parameters are updated at unit time intervals (\cite{legland1997,poyia}). 
The particular choice of time interval length is without loss of generality, and allows the score to accumulate sufficient information from the observations before updating the parameters.
%Whilst a fully continuous-time implementation for the parameter update is desirable, there are fundamental challenges in terms of establishing
%a valid likelihood function for the multiplicative noise case.% (see Remark XXX below). 
% We explain these two directions in more detail.
%  -- of different perception --  

In the first approach, we consider a well-known expression for the score function, for instance as given in \cite{campillo}.
Given this formula, one can produce an Euler discretized version of the score and work in discrete-time but with high frequency. 
The score is an expectation 
of an additive functional of the hidden state path conditioned on the available observations, which is commonly referred to as the smoothing distribution. 
In this context, many well-known particle smoothing schemes can now be adopted, such as the ones described in  \cite{backward, paris}. 
These latter approaches are simulation-based schemes whose convergence is based upon the number of samples $N\in\mathbb{N}$.
We prove some technical results for the discretized problem, which together with the work in \cite{backward} 
allow us  conjecture that to obtain a mean square error (MSE) of $\mathcal{O}(\epsilon^2)$,  for given $\epsilon>0$,  we require a computational cost of 
$\mathcal{O}(\epsilon^{-4})$ per unit time. The latter derives from an algorithmic cost of $\mathcal{O}(N^2 + N\Delta_l^{-1})$ per unit time, 
where $\Delta_l=2^{-l}$ is the time-discretization. As we explain later in the article, this corresponds to a best case scenario, 
due to the intrinsic nature of the algorithm. In particular, we start with a continuous-time formula and time-discretize it, but
the deduced numerical algorithm can be problematic in terms of computational complexity as $l\rightarrow\infty$. 
Whilst in some scenarios one does not observe any issues, examples can be found where the variance of the method can explode as $l$ grows (\cite{shouto}), 
thus putting into question the validity of the conjecture on the cost to achieve an MSE of $\mathcal{O}(\epsilon^2)$.

This motivates the introduction of our second approach, where we build upon a change of measure technique proposed in \cite{schauer}.
This latter approach has so far been used in very different contexts than the present paper, namely related to discretely
observed diffusions for Bayesian inference and Markov chain Monte Carlo (MCMC) \cite{vd_meulen_guided_mcmc, whitaker_guided} or smoothing for potentially non-linear observation functions   \cite{vd_meulen_smoothing}.
%We note that the application of this technique in the context of general score estimation, particle smoothing and likelihood inference has not appeared previously elsewhere in the literature.
Our approach is a data augmentation scheme, whereby, at unit time intervals, the end points for the hidden state are sampled and the path is connected using
diffusion bridges. Then, starting again from the formula for the score function in \cite{campillo}, we will use this change of measure associated to
a diffusion bridge and its driving Brownian noise; see also \cite{shouto} where a related approach is used for a different class of problems. 
Based upon this change of measure we develop a new backward type Feynman-Kac formula in continuous-time. 
This new formula facilitates an adaptation of the method in \cite{backward} in true continuous-time, albeit one cannot apply it in practice. 
We time-discretize the algorithm and conjecture that to obtain a MSE of $\mathcal{O}(\epsilon^2)$, for given $\epsilon>0$, we need a cost of 
$\mathcal{O}(\epsilon^{-6})$ per unit time, which derives from an algorithmic cost of  $\mathcal{O}(N^2\Delta_l^{-1})$ per unit time. 
We note however, that this computational complexity will not explode with increasing $l$ as may be the case in the first approach. 
To improve the cost required for a given MSE, we develop a novel  multilevel Monte Carlo extension that can, in some cases, achieve a 
MSE of $\mathcal{O}(\epsilon^2)$  at a cost per unit time of $\mathcal{O}(\epsilon^{-4})$. 
We remark that although our MSE-cost statements are based upon conjectures, they are verified numerically. 
Direct proofs of these require substantial technical results that will be the topic of future work.

%\subsection{Previous Work}

\subsection{Contributions and Organization}
We conclude this introduction by emphasizing that our contributions are aimed to deal with both continuous-time observations and hidden states. 
As mentioned earlier this poses very particular challenges relative to earlier works that deal with filtering and smoothing 
when discrete-time observations/models are used as in \cite{botha_sde_pmcmc,etienne_bwd, glo, vd_meulen_smoothing, whitaker_guided,  sarkka_pf_girsanov}.
None of these works look at continuous-time observations. Similarly online likelihood estimation of the parameters using the score function 
is considered  in \cite{etienne_bwd, glo} only for the case of discrete-time observations of hidden diffusions.

The contributions of this paper can be summarized as follows:
\begin{itemize}
 \item We investigate the efficiency and accuracy of two fundamentally different numerical approximations of the score function on its own and 
 when used for the purpose of recursive maximum likelihood. Both methods rely on fairly standard tools such as changes of measure, particle smoothing and Euler time-discretization.
 \item We provide a detailed discussion on the computational complexity of each method. We illustrate the performance and computational cost for several models in numerical examples that consider estimation of the score function and parameter estimation.
 \item The second approach is a novel method that operates directly on the path-space. The approach improves performance and is robust
 to arbitrarily small time-discretization at the expense of additional computational cost. The latter is reduced using a new Multilevel Particle Filter; see \cite{mlpf, high_freq_ml} for some existing approaches.
\end{itemize}

This article is structured as follows. In Section \ref{sec:problem} the basic problem is formulated in continuous-time.
In Section \ref{sec:method1} we consider our first method for online score estimation and explain the various features associated to it. In Section \ref{sec:method2} our second method for online score estimation is developed.
In Section \ref{sec:numerics} our numerical results are presented. 
%We conclude in Section \ref{sec:conclude}.

\subsection{Notation}

Let $(\mathsf{X},\mathcal{X})$ be a measurable space.
We write $\mathcal{B}_b(\mathsf{X})$ for the set of bounded measurable functions,
 $\varphi:\mathsf{X}\rightarrow\mathbb{R}^d$, $d\in\mathbb{N}$, and $\mathcal{C}(\mathsf{X})$ for the  continuous ones.
Let $\varphi:\mathbb{R}^d\rightarrow\mathbb{R}$; $\textrm{Lip}_{\|\cdot\|_2}(\mathbb{R}^{d})$ denotes the collection of real-valued functions that are Lipschitz w.r.t. the Euclidean distance $\|\cdot\|_2$ 
%($\|\cdot\|_p$ denotes the $\mathbb{L}_p$-norm of $x\in\mathbb{R}^d$, $p>0$)
. That is, $\varphi\in\textrm{Lip}_{\|\cdot\|_2}(\mathbb{R}^{d})$ if there exists a $C<+\infty$ such that for any $(x,y)\in\mathbb{R}^{2d}$,
$$
|\varphi(x)-\varphi(y)| \leq C\|x-y\|_2.
$$
%For $\varphi\in\mathcal{B}_b(\mathsf{X})$, we write the supremum norm $\|\varphi\|=\sup_{x\in\mathsf{X}}|\varphi(x)|$.
%$\mathcal{P}(\mathsf{X})$  denotes the collection of probability measures on $(\mathsf{X},\mathcal{X})$.
%For a measure $\mu$ on $(\mathsf{X},\mathcal{X})$
%and a $\varphi\in\mathcal{B}_b(\mathsf{X})$, the notation $\mu(\varphi)=\int_{\mathsf{X}}\varphi(x)\mu(dx)$ is used. 
%$B(\mathbb{R}^d)$ denote the Borel sets on $\mathbb{R}^d$.
%For $(\mathsf{X}\times\mathsf{Y},\mathcal{X}\vee\mathcal{Y})$ a measurable space and $\mu$ a non-negative measure on this space,
%we use the tensor-product of function notations for $(\varphi,\psi)\in\mathcal{B}_b(\mathsf{X})\times\mathcal{B}_b(\mathsf{X})$,
%$\mu(\varphi\otimes\psi)=\int_{\mathsf{X}\times\mathsf{Y}}\varphi(x)\psi(y)\mu(d(x,y))$.
%Let $K:\mathsf{X}\times\mathcal{X}\rightarrow[0,1]$ be a Markov kernel and $\mu$ be a measure then we use the notations
%$
%\mu K(dy) = \int_{\mathsf{X}}\mu(dx) K(x,dy)
%$
%and for $\varphi\in\mathcal{B}_b(\mathsf{X})$, 
%$
%K(\varphi)(x) = \int_{\mathsf{X}} \varphi(y) K(x,dy).
%$
%%For a sequence of Markov kernels $K_1,\dots,K_n$ we write 
%%$$
%%K_{1:n}(x_0,dx_n) = \int_{\mathsf{X}^{n-1}}\prod_{p=1}^n K_p(x_{p-1},dx_p) .
%%$$
%For $\mu,\nu\in\mathcal{P}(\mathsf{X})$, the total variation distance 
%is written $\|\mu-\nu\|_{\textrm{tv}}=\sup_{A\in\mathcal{X}}|\mu(A)-\nu(A)|$.
%For $A\in\mathcal{X}$ the indicator is written $\mathbb{I}_A(x)$.
%$\mathcal{U}_A$ denotes the uniform distribution on the set~$A$. 
$\mathcal{N}_s(\mu,\Sigma)$ 
denotes an $s$-dimensional Gaussian law of mean $\mu$ and covariance $\Sigma$; if $s=1$ we omit subscript $s$. For a vector/matrix $X$, $X^*$ denotes the transpose of $X$.
For $A\in\mathcal{X}$, $\delta_A(du)$ denotes the Dirac measure of $A$, and if $A=\{x\}$ with $x\in \mathsf{X}$, we write $\delta_x(du)$. 
For a vector-valued function in $d$-dimensions $\varphi(x)$ (resp.~$d$-dimensional vector~$x$) we write the $i^{\textrm{th}}$-component, $1\le i\le d$, as $\varphi^{(i)}(x)$ (resp.~$x^i$). For a $d\times q$ matrix $x$ we write the $(i,j)^{\textrm{th}}$-entry as $x^{(ij)}$. $\mathbb{N}=\{1,2,\dots\}$ and $\mathbb{N}_0=\mathbb{N}\cup\{0\}$.
\section{Problem Formulation}\label{sec:problem}

\subsection{Preliminaries}
We consider the parameter space $\theta\in\Theta\subset\mathbb{R}^{d_{\theta}}$, with $\Theta$ being compact, $d_{\theta}\in\mathbb{N}$. 
%Let $(\Omega,\mathcal{F},\mathbb{P}_{\theta})$ be a probability triple. 
The  stochastic processes $\{Y_t\}_{t\geq 0}$, $\{X_t\}_{t\geq 0}$ of interest are defined upon the probability triple $(\Omega,\mathcal{F},\mathbb{P}_{\theta})$, 
with $Y_t\in\mathbb{R}^{d_y}$, $X_t\in\mathbb{R}^{d_x}$, $d_y, d_x \in\mathbb{N}$, initial conditions $X_0=x_*\in\mathbb{R}^{d_x}$, $Y_0=y_*\in\mathbb{R}^{d_y}$, and are determined as the %weak 
solution of the system of SDEs: 
\begin{align}
dY_t & =  h_{\theta}(X_t)dt + dB_t; \label{eq:obs}\\
dX_t & =  b_{\theta}(X_t)dt + \sigma(X_t)dW_t. \label{eq:state}
\end{align}
Here,  for each $\theta\in\Theta$, $h_{\theta}:\mathbb{R}^{d_x}\rightarrow\mathbb{R}^{d_y}$, $b_{\theta}:\mathbb{R}^{d_x}\rightarrow\mathbb{R}^{d_x}$, $\sigma:\mathbb{R}^{d_x}\rightarrow\mathbb{R}^{d_x\times d_x}$, with $\sigma$ being of full rank, and $\{B_t\}_{t\geq 0}, \{W_t\}_{t\geq 0}$
are independent standard Brownian motions of dimension $d_y$, $d_x$ respectively.

To minimize technical difficulties, the following assumptions are made throughout the paper:
\begin{hypD}
	\label{hyp_diff:1}
	%We have that:
	\begin{itemize}
		\item[(i)] $\sigma$ is continuous, bounded;  $a(x):=\sigma(x)\sigma(x)^*$ is uniformly elliptic; \vspace{0.1cm}
		\item[(ii)] for each $\theta$, $h_\theta$ and $b_{\theta}$ are bounded, measurable;  $h_{\theta}^{(i)}\in\textrm{Lip}_{\|\cdot\|_2}(\mathbb{R}^{d_x})$, $1\le i \le d_y$;  \vspace{0.1cm}
		\item[(iii)] 
		%$h_{\theta},b_{\theta}$ are continuously differentiable w.r.t.~$\theta$; 
		the gradients $\nabla_{\theta}h_\theta:\mathbb{R}^{d_x}\rightarrow\mathbb{R}^{d_y\times d_{\theta}}$ and  $\nabla_{\theta}b_\theta:\mathbb{R}^{d_x}\rightarrow\mathbb{R}^{d_x\times d_{\theta}}$ exist, and are  continuous,  bounded, measurable; 
		$\nabla_{\theta}h_{\theta}^{(ij)}\in\textrm{Lip}_{\|\cdot\|_2}(\mathbb{R}^{d_x})$, $1\le i\le d_y$, \mbox{$1\le j\le d_{\theta}$};
		\item[(iv)] let $\phi_{\theta}(x)=(\nabla_{\theta}b_{\theta}(x))^*\,a(x)^{-1}\sigma(x)$; for any  $\theta$, $\phi_{\theta}^{(ij)}\in\textrm{Lip}_{\|\cdot\|_2}(\mathbb{R}^{d_x})$, $1\le i \le d_{\theta}$, \mbox{$1\le j\le d_{x}$}.
	\end{itemize}
\end{hypD}
%

%(Without loss of generality one can assume that $\Omega$ is the canonical space $C([0,\infty),\mathbb{R}^{d_x+d_y})$, $X$, $Y$ the canonical processes on $C([0,\infty),\mathbb{R}^{d_x+d_y})$, and $\mathbb{P}_{\theta}$ the joint law of $(X,Y)$, see \cite{campillo}.) **inconsistent notation**
%
%\begin{enumerate}
%\item{$\sigma$ is continuous and bounded and $a(x):=\sigma(x)\sigma(x)^*$ is uniformly elliptic.}
%\item{For each $\theta\in\Theta$, $(h_\theta,b_{\theta})$ are bounded and measurable.}
%\item{$h_{\theta},b_{\theta}$ are continuously differentiable w.r.t.~$\theta$ and for each $\theta\in\Theta$, $\nabla_{\theta}h_\theta:\mathbb{R}^{d_x}\rightarrow\mathbb{R}^{d_y\times d_{\theta}}$, $\nabla_{\theta}b_\theta:\mathbb{R}^{d_x}\rightarrow\mathbb{R}^{d_x\times d_{\theta}}$, with
%$(\nabla_{\theta}h_\theta,\nabla_{\theta}b_{\theta})$ bounded and measurable.}
%\end{enumerate}

\noindent 
%We write $\mathbb{E}_{\theta}[\,\cdot\,]$ for expectations under $\mathbb{P}_{\theta}$.
We  introduce the probability measure $\overline{\mathbb{P}}_{\theta}$, defined via the Radon-Nikodym derivative:
\begin{equation}
\label{eq:Z}
Z_{t,\theta}:=\frac{d\mathbb{P}_{\theta}}{d\overline{\mathbb{P}}_{\theta}}\Big|_{\mathcal{F}_{_t}} = \exp\Big\{\int_{0}^t h_{\theta}(X_s)^*dY_s - \tfrac{1}{2}\int_{0}^t h_{\theta}(X_s)^*h_{\theta}(X_s)ds\Big\},
\end{equation}
with $\mathcal{F}_t=\sigma(\{X_s,Y_s\}_{0\le s\le t})$. 
%of  Girsanov's theorem gives that:
%%
% \begin{equation}
% \label{eq:ZI}
% Z_{T,\theta} \equiv p_{\theta}(\{Y_t\}_{0\le t\le T}|\{X_t\}_{0\le t\le T}),
% \end{equation}
% %
%for a time horizon $T>0$, 
%the right hand side quantity denoting the (un-normalized version of the) density -- w.r.t.~the Wiener measure on $C([0,T],\mathbb{R}^{d_y})$ -- of the data $\{Y_t\}_{0\le t\le T}$ conditionally on
%the signal $\{X_t\}_{0\le t\le T}$.
Henceforth, $\overline{\mathbb{E}}_{\theta}$ denotes expectation 
w.r.t.~$\overline{\mathbb{P}}_{\theta}$, so that under  $\overline{\mathbb{P}}_{\theta}$, the process $\{X_t\}_{t\geq 0}$ follows the dynamics in~\eqref{eq:state}, whereas $\{Y_t\}_{t\geq 0}$ is a Brownian motion
independent of $\{X_t\}_{t\geq 0}$.  We define,  for $\varphi\in\mathcal{B}_b(\mathbb{R}^{d_x})$:
\begin{equation*}
\gamma_{t,\theta}(\varphi) :=
 \overline{\mathbb{E}}_{\theta}\,\big[\,\varphi(X_t)Z_{t,\theta}\,\big|\,\mathcal{Y}_t\,\big],
\end{equation*}
where $\mathcal{Y}_t$ is the filtration generated by the process $\{Y_s\}_{0\leq s \leq t}$. 
Our objective is to produce estimates of the gradient of the score function $\nabla_{\theta}\log(\gamma_{T,\theta}(1))$.
%
 %$$\nabla_{\theta}\log p_{\theta}(\{Y_t\}_{0\le t\le T}) \equiv \nabla_{\theta}\log(\gamma_{T,\theta}(1)). $$
 %
  %for some time horizon $T>0$. %here,  $p_{\theta}(\{Y_t\}_{0\le t\le T})$ is the Radon-Nikodym derivative of the distribution of  the data on $[0,T]$ w.r.t.~Wiener measure on $C([0,T],\mathbb{R}^{d_y})$. 
\begin{rem}
To connect the changes of measures with standard likelihood derivations,  notice that -- via Girsanov's  theorem -- $Z_{t,\theta}$ is the density (w.r.t.~to a Wiener measure) of the distribution of $\{Y_{s}\}_{0\le s\le t}$ conditionally 
on $\{X_{s}\}_{0\le s\le t}$. Then,  $\gamma_{T,\theta}(1)$ integrates out $\{X_{t}\}_{0\le t\le T}$,  thus corresponds to the marginal density -- i.e., the likelihood -- of the observations  $\{Y_{t}\}_{0\le t\le T}$.
\end{rem} 

In our setting, the score function writes as
 (see e.g.~\cite{campillo}):
\begin{equation}\label{eq:gll}
\nabla_{\theta}\log(\gamma_{T,\theta}(1)) = \frac{\overline{\mathbb{E}}_{\theta}\,[\,\lambda_{T,\theta}Z_{T,\theta}\,|\,\mathcal{Y}_T\,]}{\overline{\mathbb{E}}_{\theta}\,[\,Z_{T,\theta}\,|\,\mathcal{Y}_T\,]},
\end{equation}
where we have defined: 
\begin{align*}
\lambda_{T,\theta} := \int_{0}^T(\nabla_{\theta}b_{\theta}(X_t))^* &a(X_t)^{-1}\sigma(X_t)dW_t
\\ &+ \int_{0}^T(\nabla_{\theta}h_{\theta}(X_t))^* dY_t - \int_{0}^T(\nabla_{\theta}h_{\theta}(X_t))^*h_{\theta}(X_t)dt.
\end{align*}
 For completeness, a derivation of \eqref{eq:gll} can be found in \autoref{sec:Deriv_Eq4}.
We remark that one can derive a formula for the score function when $\sigma$ depends upon $\theta$, which is given in Section \ref{sec:method2}.
We will assume throughout that $T\in\mathbb{N}$. Note also that an application of Bayes' rule gives that, almost surely:
\vspace{-8pt}
\begin{equation}\label{eq:gll1} 
\frac{\overline{\mathbb{E}}_{\theta}\,[\,\lambda_{T,\theta}Z_{T,\theta}\,|\,\mathcal{Y}_T\,]}{\overline{\mathbb{E}}_{\theta}\,[\,Z_{T,\theta}\,|\,\mathcal{Y}_T\,]} =
\mathbb{E}_{\theta}\,[\,\lambda_{T,\theta}\,|\,\mathcal{Y}_T\,], 
\end{equation}
where $\mathbb{E}_{\theta}$ denotes expectation w.r.t.~$\mathbb{P}_{\theta}$.

\subsection{Parameter Estimation}

In the offline case suppose one has obtained data $\{Y_t\}_{0\le t\le T}$. Then it is possible to perform standard gradient descent using \eqref{eq:gll1} and updating $\theta$ iteratively:
\begin{align}\label{eq:theta_update_offline}
\theta^{m+1}=\theta^{m}+\alpha_m\,\mathbb{E}_{\theta^m}\,[\,\lambda_{T,\theta^m}\,|\,\mathcal{Y}_T\,],
\end{align}
where $\alpha_m\in\mathbb{R}^+$, $m\in\mathbb{N}_{0}$, are decreasing step-sizes. 
Instead here we will mainly focus on an online gradient estimation procedure. To obtain this one can aim to maximize the following limiting \emph{average log-likelihood}, 
$$\mathcal{L}(\theta)=\lim_{t\rightarrow\infty}\frac{1}{t}\int_0^t\log\gamma_{s,\theta}(1)ds.$$ 
Let the filter be denoted as $\pi_{t,\theta}(\varphi)=\mathbb{E}_\theta[\varphi(X_t)\vert \mathcal{Y}_s]$ and using standard arguments (e.g. Lemma 3.29 p. 67 \cite{crisan_bain}) one can re-write 
$\gamma_{t,\theta}(1)$ as 
$$\log\gamma_{t,\theta}(1)=\int_{0}^{t}\pi_{s,\theta}(h)^{T}dY_s-\frac{1}{2}\int_{0}^{t}\pi_{s,\theta}(h)^{T}\pi_{s,\theta}(h)ds.$$
Under appropriate stability and regularity conditions for both $Y_t$ and $\pi_{t,\theta}$ (see \cite{surace18} for more details), then both $\mathcal{L}(\theta)$ and $\nabla_\theta\mathcal{L}(\theta)$ are ergodic averages.
This means one can implement stochastic gradient ascent using
either $\nabla_\theta\log\gamma_{t,\theta}(1)$ or $\nabla_\theta\log\gamma_{t_n,\theta}(1)-\nabla_\theta\log\gamma_{t_{n-1},\theta}(1)$ for any $t_{n-1}<t_n$ as estimates of $\mathcal{L}(\theta)$.
Given an initial $\theta_0\in\Theta$, as we obtain the observation path continuously in time, we will update $\theta$ at times $T\in\mathbb{N}$ using the following recursion:
\begin{align}\label{eq:theta_update}
\theta_T = \theta_{T-1} + \alpha_T\Big(\nabla_{\theta}\log(\gamma_{T,\theta_{T-1}}(1))-
\nabla_{\theta}\log(\gamma_{T-1,\theta_{T-1}}(1))
\Big)
%\\ 
 %\equiv \theta_{T-1} + \alpha_T\Big( \nabla_{\theta}\log p_{\theta_{T-1}}\big(\{Y_t\}_{T-1\le t\le T}\big|\{Y_t\}_{0\le t\le %T-1}\big) \Big) \nonumber,
\end{align}
where, for $T\in\mathbb{N}$, $\alpha_T\in\mathbb{R}^+$ is a collection of step-sizes that satisfy $\sum_{T\in\mathbb{N}}\alpha_T = \infty$, $\sum_{T\in\mathbb{N}}\alpha_T^2<\infty$ 
to ensure convergence of the estimation as $T\rightarrow\infty$; see \cite{BMP90,legland1997} for details. This scheme can provide an online estimate for the parameter vector as data arrive.
Steps are performed at $\mathcal{O}(1)$ times to ensure that enough information has accumulated to update the parameter. The adoption of unit times is made only for notational convenience.
As both recursions \eqref{eq:theta_update_offline} and \eqref{eq:theta_update}  cannot be computed exactly, we focus upon methodologies that
approximate the score function $\nabla_{\theta}\log(\gamma_{T,\theta}(1))$.

%\subsection{Regularity Conditions}

\section{Direct Feynman-Kac Formulation}\label{sec:method1}

\subsection{Discretized Model}\label{sec:disc_model}

In practice, we will have to work with a discretization of the model in \eqref{eq:obs}-\eqref{eq:state}.  %for several reasons:
%
%\begin{enumerate}
%\item{$Z_{T,\theta}$ is typically unavailable analytically.}
%\item{There may not be non-negative, unbiased estimates of the transition densities induced by the model \eqref{eq:obs}-\eqref{eq:state}.}
%\end{enumerate}
%
We assume access to path of the data $\{Y_t\}_{0\leq t \leq T}$ which is available up-to an (almost) arbitrarily fine level of time discretization. 
This would be a very finely discretized path, as accessing the actual continuous path of observation is not possible; this point is discussed later on. 
One could focus on a time-discretization of either side of \eqref{eq:gll1}, however, as is conventional
in the literature (e.g.~\cite{crisan_bain,high_freq_ml}) we focus on the left hand side.

Let $l\in\mathbb{N}_0$ and consider an Euler-Maruyama time-discretization with step-size $\Delta_l=2^{-l}$. 
That is, for $k\in\{1,2,\dots,T/\Delta_l\}$:
\begin{align}
%Y_{k\Delta_l} & = & Y_{(k-1)\Delta_l} + h_{\theta}(X_{(k-1)\Delta_l})\Delta_l + B_k^l\nonumber\\
\widetilde{X}_{k\Delta_l}  =  \widetilde{X}_{(k-1)\Delta_l} + b_{\theta}(\widetilde{X}_{(k-1)\Delta_l})\Delta_l + \sigma(\widetilde{X}_{(k-1)\Delta_l})(W_{k\Delta_l}-W_{(k-1)\Delta_l}),
\quad \widetilde{X}_{0}=x_*.
\label{eq:disc_state}
\end{align}
Note that the Brownian motion in \eqref{eq:disc_state} is the same as in \eqref{eq:state} under both $\mathbb{P}_{\theta}$ and $\overline{\mathbb{P}}_{\theta}$.
%where %$B_k^l\stackrel{\textrm{i.i.d.}}{\sim}\mathcal{N}_{d_y}(0,\Delta_lI)$ and independently 
%$W_k^l\stackrel{\textrm{i.i.d.}}{\sim}\mathcal{N}_{d_x}(0,\Delta_lI)$ are the increments of the Brownian motion in \eqref{eq:state}.
We set:
\begin{align}
\lambda_{T,\theta}^l(x_0,&x_{\Delta_l},\dots,x_T) :=  \sum_{k=0}^{T/\Delta_l-1}\Big\{\,(\nabla_{\theta} b_{\theta}(x_{k\Delta_l}))^*a(x_{k\Delta_l})^{-1}\sigma(x_{k\Delta_l})(W_{(k+1)\Delta_l}-W_{k\Delta_l}) \nonumber  \\  & + (\nabla_{\theta}h_{\theta}(x_{k\Delta_l}))^*(Y_{(k+1)\Delta_l}-Y_{k\Delta_l})
-  (\nabla_{\theta} h_{\theta}(x_{k\Delta_l}))^*h_{\theta}(x_{k\Delta_l})\Delta_l
\,\Big\}.
\label{eq:defL}
\end{align}
We remark that $\lambda_{T,\theta}^l$ is a function also of the observations, but this  dependence is suppressed from the notation. 
% Note also that one has:
% %
% \begin{align*}
% %B_k^l  =  Y_{k\Delta_l} - [Y_{(k-1)\Delta_l} + h_{\theta}(X_{(k-1)\Delta_l})\Delta_l] \\
% [W_{k\Delta_l}-W_{(k-1)\Delta_l}]   =  \sigma(\widetilde{X}_{(k-1)\Delta_l})^{-1}\big(\,\widetilde{X}_{k\Delta_l}- \big[\widetilde{X}_{(k-1)\Delta_l} + b_{\theta}(\widetilde{X}_{(k-1)\Delta_l})\Delta_l\big]\,\big).
% \end{align*}
%
For $k\in\{0,1,\dots,T/\Delta_l-1\}$, we define:
\vspace{-10pt}
\begin{align*}
g_{k,\theta}^l(x_{k\Delta_l}) := \exp\Big\{h_{\theta}(x_{k\Delta_l})^*(y_{(k+1)\Delta_l}-y_{k\Delta_l})-\tfrac{\Delta_l}{2}h_{\theta}(x_{k\Delta_l})^*h_{\theta}(x_{k\Delta_l})\Big\}.
\end{align*}
Note that:
\vspace{-10pt}
\begin{align*}
Z_{T,\theta}^l(x_0,x_{\Delta_l},\dots,x_T) :=& \prod_{k=0}^{T/\Delta_l-1}g_{k,\theta}^l(x_{k\Delta_l}) \\
=& \exp\Big\{\sum_{k=0}^{T/\Delta_l-1}\big[\,h_{\theta}(x_{k\Delta_l})^*(y_{(k+1)\Delta_l}-y_{k\Delta_l})-\tfrac{\Delta_l}{2}h_{\theta}(x_{k\Delta_l})^*h_{\theta}(x_{k\Delta_l})\,\big]\Big\}
\end{align*}
is a time-discretization of $Z_{T,\theta}$. 
%Now writing, conditonal on $\mathcal{Y}_T$, expectations w.r.t.~the discretized process in \eqref{eq:disc_state} as $\overline{\mathbb{E}}_{\theta}^l[\cdot|\mathcal{Y}_T]$, 
We thus obtain  the discretized approximation of the score function  $\nabla_{\theta}\log(\gamma_{T,\theta}(1))$:
\vspace{-10pt}
\begin{align}\label{eq:grad_est_discrete_time}
\nabla_{\theta}\log(\gamma_{T,\theta}^l(1)) := \frac{\overline{\mathbb{E}}_{\theta}\,[\,\lambda_{T,\theta}^l(\widetilde{X}_0,\widetilde{X}_{\Delta_l},\dots,\widetilde{X}_T)
\,Z_{T,\theta}^l(\widetilde{X}_0,\widetilde{X}_{\Delta_l},\dots,\widetilde{X}_T)\,|\,\mathcal{Y}_T\,]}{\overline{\mathbb{E}}_{\theta}\,[\,Z_{T,\theta}^l(\widetilde{X}_0,\widetilde{X}_{\Delta_l},\dots,\widetilde{X}_T)\,|\,\mathcal{Y}_T\,]}.
\end{align}

We have the following result which establishes the convergence of our Euler approximation. Below $\|\cdot\|_2$ is the $L_2-$norm for vectors. The proof is given in \autoref{proof_thm_bias}. 
\begin{theorem}\label{thm:bias}
Assume (D\ref{hyp_diff:1}). Then for any $(r,T)\in[1,\infty)\times\mathbb{N}$ there exists a $C<+\infty$ such that for any $l\in\mathbb{N}_0$
$$
 \mathbb{E}_{\theta}\left[\left\|\nabla_{\theta}\log(\gamma_{T,\theta}(1))  - \nabla_{\theta}\log(\gamma_{T,\theta}^l(1))\right\|_2^r\right]^{1/r} \leq C\Delta_l^{1/2}.
$$
\end{theorem}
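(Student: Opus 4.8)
The plan is to bound the error in two pieces: a \emph{weak-error} estimate for the Euler scheme applied to the numerator and denominator of \eqref{eq:gll}, together with a strong-type control of the quotient. Write $\gamma_{T,\theta}(1)=\overline{\bbE}_\theta[Z_{T,\theta}\mid\mathcal{Y}_T]$, $\gamma_{T,\theta}(\lambda):=\overline{\bbE}_\theta[\lambda_{T,\theta}Z_{T,\theta}\mid\mathcal{Y}_T]$ and the analogous discretized quantities $\gamma_{T,\theta}^l(1)$, $\gamma_{T,\theta}^l(\lambda)$. Since
$$
\nabla_\theta\log\gamma_{T,\theta}(1)-\nabla_\theta\log\gamma_{T,\theta}^l(1)
=\frac{\gamma_{T,\theta}(\lambda)-\gamma_{T,\theta}^l(\lambda)}{\gamma_{T,\theta}(1)}
+\gamma_{T,\theta}^l(\lambda)\Big(\frac{1}{\gamma_{T,\theta}(1)}-\frac{1}{\gamma_{T,\theta}^l(1)}\Big),
$$
it suffices to show: (a) $\overline{\bbE}_\theta[\,|\gamma_{T,\theta}(1)-\gamma_{T,\theta}^l(1)|^r\mid\mathcal{Y}_T]^{1/r}\le C\Delta_l^{1/2}$; (b) the same for $\gamma_{T,\theta}(\lambda)-\gamma_{T,\theta}^l(\lambda)$; (c) that $\gamma_{T,\theta}(1)^{-1}$, $\gamma_{T,\theta}^l(1)^{-1}$ and $\gamma_{T,\theta}^l(\lambda)$ have all moments bounded uniformly in $l$, so one can apply H\"older to combine the factors. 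Taking $\bbE_\theta=\overline{\bbE}_\theta[\cdot\,Z_{T,\theta}]$ of the result and using Cauchy--Schwarz against the moments of $Z_{T,\theta}$ then yields the stated bound under $\bbP_\theta$.

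For steps (a) and (b) I would couple the two systems using the \emph{same} driving Brownian motions $W,B$ (equivalently the same $Y$-path), exactly as \eqref{eq:disc_state} is set up. Then $Z_{T,\theta}$ and $Z_{T,\theta}^l$ are exponentials of Riemann/It\^o sums that differ only through replacing $X_s$ by its piecewise-constant Euler skeleton $\widetilde X$; under (D\ref{hyp_diff:1})(i)--(ii) the exponents are differences of bounded Lipschitz functionals, and the standard strong error estimate $\overline{\bbE}_\theta[\sup_{s\le T}\|X_s-\widetilde X_{\eta_l(s)}\|_2^q]^{1/q}\le C_q\Delta_l^{1/2}$ (with $\eta_l(s)$ the grid point below $s$) gives an $L_q$ bound of order $\Delta_l^{1/2}$ on the difference of exponents. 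One then writes $e^a-e^b=(a-b)\int_0^1 e^{b+u(a-b)}du$ and uses the fact that $Z_{T,\theta}$, $Z_{T,\theta}^l$ and the interpolants have uniformly bounded $L_q$-moments (here boundedness of $h_\theta$ is what makes the stochastic exponential integrable with all moments, uniformly in $l$), then H\"older, to obtain (a). For (b) the extra ingredient is the stochastic-integral term in $\lambda$: I would show $\overline{\bbE}_\theta[\|\lambda_{T,\theta}-\lambda_{T,\theta}^l\|_2^q]^{1/q}\le C\Delta_l^{1/2}$ by It\^o isometry / BDG applied to each of the three sums defining $\lambda$, using the Lipschitz hypotheses (D\ref{hyp_diff:1})(ii)--(iv) on $h_\theta$, $\nabla_\theta h_\theta$ and $\phi_\theta=(\nabla_\theta b_\theta)^*a^{-1}\sigma$ to convert $\|X_s-\widetilde X_{\eta_l(s)}\|$-bounds into integrand bounds, plus the standard $O(\Delta_l)$ term from freezing the integrand on each subinterval. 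Then $\lambda_{T,\theta}Z_{T,\theta}-\lambda_{T,\theta}^lZ_{T,\theta}^l=(\lambda_{T,\theta}-\lambda_{T,\theta}^l)Z_{T,\theta}+\lambda_{T,\theta}^l(Z_{T,\theta}-Z_{T,\theta}^l)$, and each piece is handled by H\"older using (a), the $\lambda$-difference bound, and uniform moment bounds on $\lambda_{T,\theta}^l$ and $Z_{T,\theta}$.

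Step (c) — uniform-in-$l$ moment bounds, and in particular the \emph{lower} bound on $\gamma_{T,\theta}^l(1)$ needed to control $1/\gamma_{T,\theta}^l(1)$ in $L_r$ — is the part I expect to be the main obstacle. Upper moment bounds on $Z_{T,\theta}^l$ and $\lambda_{T,\theta}^l$ follow from boundedness of $h_\theta,\nabla_\theta h_\theta,\phi_\theta$ via Novikov-type arguments uniformly in $l$. For the reciprocal, I would use Jensen's inequality in the form $\gamma_{T,\theta}^l(1)=\overline{\bbE}_\theta[Z_{T,\theta}^l\mid\mathcal{Y}_T]\ge\exp\big(\overline{\bbE}_\theta[\log Z_{T,\theta}^l\mid\mathcal{Y}_T]\big)$; since $\log Z_{T,\theta}^l$ is an affine functional of the $Y$-increments with bounded coefficients, $\overline{\bbE}_\theta[\log Z_{T,\theta}^l\mid\mathcal{Y}_T]$ is a.s.\ finite, giving a (random but $\mathcal{Y}_T$-measurable) positive lower bound whose negative moments are controlled because $\|h_\theta\|_\infty<\infty$ makes the relevant Gaussian-type tails uniform in $l$. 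Finally, assembling (a)--(c) with two applications of H\"older and then integrating against $Z_{T,\theta}$ to pass from $\overline{\bbP}_\theta$ to $\bbP_\theta$, together with the uniform integrability of $Z_{T,\theta}$, completes the argument and produces the constant $C$ depending only on $r,T$ (and the structural constants in (D\ref{hyp_diff:1})), as claimed. I would present the weak/strong error lemmas for $Z^l$ and $\lambda^l$ as separate auxiliary statements in the appendix and then give the short combination argument above.
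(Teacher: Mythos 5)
Your proposal is correct and follows essentially the same route as the paper's proof in Appendix B: the same decomposition of the difference of ratios, conditional Jensen to reduce the reciprocal of $\overline{\mathbb{E}}_{\theta}[Z_{T,\theta}^l\,|\,\mathcal{Y}_T]$ to uniform-in-$l$ negative moments of $Z_{T,\theta}^l$, strong Euler/BDG estimates of order $\Delta_l^{1/2}$ for the differences $\lambda_{T,\theta}-\lambda_{T,\theta}^l$ and $Z_{T,\theta}-Z_{T,\theta}^l$ under the common-Brownian-motion coupling, and a final Cauchy--Schwarz against $\overline{\mathbb{E}}_{\theta}[Z_{T,\theta}^{2r}]$ to pass from $\overline{\mathbb{P}}_{\theta}$ to $\mathbb{P}_{\theta}$. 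The only cosmetic difference is that the paper bounds $1/\gamma_{T,\theta}^l(1)$ by applying conditional Jensen directly to $x\mapsto x^{-r}$ rather than via $\exp(\overline{\mathbb{E}}_{\theta}[\log Z_{T,\theta}^l\,|\,\mathcal{Y}_T])$, and it states the intermediate error bounds as unconditional $L_r$ bounds rather than conditional ones.
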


%\begin{prop}\label{prop:bias}
%Assume (D1-3). Then, for any $(T,\theta)\in[0,\infty)\times\Theta$ there exists a $C<+\infty$ such that for any $l\in \mathbb{N}_0$ we have:
%%
%%$$
%%\Bigg|\,\mathbb{E}_{\theta}\,\Bigg[\,
%%\frac{\overline{\mathbb{E}}_{\theta}[\lambda_{T,\theta}Z_{T,\theta}|\mathcal{Y}_T]}{\overline{\mathbb{E}}_{\theta}[Z_{T,\theta}|\mathcal{Y}_T]}
%%-
%%\frac{\overline{\mathbb{E}}_{\theta}[\lambda_{T,\theta}^l(\widetilde{X}_0,\widetilde{X}_{\Delta_l},\dots,\widetilde{X}_T)
%%Z_{T,\theta}^l(\widetilde{X}_0,\widetilde{X}_{\Delta_l},\dots,\widetilde{X}_T)|\mathcal{Y}_T]}{\overline{\mathbb{E}}_{\theta}[Z_{T,\theta}^l(\widetilde{X}_0,\widetilde{X}_{\Delta_l},\dots,\widetilde{X}_T)|\mathcal{Y}_T]}
%%\Bigg]\,
%%\Bigg| \leq C\Delta_l^{1/2}.
%%$$
%\vspace{-8pt}
%\begin{equation*}
%\Big|\,       
%\mathbb{E}_{\theta}\big[  
%\nabla_{\theta}\log(\gamma_{T,\theta}(1))  - \nabla_{\theta}\log(\gamma_{T,\theta}^l(1))
%  \big]\,
%\Big|  \leq C\Delta_l^{1/2}.
%\end{equation*}
%%
%\end{prop}
%
\noindent
The result is fairly standard, but we note it is not a simple application of results on SDEs in filtering (e.g.~\cite{picard,talay}) and this is reflected in the proof. The 
rate of convergence of the approximation will be very relevant for some of our subsequent discussions.

\subsection{Backward Feynman-Kac Model and Particle Smoothing}
\label{sec:disc_grad_backward}
From herein the $\widetilde{X}$ notation is dropped for simplicity. Consider the time interval $[k,k+1]$ and the $k$-th update of \eqref{eq:theta_update}.
We define the discrete-time approximation (at level $l$) as: 
\vspace{-8pt}
\begin{align*}
%u_{0,l}=(x_{\Delta_l}, x_{2\Delta_l}, \dots, x_{1})\in E_{l}:=(\mathbb{R}^{d_x})^{\Delta_{l}^{-1}}; 
%\qquad 
u_{k,l}=(x_{k+\Delta_l}, \dots, x_{k+1})\in E_{l}:=(\mathbb{R}^{d_x})^{\Delta_{l}^{-1}}, \quad k\in\mathbb{N}_0.
\end{align*}
Recall (\ref{eq:Z}). A discrete-time approximation of $p_{\theta}(\{Y_t\}_{k\le t\le k+1}|\{X_t\}_{0\le t\le T})$ is:
\vspace{-10pt}
\begin{align*}
G_{k,\theta}^l(u_{k-1,l},u_{k,l}) & :=  \prod_{p=0}^{\Delta_{l}^{-1}-1} g_{k+p,\theta}^l(x_{k+p\Delta_l}),%; \\
%\mathbf{G}_{k,\theta,\Delta_l}^l(u_{k,l}) & :=  \prod_{p=1}^{\Delta_{l}^{-1}-1} G_{k+p,\theta}^l(x_{k+p\Delta_l}).
\end{align*}
where we set $u_{-1,l}=x_{*}$, for each $l\in \mathbb{N}_0$.
We denote by $m_{\theta}^l$ the Euler transition density induced by time-discretisation \eqref{eq:disc_state}, and then write the initial distribution and Markov transition kernel 
for the discrete-time process with $k\in\mathbb{N}$ as follows:
\begin{align*}
\eta_{0,\theta}^l(du_{0,l}) & =  \prod_{p=1}^{\Delta_l^{-1}} m_{\theta}^l(x_{(p-1)\Delta_l},x_{p\Delta_l})dx_{p\Delta_l}; \\
\vspace{-5pt}
M_{\theta}^{l}(u_{k-1,l},du_{k,l}) & = \prod_{p=1}^{\Delta_l^{-1}} m_{\theta}^l(x_{k+(p-1)\Delta_l},x_{k+p\Delta_l})dx_{k+p\Delta_l}.
\end{align*}
\vspace{-12pt}
\begin{rem}
\hspace{-4pt}
The definition of  $G_{k,\theta}^l(u_{k-1,l},u_{k,l})$, $M_{\theta}^{l}(u_{k-1,l},du_{k,l})$ implies: i)  $G_{k,\theta}^l(u_{k-1,l},u_{k,l})$
involves $u_{k-1,l}$ only via its very last element, $x_{k}$;
ii) the dynamics 
of 	$u_{k,l}$ conditionally on $u_{k-1,l}$  depend only on the very last element, $x_{k}$, 
of $u_{k-1,l}$.
\end{rem}
\noindent We can now state the discrete-time filtering distribution for $k\in\mathbb{N}_0$:
\begin{equation}
\label{eq:FKmodel}
\pi_{k,\theta}^l\big(d(u_{0,l},\dots,u_{k,l})\big) :=  \frac{\big(\prod_{p=0}^{k} G_{p,\theta}^l(u_{p-1,l},u_{p,l})\big)\,\eta_{0,\theta}^l(du_{0,l})\prod_{p=1}^k M_{\theta}^{l}(u_{p-1,l},du_{p,l})}
{\int_{E_l^{k+1}}\big(\prod_{p=0}^{k} G_{p,\theta}^l(u_{p-1,l},u_{p,l})\big)\,\eta_{0,\theta}^l(du_{0,l})\prod_{p=1}^k M_{\theta}^{l}(u_{p-1,l},du_{p,l})}.
\end{equation}
That is, $\pi_{k,\theta}^l\big(d(u_{0,l},\dots,u_{k,l})\big)$ is a discrete-time approximation of the filtering distribution:
\begin{equation*}
\pi_{k,\theta}\big(d(\{X_t\}_{0\le t\le k})\big) := \mathbb{P}_{\theta}(d\{X_t\}_{0\le t\le k}|\{Y_t\}_{0\le t\le k}).
\end{equation*}
Expression (\ref{eq:FKmodel}) corresponds to a standard Feynman-Kac model  (see e.g.~\cite{FK}), thus one can approximate  the involved filtering distributions via the corresponding Monte Carlo methodology. % We denote the density of $M_{\theta}^{l}$ as $Q_{\theta}^{l}$. 

%\subsection{Backward Feynman-Kac Model and Particle Smoothing}

We develop a Monte Carlo method for the approximation of the discretised score function in (\ref{eq:grad_est_discrete_time}).
This is accomplished  by presenting a backward formula for \eqref{eq:grad_est_discrete_time}. We define for any $p\in\mathbb{N}_0$:
\begin{align*}
f_{\theta}^l(x_{p\Delta_l},x_{(p+1)\Delta_l})  := &  (\nabla_{\theta}b_{\theta}(x_{p\Delta_l}))^*a(x_{p\Delta_l})^{-1}\sigma(x_{p\Delta_l})(W_{(p+1)\Delta_l}-W_{p\Delta_l})  \\[0.2cm]  &\quad + (\nabla_{\theta}h_{\theta}(x_{p\Delta_l}))^*(Y_{(p+1)\Delta_l}-Y_{p\Delta_l})
-  (\nabla_{\theta} h_{\theta}(x_{p\Delta_l}))^*h_{\theta}(x_{p\Delta_l})\Delta_l.
\end{align*}
and let:
\vspace{-18pt}
\begin{align}
\Lambda_{k,\theta}^l(u_{k-1,l},u_{k,l}) &:= \sum_{p=0}^{\Delta_l^{-1}-1}f_{\theta}^l(x_{k+p\Delta_l},x_{k+(p+1)\Delta_l});
\nonumber \\
F_{T,\theta}^l(u_{0,l},\dots,u_{T-1,l}) &:= \sum_{k=0}^{T-1}\Lambda_{k,\theta}^l(u_{k-1,l},u_{k,l})\,\,\Big( \equiv \lambda_{T,\theta}^l(x_0,x_{\Delta_l},\dots,x_T)  \Big), \label{eq:defF}
\end{align}
%%
%\begin{align*}
%\eta_{k,\theta}^l(du_{k,l}) = \frac{\eta_{k-1,\theta}^l(\mathbf{G}_{k-1,\theta}^lM_{k,\theta}^{l}(\cdot,du_{k,l}))}{\eta_{k-1,\theta}^l(\mathbf{G}_{k-1,\theta}^l)}.
%\end{align*}
%%
for $\lambda_{T,\theta}^l(x_0,x_{\Delta_l},\dots,x_T)$ as defined in (\ref{eq:defL}) 
and used in the score function approximation (\ref{eq:grad_est_discrete_time}).
Thus:
\begin{equation}
\label{eq:backward_grad}
\nabla_{\theta}\log(\gamma_{T,\theta}^l(1)) = \int_{E_l^{T}}
F_{T,\theta}^l(u_{0,l},\dots,u_{T-1,l})\, \mathbb{Q}_{T-1,\theta}^l\big(d(u_{0,l},\dots,u_{T-1,l}) \big),
%\frac{\mathbf{G}_{T-1,\theta,\Delta_l}^l(u_{T-1,l})}{\eta_{T-1,\theta}^l(\mathbf{G}_{T-1,\theta,\Delta_l}^l)}
\end{equation}
where $\mathbb{Q}_{T-1,\theta}^l\big(d(u_{0,l},\dots,u_{T-1,l}) \big)$ 
is a time-discretisation of the smoothing law:
\begin{equation*}
\mathbb{Q}_{T-1,\theta}(d\{X_t\}_{0\le t\le T}):= 
\mathbb{P}_{\theta}(d\{X_t\}_{0\le t\le T}|\{Y_t\}_{0\le t\le T}).
\end{equation*} 
Now, by the time-reversal formula for hidden Markov models (see e.g.~\cite{backward,forward_smoothing}) one has:
%
%\begin{equation}
%\label{eq:backward_grad}
%\nabla_{\theta}\log(\gamma_{T,\theta}^l(1)) = \int_{E_l^{T}}
%F_{T,\theta}^l(u_{0,l},\dots,u_{T-1,l})\, \mathbb{Q}_{T-1,\theta}^l\big(d(u_{0,l},\dots,u_{T-1,l}) \big),
%%\frac{\mathbf{G}_{T-1,\theta,\Delta_l}^l(u_{T-1,l})}{\eta_{T-1,\theta}^l(\mathbf{G}_{T-1,\theta,\Delta_l}^l)}
%\end{equation}
%
%where we have defined the backward product formula:
%
\begin{align*}
\mathbb{Q}_{T-1,\theta}^l\big(d(u_{0,l},\dots,u_{T-1,l}) \big) := \pi_{T-1,\theta}^l(du_{T-1,l})
\prod_{k=1}^{T-1}  B_{k,\theta,\pi_{k-1,\theta}^l}^l(u_{k,l},du_{k-1,l}),
\end{align*}
for the backward Markov kernel:
\begin{align}
\label{eq:BACK}
B_{k,\theta,\pi_{k-1,\theta}^l}^l(u_{k,l}, &du_{k-1,l}) :=  
\frac{\pi_{k-1,\theta}^l(du_{k-1,l})\, 
G_{k,\theta}^l(u_{k-1,l},u_{k,l}) m_{\theta}^l(u_{k-1,l},u_{k,l})} {\pi_{k-1,\theta}^l(
G_{k,\theta}^l(\cdot,u_{k,l}) m_{\theta}^l(\cdot,u_{k,l}))},
\end{align}
under the standard notation:
$$\pi_{k-1,\theta}^l(
G_{k,\theta}^l(\cdot,u_{k,l}) m_{\theta}^l(\cdot,u_{k,l})) = \int_{E_l}\pi_{k-1,\theta}^l(du_{k-1,l}) 
G_{k,\theta}^l(u_{k-1,l},u_{k,l}) m_{\theta}^l(u_{k-1,l},u_{k,l}).
$$
%
%\begin{rem}**We are just approximating the backward representation of the smoother, its quite well known
%In brief (see \cite{backward,forward_smoothing} for details),
%the backward construct  uses
%conditional independencies, and allows for Monte Carlo approximations of 
%  $\mathbb{Q}_{T-1,\theta}^l\big(d(u_{0,l},\dots,u_{T-1,l}) \big)$
%via samples of a forward particle filter --  (\ref{eq:BACK}) involves the filtering law $\pi_{k-1,\theta}^l(du_{k-1,l})$
%multiplied by a term that adjusts the  particle weights in the backward passage, so that smoothing laws can be approximated. 
%The additive structure of $F_{T,\theta}^l(u_{0,l},\dots,u_{T-1,l})$ (see (\ref{eq:defF})) permits the development of effective Monte Carlo methods
%for approximating of (\ref{eq:backward_grad}), 
%both offline and online. %We avoid further details, and show immediately the relevant Monte Carlo algorithms in the sequel.
%\end{rem}	
%
\begin{rem}
\label{rem:cancel}	
 The model structure gives important cancellations in (\ref{eq:BACK}), so that:
$$
B_{k,\theta,\pi_{k-1,\theta}^l}^l(u_{k,l}, du_{k-1,l}) \equiv \frac{\pi_{k-1,\theta}^l(du_{k-1,l}) g_{k,\theta}^l(x_k) m_{\theta}^l(x_k,x_{k+\Delta_l})}{\int_{E_l}\pi_{k-1,\theta}^l(du_{k-1,l}) g_{k,\theta}^l(x_k) m_{\theta}^l(x_k,x_{k+\Delta_l})}.
$$
\end{rem}
The objective now is to approximate the  right hand side of \eqref{eq:backward_grad} using particle approximations. Our online particle approximation of the gradient of the log-likelihood in \eqref{eq:backward_grad}, for a given $l\in\mathbb{N}_0$ is presented in \autoref{alg:online_disc}. Our estimates are given in \eqref{eq:grad_est}
and \eqref{eq:grad_est_1} in \autoref{alg:online_disc}. The approach is the method introduced in \cite{backward,forward_smoothing}.

%We also introduce the probability measure for $k\in\mathbb{N}_0$:
%$$
%\eta_{k,\theta}^l\big(d(u_{0,l},\dots,u_{k,l})\big) :=  \frac{\Big(\prod_{p=0}^{k-1} \mathbf{G}_{p,\theta}^l(u_{p-1,l},u_{p,l})\Big)\eta_{0,\theta}^l(du_{0,l})\prod_{p=1}^k M_{p,\theta}^{l}(u_{p-1,l},du_{p,l})}
%{\int_{E_l^{k+1}}\Big(\prod_{p=0}^{k-1} \mathbf{G}_{p,\theta}^l(u_{p-1,l},u_{p,l})\Big)\eta_{0,\theta}^l(du_{0,l})\prod_{p=1}^k M_{p,\theta}^{l}(u_{p-1,l},du_{p,l})}.
%$$
%We note that for any $(k,\varphi)\in\mathbb{N}_0\times\mathcal{B}_b(E_l)$
%$$
%\int_{E_l}\varphi(u_{k,l})\pi_{k,\theta}^l(du_{k,l}) = \frac{\int_{E_l^2} \mathbf{G}_{k,\theta}^l(u_{k-1,l},u_{k,l})\varphi(u_{k,l}) \eta_{k,\theta}^l\big(d(u_{k-1,l},u_{k,l})\big) }{
%\int_{E_l^2} \mathbf{G}_{k,\theta}^l(u_{k-1,l},u_{k,l}) \eta_{k,\theta}^l\big(d(u_{k-1,l},u_{k,l})\big)}.
% $$
%As a result, one can also write
%$$
%B_{p,\theta,\pi_{p-1,\theta}^l}^l(u_{p,l}, du_{p-1,l}) = 
%\frac{\eta_{p-1,\theta}^l\big(d(u_{p-1,l},u_{p-1,l})\big) \mathbf{G}_{p,\theta}^l(u_{p-1,l},u_{p,l}) G_{n,\theta}(x_n) m_{l,\theta}(x_n,x_{n+\Delta_l})}{\int_{E_l^2}\eta_{p-1,\theta}^l\big(d(u_{p-1,l},u_{p-1,l})\big) \mathbf{G}_{p,\theta}^l(u_{p-1,l},u_{p,l})G_{n,\theta}^l(x_n) m_{l,\theta}(x_n,x_{n+\Delta_l})}.
%$$

\begin{algorithm}[h]
\caption{Online Score Function Estimation for a given $l\in\mathbb{N}_0$.} \label{alg:online_disc}
\begin{enumerate}
\item{For $i\in\{1,\dots,N\}$, sample $u_{0,l}^i$ i.i.d.~from $\eta_{0,\theta}^l(\cdot)$. The estimate of
$\nabla_{\theta}\log(\gamma_{1,\theta}^l(1))$
%, denoted $\widehat{\nabla_{\theta}\log(\gamma_{1,\theta}^l(1))}$
 is:
\begin{equation}
\label{eq:grad_est}
\widehat{\nabla_{\theta}\log(\gamma_{1,\theta}^l(1))} := \frac{\sum_{i=1}^N G_{0,\theta}^l(x_{*},u_{0,l}^i)\Lambda_{0,\theta}^l(x_{*},u_{0,l}^i)}{\sum_{i=1}^N G_{0,\theta}^l(x_{*},u_{0,l}^i)}.
\end{equation}
Set $k=1$, and for $i\in\{1,\dots,N\}$, $\check{u}_{-1,l}^i=x_{*}$.}
\item{For $i\in\{1,\dots,N\}$ sample $\check{u}_{k-1,l}^i$ from:
$$
\sum_{i=1}^N\frac{G_{k-1,\theta}^l(\check{u}_{k-2,l}^i,u_{k-1,l}^i)}{\sum_{j=1}^NG_{k-1,\theta}^l(\check{u}_{k-2,l}^j,u_{k-1,l}^j)}\delta_{\{u_{k-1,l}^i\}}(\cdot).
$$
If $k=1$, for $i\in\{1,\dots,N\}$, set $F_{k-1,\theta}^{l,N}(\check{u}_{0,l}^i)=\Lambda_{0,\theta}^l(x_{*},\check{u}_{0,l}^i)$.}
\item{For $i\in\{1,\dots,N\}$, sample $u_{k,l}^i$ from $M_{\theta}^l(\check{u}_{k-1,l}^i,\cdot)$. For $i\in\{1,\dots,N\}$, compute:
\begin{equation}
\label{eq:f_rec_def}
F_{k,\theta}^{l,N}(u_{k,l}^i) = 
\frac{\sum_{j=1}^N %\mathbf{G}_{k,\theta}^l(\check{u}_{k-1,l}^j,u_{k,l}^i)
g_{k,\theta}^l(\check{x}_k^j)m_{\theta}^l(\check{x}_k^j,x_{k+\Delta_l}^i)\{F_{k-1,\theta}^{l,N}(\check{u}_{k-1,l}^j)
+ \Lambda_{k,\theta}^l(\check{u}_{k-1,l}^j,u_{k,l}^i)
\}}
{\sum_{j=1}^N %\mathbf{G}_{k,\theta}^l(\check{u}_{k-1,l}^j,u_{k,l}^i)
g_{k,\theta}^l(\check{x}_k^j)m_{\theta}^l(\check{x}_k^j,x_{k+\Delta_l}^i)
}.
\end{equation}
The estimate of
$\nabla_{\theta}\log(\gamma_{k+1,\theta}^l(1))$
%, denoted $\widehat{\nabla_{\theta}\log(\gamma_{k+1,\theta}^l(1))}$
 is:
\begin{equation}
\label{eq:grad_est_1}
\widehat{\nabla_{\theta}\log(\gamma_{k+1,\theta}^l(1))} := 
\frac{\sum_{i=1}^N G_{k,\theta}^l(\check{u}_{k-1,l}^i,u_{k,l}^i)F_{k,\theta}^{l,N}(u_{k,l}^i)}{\sum_{i=1}^N G_{k,\theta}^l(\check{u}_{k-1,l}^i,u_{k,l}^i)}.
\end{equation}
Set $k=k+1$ and return to the start of 2..}
\end{enumerate}
\end{algorithm}

\subsection{Discussion of \autoref{alg:online_disc}}\label{sec:first_alg_disc}

There are several remarks  worth making, before proceeding. Firstly, the cost of this algorithm
 per unit time is $\mathcal{O}(N\Delta_l^{-1}+N^2)$. In detail, 
 the cost of the particle filter is $\mathcal{O}(N\Delta_l^{-1})$.
 The  cost of calculating 
 $F_{k,\theta}^{l,N}(u_{k,l}^i)$, $1\le i \le N$, in (\ref{eq:f_rec_def}), 
 is $\mathcal{O}(N^2)$; $\Delta_l$ is not involved here due to cancellations -- see \autoref{rem:cancel}. 
 The cost of \eqref{eq:grad_est_1} is $\mathcal{O}(N)$, 
 given the particle filter has already been executed.
  %
% The cost of computing the estimate \eqref{eq:grad_est_1} does not
% grow with $l$ given the basic particle filter algorithm already performed. This is because the function $\Lambda_{k,\theta}^l(u_{k-1,l},u_{k,l})$
% depends on $u_{k-1,l}$ only through $x_k$, so the $\mathcal{O}(N^2)$ calculation is needed only $\mathcal{O}(1)$ times in terms of $l$. 
% 
 There are several implications of this remark. Based upon the results in \cite{backward} and \autoref{thm:bias}, in a sequel work we prove, under appropriate assumptions, we will have the following MSE, for $(k,N)\in\mathbb{N}^2$:
\begin{equation}\label{eq:mse_bound}
\mathbb{E}_{\theta}\Big[\,\Big\|\,\widehat{\nabla_{\theta}\log(\gamma_{k,\theta}^l(1))}-
%\frac{\overline{\mathbb{E}}_{\theta}[\lambda_{k,\theta}Z_{k,\theta}|\mathcal{Y}_k]}{\overline{\mathbb{E}}_{\theta}[Z_{k,\theta}|\mathcal{Y}_k]}
\nabla_{\theta}\log(\gamma_{k,\theta}(1))\, 
\Big\|_2^2\,\Big] \leq C\Big(\frac{1}{N}+\Delta_l\Big),
\end{equation}
where $C$ is a constant that does not depend on $N$ or $l$. To achieve an MSE of $\mathcal{O}(\epsilon^2)$ for some
$\epsilon>0$ given, one sets $l$ so that $\Delta_l=\epsilon^2$ (i.e.~$l=\mathcal{O}(|\log(\epsilon)|)$) and $N=\mathcal{O}(\epsilon^{-2})$. The cost per unit time of doing this is then $\mathcal{O}(\epsilon^{-4})$. 
We note that to choose $l$ as specified, one has to have access to an appropriately finely observed data set and this is assumed throughout.
Typically, one could use a multilevel Monte Carlo method, as in \cite{high_freq_ml}, to reduce the cost to achieve an MSE of $\mathcal{O}(\epsilon^2)$. However,
in this case as the $\mathcal{O}(N^2)$ cost dominates and does not depend on $l$, one can easily check that such a variance reduction method \emph{will not improve} our particle method. 
To understand this, one can prove a bound on the MSE, for instance of the type conjectured later in this article \eqref{eq:ml_var_thm}, and then try to minimize the cost, by selecting the appropriate number of samples on each level 
to obtain a given MSE. This latter problem leads to a constrained minimization problem that can be solved using Lagrange multipliers (as in e.g.~\cite{lan_mu}), but one can show that this yields that the order of the cost
to achieve an MSE of $\mathcal{O}(\epsilon^2)$ is still $\mathcal{O}(\epsilon^{-4})$.

Secondly, it is important to note that the method in \cite{paris} can reduce the cost of online score estimation to $\mathcal{O}(N\Delta_l^{-1})$ per unit time. However, in order to do so, one requires that $m_{\theta}^l(x,x')$ is uniformly lower-bounded in $(x,x')$, which does not typically occur for Euler-discretized diffusion densities. As a result, we only use the approach shown in \autoref{alg:online_disc}. Note that \cite{glo}, in a different but related context, considers using unbiased and non-negative estimates of the transition density in the approach in \cite{paris}, but such estimates are not always available.

Thirdly and rather importantly, there is a potential issue related to the construction of the algorithm. 
We have started with a continuous-time formula, discretized it and applied what are essentially discrete-time methods for smoothing of additive functionals. 
A serious caveat is that the algorithm is not well-defined as $l\rightarrow\infty$, which is what we mean by saying it has \emph{no (Wiener) path-space formulation}.
The source of the issue is related to using approximations of the transition density $m_{\theta}^l(\check{x}_k^j,x_{k+\Delta_l}^i)$ 
in \eqref{eq:f_rec_def}, which can degenerate when $l$ is high. This will result in increasing Monte Carlo variance and computational cost 
and may mean that $C$ in \eqref{eq:mse_bound} explodes exponentially in $l$. We refer the reader to \cite[Figure 1.1]{shouto} for a numerical example.
% In many practical examples, 
% this issue has not been observed in numerical simulations, but remains a concern. 
This issue has manifested itself in MCMC schemes for inferring fully observed SDEs (see e.g.~\cite{omiros_DA}), but 
% In the case of \autoref{alg:online_disc}, 
% In particular, a potential problem which may manifest itself is 
% the expression $m_{\theta}^l(\check{x}_k^j,x_{k+\Delta_l}^i)$ in \eqref{eq:f_rec_def} which  
in the context of particle smoothing and \autoref{alg:online_disc} there are additional considerations.
%Typically 
%one requires good mixing properties for the transition densities so that $C$ is not dependent on the number of steps $\Delta_l^{-1}$ (\cite{backward}).
%In addition, t **unclear, better not to say anything**
The resampling operation introduces \emph{discontinuities}. Often such terminology refers to the lack of continuity of the transition density: 
$$\theta \rightarrow \widehat{\nabla_{\theta}\log(\gamma_{k,\theta}^l(1))},$$ 
but here we are interested  in the behavior of $m_{\theta}^l(\check{x}_k^j,x_{k+\Delta_l}^i)$ when we combine points $\check{x}_k^j$ and $x_{k+\Delta_l}^i$ 
that are intrinsically not obtained in a continuous way as $\Delta_l$ diminishes.
This issue has not received attention in earlier numerical studies, but remains a concern. 
As a result, we now consider defining an algorithm that is robust to the size of the time-discretization mesh and hence has a path-space formulation.

\section{Path-Space Feynman-Kac Formulation}\label{sec:method2}

\subsection{Data Augmentation using Bridges}

%We assume that the diffusion
%in \eqref{eq:state} satisfies the conditions in \cite{schauer}.

We begin this section with a review of the method in \cite{schauer,vd_meulen_guided_mcmc}. For simplicity we consider the case $t\in[0,1]$ and let
$\mathbf{X}:=\{X_{t}\}_{t\in[0,1]}$, and $\mathbf{W}:=\{W_{t}\}_{t\in[0,1]}$. Let also $p_\theta(x,t;x',1)$ denote the unknown transition density from time $t$ to $1$ associated to \eqref{eq:state} 
and let also $p_\theta(x,x'):=p_\theta(x,0;x',1)$.
 %($x_*=x$ in \eqref{eq:state}). 
Suppose one could sample from $p_\theta$ to obtain $(x, x')\in \mathbb{R}^{2d_x}$. Then one can interpolate these points by using a bridge process which has a drift given by $b_{\theta}(x)+a(x)\nabla_x\log{p}_\theta(x,t;x',1)$, as we will explain below. Let $\overline{\mathbb{P}}_{\theta,x,x'}$ denote the law of the solution of the SDE \eqref{eq:state}, on $[0,1]$, started at $x_*=x$ and conditioned to hit $x'$ at time $1$.

As $p_\theta$ is intractable in general, we consider a user-specified auxiliary process $\{\tilde{X}_t\}_{t\in[0,1]}$ following:
\begin{align}
\label{eq:aux_SDE_tilde}
d \tilde X_{t} = \tilde b_{\theta}(t,\tilde X_{t})dt + \tilde \sigma(t,\tilde X_{t})dW_{t}, \quad t\in[0,1],\quad~\tilde{X}_0 =x, 
\end{align}
where for each parameter value $\theta\in\Theta$, $\tilde b_{\theta}:[0,1]\times\mathbb{R}^{d_x}\rightarrow\mathbb{R}^{d_x}$ and $\tilde \sigma:\mathbb{R}^{d_x}\rightarrow\mathbb{R}^{d_x\times d_x}$ is
such that $\tilde a (1,x'):= \tilde \sigma(1,x') \tilde \sigma(1,x')^* \equiv a(x')$. Most importantly, (\ref{eq:aux_SDE_tilde}) is chosen so that its transition density %$\tilde{p}_\theta$ 
is available. 
To avoid confusion -- as the specification of process (\ref{eq:aux_SDE_tilde}) can involve parameter $\theta$ and a given ending position $x'$ -- we note that the transition density of  (\ref{eq:aux_SDE_tilde}) from time $t$ to time~$1$ corresponds to a mapping  $z\to\tilde{p}_{\theta,x'}(x, t; z, 1)$. We also use the notation $\tilde{p}_{\theta,x'}(x, z):=\tilde{p}_{\theta,x'}(x, 0; z, 1)$.
One possible choice is to use an Ornstein-Uhlenbeck process (e.g.~obtained using linearizations or variational inference with \eqref{eq:state} \cite[Section 1.3]{schauer}); see also \cite[Section 2.2]{schauer} for technical conditions on $\tilde b_{\theta}, \tilde a, \tilde p_{\theta,x'}$. 
The main purpose of $\{\tilde X_t\}_{t\in[0,1]}$ is to construct another process $\{X_t^\circ\}_{t\in[0,1]}$ conditioned to hit a given $x'$ at $t=1$. The latter will form an importance proposal for $\{X_t\}_{t\in[0,1]}$. Let:
\begin{align}
\label{eq:aux_SDE}
d X^\circ_{t} = b_{\theta}^{\circ}(t,X^{\circ}_{t}; x')dt + \sigma(X^{\circ}_{t})dW_{t}, \quad t\in[0,1],\quad~X^{\circ}_0 =x, 
\end{align}
where:
$$b_{\theta}^{\circ}(t,x;x')=b_{\theta}(x)+a(x)\nabla_x\log\tilde{p}_{\theta,x'}(x,t;x',1),$$ 
and denote by
 $\mathbb{P}^\circ_{\theta,x,x'}$ the probability law of the solution of (\ref{eq:aux_SDE}). %and similarly  $\tilde{\mathbb{P}}_{\theta,x,x'}$ for \eqref{eq:aux_SDE_tilde}. 
The SDE in (\ref{eq:aux_SDE}) gives rise to a function: 
\begin{equation}
\mathbf{W}\rightarrow C_{\theta}(x,\mathbf{W},x'),
\label{eq:map}
\end{equation} 
mapping the driving Wiener noise $\mathbf{W}$ to the solution of (\ref{eq:aux_SDE}), so we have effectively reparameterized the problem from $\mathbf{X}$ to $(\mathbf{W},x')$.

Now, following \cite{schauer}, the two measures $\overline{\mathbb{P}}_{\theta,x,x'}$ and $\mathbb{P}^\circ_{\theta,x,x'}$ are absolutely continuous w.r.t.~each other, with Radon-Nikodym derivative: 
\begin{equation}
\frac{d\overline{\mathbb{P}}_{\theta,x,x'} }{d \mathbb{P}^\circ_{\theta,x,x'} }(\mathbf{X})=
\exp\Big\{  \int_{0}^{1} L_{\theta}(t,X_t)dt\Big\} \times \frac{ \tilde{p}_{\theta,x'}(x,x')}{{p}_{\theta}(x,x')},
\label{eq:density}
\end{equation}
where: 
\begin{align*}
L_{\theta}&(t,x):=\big(b_\theta(x)- \tilde{b}_\theta(t,x)\big)^*\, \nabla_x\log\tilde{p}_{\theta,x'}(x,t;x',1)\\
&\!\!\!\!\!\!\!-\tfrac{1}{2}\textrm{Tr}\,\Big\{\,\big[ a(x)-\tilde{a}(t,x)\big] \big[ -\nabla_x^2\log\tilde{p}_{\theta,x'}(x,t;x',1)-\nabla_x\log\tilde{p}_{\theta,x'}(x,t;x',1)\nabla_x\log\tilde{p}_{\theta,x'}(x,t;x',1)^* \big]\,\Big\} 
\end{align*}
with $\textrm{Tr}(\cdot)$ denoting the trace of a squared matrix.
%corresponding to $G_{\theta}(\cdot)$ in \cite{schauer} and as specified in \cite{schauer},  after Eq.~(2.3) of that paper; 
%the quantity $p^{\circ}_{\theta}(x,x')$ is also defined in \cite{schauer}, 
%whereas ${p}_{\theta}(x,x')$ is the, typically intractable, transition density of the signal process 
%in (\ref{eq:state}).  
Note that, in the case when $\sigma=\sigma(x)$ is not a constant function,  then, typically, $x'\rightarrow \tilde{p}_{\theta,x'}(x,x')$ will not integrate to $1$ and will give rise to a non-trivial distribution to sample from.
As the complete algorithm will require being able to sample from the transition density, we rewrite:
\begin{equation}
\frac{d\overline{\mathbb{P}}_{\theta,x,x'} }{d\widetilde{\mathbb{P}}_{\theta,x,x'} }(\mathbf{X})=
\exp\Big\{  \int_{0}^{1} L_{\theta}(t,X_t)dt\Big\} \times \frac{\tilde{p}_{\theta,x'}(x,x')}{{p}_{\theta}(x,x')\hat{{p}}_{\theta}(x,x')}\times 
\hat{{p}}_{\theta}(x,x'),
\label{eq:density2}
\end{equation}
where an arbitrary, tractable and easy to sample density $\hat{{p}}_{\theta}(x,x')$ is used to sample $x'$.

\subsection{Estimation of Score in Continuous-Time}
%$\mathbf{W}_k=\{W_{k,t}\}_{t\in[0,1]}$ correspond to $T$ independent standard Brownian motions.
We return to the expression of the score function in  \eqref{eq:gll} and use the alternative change of measure described above. Consider the processes:
\begin{align*}
\mathbf{X}_{k}:=\{X_{t}\}_{t\in[k,k+1]}, \quad \mathbf{Y}_k:=\{Y_{t}\}_{t\in[k,k+1]},
%\mathbf{W}_k:=\{W_{s}\}_{s\in[k,k+1]}, 
\quad k\in\mathbb{N}_0.
\end{align*}
We introduce  the following notation:
\begin{align*}
\Psi_{\theta}(\mathbf{X}_k) & =  \int_{k}^{k+1} L_{\theta}(t,X_t)dt;\\[0.3cm]
J_{k,\theta}(\mathbf{X}_k,\mathbf{Y}_k) & =  \int_{k}^{k+1} h_{\theta}(X_{t})^{*}dY_{t} - \tfrac{1}{2}
\int_{k}^{k+1}h_{\theta}(X_t)^{*}h_{\theta}(X_t)dt;\\[0.3cm]
\Lambda_{k,\theta}(\mathbf{X}_k,\mathbf{Y}_k) & = 
\int_{k}^{k+1}(\nabla_{\theta}b_{\theta}(X_t))^*a(X_t)^{-1}(dX_t-b_{\theta}(X_t)dt) \\ &\qquad\qquad + \int_{k}^{k+1}(\nabla_{\theta}h_{\theta}(X_t))^* dY_{t} - \int_{k}^{k+1}(\nabla_{\theta}h_{\theta}(X_t))^*h_{\theta}(X_t)dt;\\[0.3cm]
\Phi_{k,\theta}(\mathbf{X}_k,\mathbf{Y}_k) & =  J_{k,\theta}(\mathbf{X}_k,\mathbf{Y}_k) + \Psi_{\theta}(\mathbf{X}_k) + \log\frac{\tilde{p}_{\theta,x_{k+1}}(x_k,x_{k+1})}{\hat{{p}}_{\theta}(x_{k},x_{k+1})}.
\end{align*}
Note all the integrands can be computed point wise.
\begin{rem}
The Wiener process in (\ref{eq:map}) is defined on the time interval $[0,1]$, 
thus so is the transform $C_{\theta}(x,\mathbf{W},x')$. In the derivations below, 
one needs to calculate $J_{k,\theta}(C_{\theta}(x_{k},\mathbf{W}_{k},x_{k+1})
,\mathbf{Y}_k)$ and $\Lambda_{k,\theta}(C_{\theta}(x_{k},\mathbf{W}_k,x_{k+1}),\mathbf{Y}_k)$,
for $\mathbf{W}_k$'s that correspond to samples from the Wiener measure on $[0,1]$.
With some abuse of notation, it is to be understood that the path $C_{\theta}(x_{k},\mathbf{W}_k,x_{k+1})$ is `shifted' from $[0,1]$ to $[k,k+1]$, so all quantities below agree with the notation introduced above. Also, the calculation of $\Psi_{\theta}(C_{\theta}(x_{k},\mathbf{W}_{k},x_{k+1}))$ will be required, but this should create no confusion.
\end{rem}
Under these definitions, for any $T\in\mathbb{N}$ the score function in \eqref{eq:gll} can be rewritten as:
\begin{align*}
\nabla_{\theta}\log(\gamma_{T,\theta}(1))  \equiv 
\frac{ 
\overline{\mathbb{E}}_{\theta}\Big[
\,\big(\sum_{k=0}^{T-1}\Lambda_{k,\theta}(\mathbf{X}_k,\mathbf{Y}_k)\,\big)
\exp\big\{
\sum_{k=0}^{T-1} J_{k,\theta}(\mathbf{X}_k,\mathbf{Y}_k)
\big\}
\,\big|\,\mathcal{Y}_T\,\Big]
}
{
\overline{\mathbb{E}}_{\theta}\,\Big[\,
\exp\big\{
\sum_{k=0}^{T-1} J_{k,\theta}(\mathbf{X}_k,\mathbf{Y}_k)
\big\}
\,\big|\,\mathcal{Y}_T\,\Big]
}.
\end{align*}
%
%We define the function $F_{\theta,k}(\mathbf{W}_k,x_k,x_{k+1})$ as the path solution of the diffusion bridge:
%%
%\begin{align*}
%dX_t = b_{\theta}^{\circ}(X_t; x_{k+1})dt + \sigma(X_t)dW_t, \quad t\in[k,k+1],\quad~X_k =x_k.
%\end{align*}
%%
%where $b_{\theta}^{\circ}$ is described in \cite{schauer}.
Making use of the transform (\ref{eq:map}) and the density expression in \eqref{eq:density}, 
%with the method of \cite{shouto}, 
we can equivalently write:
\begin{align}
&\nabla_{\theta}\log(\gamma_{T,\theta}(1)) =
\label{eq:path}
\\[0.4cm] 
&\!\!\!\!\!\!\!\!\!\!= \frac{\widetilde{\mathbb{E}}_{\theta}\,
\Big[\,
\big(\sum_{k=0}^{T-1}\Lambda_{k,\theta}(C_{\theta}(x_{k},\mathbf{W}_{k},x_{k+1})
,\mathbf{Y}_k)\big)
\exp\big\{
\sum_{k=0}^{T-1}\Phi_{k,\theta}(C_{\theta}(x_{k},\mathbf{W}_k,x_{k+1}),\mathbf{Y}_k)
\big\}
\,\big|\,\mathcal{Y}_T\,\Big]}
{
\widetilde{\mathbb{E}}_{\theta}\,\Big[\,
\exp\big\{
\sum_{k=0}^{T-1}\Phi_{k,\theta}(C_{\theta}(x_{k},\mathbf{W}_k,x_{k+1}),\mathbf{Y}_k)
\big\}
\,\big|\,\mathcal{Y}_T\,\Big]
} ,\nonumber
\end{align}
where, the expectation $\widetilde{\mathbb{E}}_{\theta}\,[\,\cdot\,|\,\mathcal{Y}_T\,]$ 
is considered under the probability measure:
\begin{align}
\widetilde{\mathbb{P}}_{\theta}\,\big(\,d(\mathbf{W}_{0}, x_1,\ldots,\mathbf{W}_{T-1}, x_{T})\,\big) := 
\bigotimes_{k=0}^{T-1} \big[\, \mathbb{W}(d\mathbf{W}_{k}) \otimes \hat{p}_{\theta}(x_{k},x_{k+1})dx_{k+1}\,\big], \label{eq:Markov}
\end{align}
independently of $\mathcal{Y}_T$; here, $\mathbb{W}$ is the standard Wiener measure on $[0,1]$ and $x_0=x_{*}$.
\begin{rem}
The approach that has been adopted here can also be used if $\sigma$ depends upon $\theta$. Assuming the formula is
well-defined, one would have a score function with an expression of the type:
$$
\frac{\widetilde{\mathbb{E}}_{\theta}\,
\Big[\,
\big(\sum_{k=0}^{T-1} \Xi_{k,\theta}(C_{\theta}(x_{k},\mathbf{W}_{k},x_{k+1}) \big)
\exp\big\{
\sum_{k=0}^{T-1}\Phi_{k,\theta}(C_{\theta}(x_{k},\mathbf{W}_k,x_{k+1}),\mathbf{Y}_k)
\big\}
\,\big|\,\mathcal{Y}_T\,\Big]}
{
\widetilde{\mathbb{E}}_{\theta}\,\Big[\,
\exp\big\{
\sum_{k=0}^{T-1}\Phi_{k,\theta}(C_{\theta}(x_{k},\mathbf{W}_k,x_{k+1}),\mathbf{Y}_k)
\big\}
\,\big|\,\mathcal{Y}_T\,\Big]
},
$$
where:
$$
\Xi_{k,\theta}(C_{\theta}(x_{k},\mathbf{W}_{k},x_{k+1}) = \{\nabla_{\theta}\Phi_{k,\theta}(C_{\theta}(x_{k},\mathbf{W}_{k},x_{k+1})
,\mathbf{Y}_k)\}+ \nabla_{\theta}\log\{\hat{p}_{\theta}(x_{k},x_{k+1})\},
$$
and with an appropriate modification of the approach to allow $\sigma$ to depend on $\theta$.
To keep consistency with the ideas in Section \ref{sec:method1} we do not consider the formula from herein, but remark that extension of the forthcoming methodology to this case is straightforward.
\end{rem}
\begin{rem}
Recent advances in \cite{bierkens2020} have extended the construct of the auxiliary bridge process -- developed via (\ref{eq:aux_SDE_tilde}), (\ref{eq:aux_SDE}) herein -- to the setting of \emph{hypoelliptic} SDEs. Though we do not pursue this direction here for the purpose of easing the exposition,  we remark that, given these new developments,  one can now, in principle, obtain  score function estimates -- thus, also carry out parameter inference -- in the hypoelliptic regime along the same steps we follow in the current work. 
\end{rem}

%and $\hat{p}_{\theta}(x_{k-1},x_k)$ the transition density of the auxiliary 
%process in \cite{schauer}.

The expression in \eqref{eq:path}, together with the (trivially) Markovian dynamics 
for the process in \eqref{eq:Markov} allow one to construct a backward Feynman-Kac type formula
as in \eqref{eq:backward_grad}. To better connect the approach here and that in Section \ref{sec:method1}, define
$u_k=(\mathbf{W}_{k},x_{k+1})$ for $k\in\mathbb{N}_0$, $u_{-1}=x_{*}$ and:
\begin{align*}
\Lambda_{k,\theta}^C(u_{k-1},u_{k}) & =  \Lambda_{k,\theta}(C_{\theta}(x_{k},\mathbf{W}_{k},x_{k+1}),\mathbf{Y}_k);\\
\Phi_{k,\theta}^C(u_{k-1},u_{k}) & =  \Phi_{k,\theta}(C_{\theta}(x_{k},\mathbf{W}_k,x_{k+1}),\mathbf{Y}_k);\\
F_{T,\theta}(u_0,\dots,u_{T-1})   &=  \sum_{k=0}^{T-1}\Lambda_{k,\theta}^C(u_{k-1},u_{k}).
\end{align*}
Superscript $C$ is motivated by the well-posedness of the formula in \emph{continuous-time} path-space.
Set:
$$
\pi_{k,\theta}\big(\,d(u_0,\dots,u_{k})\,\big) := 
\frac{\Big(\prod_{p=0}^k \exp\{\Phi_{p,\theta}^C(u_{p-1},u_{p})\}\Big)
\widetilde{\mathbb{P}}_{\theta}\,\big(\,d(u_{0},\ldots,u_k)\,\big)}{
\int_{E^{k+1}}
\Big(\prod_{p=0}^k \exp\{\Phi_{p,\theta}^C(u_{p-1},u_{p})\}\Big)
\widetilde{\mathbb{P}}_{\theta}\,\big(\,d(u_{0},\ldots,u_k)\,\big)
},
$$
with $E=C([0,1],\mathbb{R}^{d_x})$.
Then, we have the representation:
\begin{equation}\label{eq:backward_grad_new}
\nabla_{\theta}\log(\gamma_{T,\theta}(1)) = \int_{E^T} F_{T,\theta}(u_0,\dots,u_{T-1})\, \mathbb{Q}_{T-1,\theta}\big(\,d(u_0,\dots,u_{T-1})\,\big),
\end{equation}
where:
$$
\mathbb{Q}_{T-1,\theta}\big(\,d(u_0,\dots,u_{T-1})\,\big) = \pi_{T-1,\theta}(du_{T-1}) \prod_{k=1}^{T-1} B_{k,\theta,\pi_{k-1,\theta}}(u_k,du_{k-1}),
$$
and:
$$
B_{k,\theta,\pi_{k-1,\theta}}(u_k,du_{k-1}) := \frac{\pi_{k-1,\theta}(du_{k-1}) \exp\{\Phi_{k,\theta}^C(u_{k-1},u_{k})\}\hat{p}_{\theta}(x_{k},x_{k+1})}{\pi_{k-1,\theta}(\exp\{\Phi_{k,\theta}^C(\cdot,u_{k})\}\hat{p}_{\theta}(\cdot,x_{k+1}))
}.
$$
We remark that formula \eqref{eq:backward_grad_new} is a type of backward Feynman-Kac formula in continuous-time, which to our knowledge is new.

%correspond to an expectation of
%the additive functional: 
%%
%\begin{align*}
%\sum_{k=0}^{T-1}\Lambda_{k,\theta}(C_{\theta}(x_{k},\mathbf{W}_{k},x_{k+1})
%,\mathbf{Y}_k),
%\end{align*}
%%
% over a Feynman-Kac system.
As in the case of \autoref{alg:online_disc}, an effective Monte Carlo approximation of such a smoothing expectation \eqref{eq:backward_grad_new} is given in \autoref{alg:Forward-PF-infinite}. The estimates of the score function
are given in equations  \eqref{eq:grad_est_cont}-\eqref{eq:grad_est_cont_1} in \autoref{alg:Forward-PF-infinite}.

\begin{algorithm}[h]
\caption{Online Score Function Estimation on
Path-Space}\label{alg:Forward-PF-infinite} 
\begin{enumerate}
\item{
For $i\in\{1,\dots,N\}$, sample  $u_0^i$ i.i.d.~from $\mathbb{W}(\cdot)\otimes \hat{p}_{\theta}(x_*,\cdot)$. The estimate of
$\nabla_{\theta}\log(\gamma_{1,\theta}(1))$
%, denoted $\widehat{\nabla_{\theta}\log(\gamma_{1,\theta}(1))}$
 is:
\begin{equation}\label{eq:grad_est_cont}
\widehat{\nabla_{\theta}\log(\gamma_{1,\theta}(1))} := \frac{\sum_{i=1}^N \exp\{\Phi_{0,\theta}^C(x_{*},u_{0}^i)\}\Lambda_{0,\theta}^C(x_{*},u_{0}^i)}{\sum_{i=1}^N \exp\{\Phi_{0,\theta}^C(x_{*},u_{0}^i)\}}.
\end{equation}
Set $k=1$ and for $i\in\{1,\dots,N\}$, $\check{u}_{-1}^i=x_{*}$.}
\item{For $i\in\{1,\dots,N\}$ sample $\check{u}_{k-1}^i$ from: 
$$
\sum_{i=1}^N\frac{\exp\{\Phi_{k-1,\theta}^C(\check{u}_{k-2}^i,u_{k-1}^i)\}}{\sum_{j=1}^N
\exp\{\Phi_{k-1,\theta}^C(\check{u}_{k-2}^j,u_{k-1}^j)\}}\delta_{\{u_{k-1}^i\}}(\cdot).
$$
If $k=1$, for $i\in\{1,\dots,N\}$, set $\widetilde{F}_{k-1,\theta}^{N}(\check{u}_{0}^i)=\Lambda_{0,\theta}^C(x_{*},\check{u}_{0}^i)$.
}
\item{For $i\in\{1,\dots,N\}$, sample $u_{k}^i$ from $\mathbb{W}(\cdot)\otimes\hat{p}_{\theta}^l(\check{x}_{k}^i,\cdot)$. For $i\in\{1,\dots,N\}$, compute: 
$$
\widetilde{F}_{k,\theta}^{N}(u_{k}^i) = 
\frac{\sum_{j=1}^N \exp\{\Phi_{k,\theta}^C(\check{u}_{k-1}^j,u_{k}^i)\}
\hat{p}_{\theta}(\check{x}_k^j,x_{k+1}^i)\{\widetilde{F}_{k-1,\theta}^{N}(\check{u}_{k-1}^j)
+ \Lambda_{k,\theta}^C(\check{u}_{k-1}^j,u_{k}^i)
\}}
{\sum_{j=1}^N 
\exp\{\Phi_{k,\theta}^C(\check{u}_{k-1}^j,u_{k}^i)\}
\hat{p}_{\theta}(\check{x}_k^j,x_{k+1}^i)
}.
$$
The estimate of
$\nabla_{\theta}\log(\gamma_{k+1,\theta}(1))$
%, denoted $\widehat{\nabla_{\theta}\log(\gamma_{k+1,\theta}(1))}$ 
is:
\begin{equation}\label{eq:grad_est_cont_1}
\widehat{\nabla_{\theta}\log(\gamma_{k+1,\theta}(1))} := 
\frac{\sum_{i=1}^N 
\exp\{\Phi_{k,\theta}^C(\check{u}_{k-1}^i,u_{k}^i)\}\widetilde{F}_{k,\theta}^{N}(u_{k}^i)}{\sum_{i=1}^N 
\exp\{\Phi_{k,\theta}^C(\check{u}_{k-1}^i,u_{k}^i)\}}.
\end{equation}
Set $k=k+1$ and return to the start of 2.}
\end{enumerate}
\end{algorithm}

\subsection{Time-Discretization}

Whilst conceptually important, path-space valued \autoref{alg:Forward-PF-infinite} can seldom be implemented directly in practice; unbiased methods e.g.~\cite{beskos2} may be possible,
but would be cumbersome.
We develop a time-discretization procedure, in a similar manner to that considered in Section \ref{sec:disc_grad_backward}.

We will discretize on the uniform grid with increment $\Delta_l=2^{-l}$. Let $k\in\mathbb{N}_0$ and define:
$$
u_{k,l} = (z_{k+\Delta_l},z_{k+2\Delta_l},\dots,z_{k+1-\Delta_l},x_{k+1})\in(\mathbb{R}^{d_x})^{\Delta_l^{-1}}=E_l,
$$
where $z_{k+\Delta_l},z_{k+2\Delta_l},\dots,z_{k+1-\Delta_l}$ will represent increments of Brownian motion and $u_{-1,l}=x_{*}$. Define the Markov kernel on $E_l$, for $k\in\mathbb{N}$:
$$
\widetilde{M}_{\theta}^l(u_{k-1,l},du_{k,l}) = \Big(\prod_{s=1}^{\Delta_l^{-1}-1}\phi_l(z_{k+s\Delta_l})dz_{k+s\Delta_l}\Big)
\hat{p}_{\theta}(x_{k},x_{k+1})dx_{k+1},
$$
where $\phi_l(z_{k+s\Delta_l})$ is the density associated to the $\mathcal{N}_{d_x}(0,\Delta_l I_{d_x})$ distribution. 
We denote the density of $\widetilde{M}_{\theta}^{l}$ as $\widetilde{Q}_{\theta}^{l}$.
Set: 
$$
\widetilde{\eta}_{0,\theta}^l(du_{0,l}) = \Big(\prod_{s=1}^{\Delta_l^{-1}-1}\phi_l(z_{s\Delta_l})dz_{s\Delta_l}\Big)
\hat{p}_{\theta}(x_{*},x_{1})dx_{1}.
$$
Now set,
for $(k,s)\in\{0,1,\dots,T-1\}\times\{0,1,\dots,\Delta_l^{-1}-2\}$:
\begin{align}
X_{(s+1)\Delta_l+k} = X_{s\Delta_l+k} + b_{\theta}^{\circ}(s\Delta_l, X_{s\Delta_l+k};x_{k+1})\Delta_l + \sigma(X_{s\Delta_l+k})Z_{(s+1)\Delta_l+k}. \label{eq:Euler_diff_bridge}
\end{align}
Define for $k\in\mathbb{N}_0$:
\begin{align*}
\Psi_{\theta}^l(u_{k-1,l},u_{k,l}) & =  \sum_{p=0}^{\Delta_l^{-1}-1} L_{\theta}(p\Delta_l, x_{k+p\Delta_l})\Delta_l;\\
J_{k,\theta}^l(u_{k-1,l},u_{k,l}) & =  \sum_{p=0}^{\Delta_l^{-1}-1} h_{\theta}(x_{k+p\Delta_l})^{*}(Y_{k+(s+1)\Delta_l}-Y_{k+s\Delta_l}) - \frac{1}{2}\sum_{p=0}^{\Delta_l^{-1}-1}h_{\theta}(x_{k+p\Delta_l})^{*}h_{\theta}(x_{k+p\Delta_l})\Delta_l;\\[0.2cm]
\Phi_{k,\theta}^l(u_{k-1,l},u_{k,l}) & =  J_{k,\theta}^l(u_{k-1,l},u_{k,l})+\Psi_{\theta}^l(u_{k-1,l},u_{k,l})+ \log\frac{\tilde{p}_{\theta,x_{k+1}}(x_k,x_{k+1})}{\hat{{p}}_{\theta}(x_{k},x_{k+1})};\\[0.3cm]
\widetilde{G}_{k,\theta}^l(u_{k-1,l},u_{k,l}) & =  \exp\{\Phi_{k,\theta}^l(u_{k-1,l},u_{k,l})\}.
\end{align*}
Now, for $k\in\mathbb{N}_0$:
\begin{align*}
\widetilde{\Lambda}_{k,\theta}^l(&u_{k-1,l},u_{k,l})  = 
\sum_{p=0}^{\Delta_l^{-1}-1}(\nabla_{\theta}b_{\theta}(x_{k+p\Delta_l}))^*a(x_{k+p\Delta_l})^{-1} (x_{k+(p+1)\Delta_l}-x_{k+p\Delta_l}-b_\theta(x_{k+p\Delta_l}) \Delta_l)  \\[-0.2cm]& 
+\sum_{p=0}^{\Delta_l^{-1}-1}(\nabla_{\theta}h_{\theta}(x_{k+p\Delta_l}))^* (Y_{k+(p+1)\Delta_l}-Y_{k+p\Delta_l}) - \sum_{p=0}^{\Delta_l^{-1}-1}(\nabla_{\theta}h_{\theta}(x_{k+p\Delta_l}))^*h_{\theta}(x_{k+p\Delta_l})\Delta_l.
\end{align*}
Set: 
$$
\widetilde{F}_{T,\theta}^l(u_{0,l},\dots,u_{T-1,l}) = \sum_{k=0}^{T-1}\widetilde{\Lambda}_{k,\theta}^l(u_{k-1,l},u_{k,l}).
$$
Writing expectations w.r.t.~$\widetilde{\eta}_{\theta}^l(du_{0,l})\prod_{k=1}^{T-1}\widetilde{M}_{\theta}^l(u_{k-1,l},du_{k,l})$ as 
$\widetilde{\mathbb{E}}^l_{\theta}[\,\cdot\,|\mathcal{Y}_T]$, 
our discretized approximation of $\nabla_{\theta}\log(\gamma_{T,\theta}(1))$ is:
$$
\nabla_{\theta}\log(\widetilde{\gamma}_{T,\theta}^l(1)) :=  \frac{\widetilde{\mathbb{E}}^l_{\theta}\,\big[\,
\widetilde{F}_{T,\theta}^l(U_{0,l},\dots,U_{T-1,l})
\prod_{k=0}^{T-1} \widetilde{G}_{k,\theta}^l(U_{k-1,l},U_{k,l})
\,\big|\,\mathcal{Y}_T\,\big]}
{\widetilde{\mathbb{E}}^l_{\theta}\,\big[\,
\prod_{k=0}^{T-1} \widetilde{G}_{k,\theta}^l(U_{k-1,l},U_{k,l})
\,\big|\,\mathcal{Y}_T\,\big]}.
$$
We note that, whilst terms $\nabla_{\theta}\log(\widetilde{\gamma}_{T,\theta}^l(1))$, $\nabla_{\theta}\log(\gamma_{T,\theta}^l(1))$ should both converge to $\nabla_{\theta}\log(\gamma_{T,\theta}(1))$, as $l\rightarrow\infty$, they will in general be different for any fixed $l$. 

One can also easily develop a discretized time reversal formula such as \eqref{eq:backward_grad} which will converge precisely to \eqref{eq:backward_grad_new} as $l\rightarrow\infty$. 
Define, for $k\in\mathbb{N}_0$:
$$
\widetilde{\pi}_{k,\theta}^l\big(d(u_{0,l},\dots,u_{k,l})\big) :=  \frac{\big(\prod_{p=0}^{k} \widetilde{G}_{p,\theta}^l(u_{p-1,l},u_{p,l})\big)\,\widetilde{\eta}_{0,\theta}^l(du_{0,l})\prod_{p=1}^k \widetilde{M}_{\theta}^{l}(u_{p-1,l},du_{p,l})}
{\int_{E_l^{k+1}}\Big(\prod_{p=0}^{k} \widetilde{G}_{p,\theta}^l(u_{p-1,l},u_{p,l})\Big)\widetilde{\eta}_{0,\theta}^l(du_{0,l})\prod_{p=1}^k \widetilde{M}_{\theta}^{l}(u_{p-1,l},du_{p,l})}.
$$
Then, we have that:
\begin{equation}
\label{eq:backward_grad_new_disc}
\nabla_{\theta}\log(\widetilde{\gamma}_{T,\theta}^l(1))
 = \int_{E_l^{T}}
\widetilde{F}_{T,\theta}^l(u_{0,l},\dots,u_{T-1,l})\, \widetilde{\mathbb{Q}}_{T-1,\theta}^l\big(d(u_{0,l},\dots,u_{T-1,l}) \big),
\end{equation}
where we have defined:
\begin{align*}
\widetilde{\mathbb{Q}}_{T-1,\theta}^l\big(d(u_{0,l},\dots,u_{T-1,l}) \big) := \widetilde{\pi}_{T-1,\theta}^l(du_{T-1,l})
\prod_{k=1}^{T-1}  \widetilde{B}_{k,\theta,\pi_{k-1,\theta}^l}^l(u_{k,l},du_{k-1,l})
\end{align*}
and:
\begin{align*}
\widetilde{B}_{k,\theta,\widetilde{\pi}_{k-1,\theta}^l}^l(u_{k,l}, &du_{k-1,l})  :=  \frac{\widetilde{\pi}_{k-1,\theta}^l(du_{k-1,l})
\widetilde{G}_{k,\theta}^l(u_{k-1,l},u_{k,l}) \widetilde{Q}_{\theta}^l(u_{k-1,l},u_{k,l})} 
{\widetilde{\pi}_{k-1,\theta}^l(
\widetilde{G}_{k,\theta}^l(\cdot,u_{k,l}) \widetilde{Q}_{\theta}^l(\cdot,u_{k,l}))}.
\end{align*}
We remark that, due to the structure of the model:
$$
\widetilde{B}_{k,\theta,\widetilde{\pi}_{k-1,\theta}^l}^l(u_{k,l},du_{k-1,l}) = \frac{\widetilde{\pi}_{k-1,\theta}^l(du_{k-1,l})\widetilde{G}_{k,\theta}^l(u_{k-1,l},u_{k,l})\hat{p}_{\theta}(x_k,x_{k+1})}
{\int_{E_l} \widetilde{\pi}_{k-1,\theta}^l(du_{k-1,l})\widetilde{G}_{k,\theta}^l(u_{k-1,l},u_{k,l})\hat{p}_{\theta}(x_k,x_{k+1})}.
$$
\begin{rem}
It is important to note that -- in contrast to \autoref{rem:cancel} -- there is no cancellation of terms of $\widetilde{G}_{k,\theta}^l(u_{k-1,l},u_{k,l})$ in the numerator and denominator of this backward kernel. This is precisely due to  recursion \eqref{eq:Euler_diff_bridge} which leads to a path-dependence of the future coordinates of the discretized bridge on the terminal position $x_{k+1}$. 
\end{rem}

\subsection{Particle Approximation}

Our online particle approximation of the gradient of the log-likelihood in \eqref{eq:backward_grad_new_disc}, for a 
given $l\in\mathbb{N}_0$ is presented in \autoref{alg:online_disc_new}. Our estimates are given in \eqref{eq:grad_est_cont_disc}
and \eqref{eq:grad_est_cont_disc_1} in \autoref{alg:online_disc_new}.

\autoref{alg:online_disc_new} is simply the time-discretization of the procedure presented in \autoref{alg:Forward-PF-infinite}.
A number of remarks are again of interest. Firstly, the cost of the algorithm per unit time
is now $\mathcal{O}(N^2\Delta_l^{-1})$. The increase in computational cost over \autoref{alg:online_disc} is the fact that
when computing $\widetilde{\Lambda}_{k,\theta}^l(\check{u}_{k-1,l}^j,u_{k,l}^i)$ in \eqref{eq:key_point}, one must solve the recursion \eqref{eq:Euler_diff_bridge} for each $(i,j)\in\{1,\dots,N\}^2$, which has a cost $\mathcal{O}(\Delta_l^{-1})$ and it is this cost that dominates. Secondly, following the discussion in Section \ref{sec:first_alg_disc}, we have proved in a companion work that, under appropriate assumptions, the MSE for $(k,N)\in\mathbb{N}^2$:
\begin{equation}\label{eq:mse_bound_new}
\mathbb{E}_{\theta}\Big[\,\Big\|\,\widehat{\nabla_{\theta}\log(\widetilde{\gamma}_{k,\theta}^l(1))}-
%\frac{\overline{\mathbb{E}}_{\theta}[\lambda_{k,\theta}Z_{k,\theta}|\mathcal{Y}_k]}{\overline{\mathbb{E}}_{\theta}[Z_{k,\theta}|\mathcal{Y}_k]}
\nabla_{\theta}\log(\gamma_{k,\theta}(1))\,
\Big\|_2^2\,\Big] \leq C\Big(\frac{1}{N}+\Delta_l\Big),
\end{equation}
for constant $C$ that does not depend on $N$, $l$. To achieve an MSE of $\mathcal{O}(\epsilon^2)$ for some
$\epsilon>0$ given, one sets $l=\mathcal{O}(|\log(\epsilon)|)$ and $N=\mathcal{O}(\epsilon^{-2})$. The cost per unit time of doing this, is then $\mathcal{O}(\epsilon^{-6})$. This is significantly worse than the approach in \autoref{alg:online_disc},
but we remark that when discussing the cost of \autoref{alg:online_disc}, in the bound \eqref{eq:mse_bound}, we have assumed that the constant $C$ does not depend upon $l$. However, in a sequel work we will show that under assumptions that this afore-mentioned $C$ explodes exponentially in $l$. Conversely, $C$ in \eqref{eq:mse_bound_new}
can be proved to be independent of $l$, precisely due to the path-space development we have adopted. We remark, however, that one can use an MLMC method to reduce this cost of $\mathcal{O}(\epsilon^{-6})$ per unit time of \autoref{alg:online_disc_new} and this algorithm is presented in the next section.

\begin{algorithm}[h]
\caption{Modified Online Score Function Estimation for a given $l\in\mathbb{N}_0$.}\label{alg:online_disc_new}
\begin{enumerate}
\item{
For $i\in\{1,\dots,N\}$, sample  $u_{0,l}^i$ i.i.d.~from $\widetilde{\eta}_{0,\theta}^l(\cdot)$. The estimate of
$\nabla_{\theta}\log(\widetilde{\gamma}_{1,\theta}^l(1))$ 
%denoted $\widehat{\nabla_{\theta}\log(\widetilde{\gamma}_{1,\theta}^l(1)})$ 
is:
\begin{equation}\label{eq:grad_est_cont_disc}
\widehat{\nabla_{\theta}\log(\widetilde{\gamma}_{1,\theta}^l(1))}
:= \frac{\sum_{i=1}^N \widetilde{G}_{0,\theta}^l(x_{*},u_{0,l}^i)\widetilde{\Lambda}_{0,\theta}^l(x_{*},u_{0,l}^i)}{\sum_{i=1}^N 
\widetilde{G}_{0,\theta}^l(x_{*},u_{0,l}^i)}.
\end{equation}
Set $k=1$ and for $i\in\{1,\dots,N\}$, $\check{u}_{-1,l}^i=x_{*}$.}
\item{For $i\in\{1,\dots,N\}$ sample $\check{u}_{k-1,l}^i$ from:
$$
\sum_{i=1}^N\frac{\widetilde{G}_{k-1,\theta}^l(\check{u}_{k-2,l}^i,u_{k-1,l}^i)}{\sum_{j=1}^N
\widetilde{G}_{k-1,\theta}^l(\check{u}_{k-2,l}^j,u_{k-1,l}^j)}\delta_{\{u_{k-1,l}^i\}}(\cdot).
$$
If $k=1$, for $i\in\{1,\dots,N\}$, set $\widetilde{F}_{k-1,\theta}^{l,N}(\check{u}_{0,l}^i)=\widetilde{\Lambda}_{0,\theta}^l(x_{*},\check{u}_{0,l}^i)$.
}
\item{For $i\in\{1,\dots,N\}$, sample $u_{k,l}^i$ from $\widetilde{M}_{\theta}^l(\check{u}_{k-1,l}^i,\cdot)$. For $i\in\{1,\dots,N\}$, compute:
\begin{equation}
\label{eq:key_point}
\widetilde{F}_{k,\theta}^{l,N}(u_{k,l}^i) = 
\frac{\sum_{j=1}^N 
\widetilde{G}_{k,\theta}^l(\check{u}_{k-1,l}^j,u_{k,l}^i)
\hat{p}_{\theta}(\check{x}_k^j,x_{k+1}^i)\{\widetilde{F}_{k-1,\theta}^{l,N}(\check{u}_{k-1,l}^j)
+ \widetilde{\Lambda}_{k,\theta}^l(\check{u}_{k-1,l}^j,u_{k,l}^i)
\}}
{\sum_{j=1}^N 
\widetilde{G}_{k,\theta}^l(\check{u}_{k-1,l}^j,u_{k,l}^i)
\hat{p}_{\theta}(\check{x}_k^j,x_{k+1}^i)
}.
\end{equation}
The estimate of
$\nabla_{\theta}\log(\widetilde{\gamma}_{k+1,\theta}^l(1))$
%, denoted $\widehat{\nabla_{\theta}\log(\widetilde{\gamma}_{k+1,\theta}^l(1)})$
 is:
\begin{equation}\label{eq:grad_est_cont_disc_1}
\widehat{\nabla_{\theta}\log(\widetilde{\gamma}_{k+1,\theta}^l(1))} = 
\frac{\sum_{i=1}^N 
\widetilde{G}_{k,\theta}^l(\check{u}_{k-1,l}^i,u_{k,l}^i)\widetilde{F}_{k,\theta}^{l,N}(u_{k,l}^i)}{\sum_{i=1}^N 
\widetilde{G}_{k,\theta}^l(\check{u}_{k-1,l}^i,u_{k,l}^i)}.
\end{equation}
Set $k=k+1$ and return to the start of 2..}
\end{enumerate}
\end{algorithm}

\subsection{Multilevel Particle Filter}\label{sec:ml}

We now present a new multilevel particle filter along with online estimation of the score-function. We fix $l\in\mathbb{N}$
for now and for $(k,s)\in\mathbb{N}_0\times\{l,l-1\}$ define:
$$
u_{k,s} = (z_{k+\Delta_l,s},z_{k+2\Delta_l,s},\dots,z_{k+1-\Delta_l,s},x_{k+1,s})\in(\mathbb{R}^{d_x})^{\Delta_s^{-1}}=E_s.
$$

\begin{algorithm}[!ht]
	\caption{Coupled Online Score Function Estimation for a given $l\in\mathbb{N}$.}\label{alg:online_disc_ml}
	\begin{enumerate}
		\item{
			\begin{itemize}
				\item{For $i\in\{1,\dots,N\}$, sample  $u_{0,l}^i$ i.i.d.~from $\widetilde{\eta}_{0,\theta}^l(\cdot)$.}
				\item{For $(i,p)\in\{1,\dots,N\}\times\{1,\dots,\Delta_{l-1}^{-1}-1\}$, set $z_{p\Delta_{l-1},l-1}^i=z_{p\Delta_{l-1},l}^i+z_{p\Delta_{l-1}-\Delta_l,l}^i$ and $x_{1,l-1}^i=x_{1,l}^i$.}
			\end{itemize}
			The estimate of
			$\nabla_{\theta}\log(\widetilde{\gamma}_{1,\theta}^l(1))-\nabla_{\theta}\log(\widetilde{\gamma}_{1,\theta}^{l-1}(1))$
			%, denoted $\widehat{\nabla_{\theta}\log(\widetilde{\gamma}_{1,\theta}^l(1)})-\widehat{\nabla_{\theta}\log(\widetilde{\gamma}_{1,\theta}^{l-1}(1)})$
			is:
			$$
			\widehat{\nabla_{\theta}\log(\widetilde{\gamma}_{1,\theta}^l(1))} - 
			\widehat{\nabla_{\theta}\log(\widetilde{\gamma}_{1,\theta}^{l-1}(1))}: \vspace{-0.3cm}
			=
			$$
			\begin{equation}\label{eq:grad_est_cont_disc_diff}
			\frac{\sum_{i=1}^N \widetilde{G}_{0,\theta}^l(x_{*},u_{0,l}^i)\widetilde{\Lambda}_{0,\theta}^l(x_{*},u_{0,l}^i)}{\sum_{i=1}^N 
				\widetilde{G}_{0,\theta}^l(x_{*},u_{0,l}^i)}-
			\frac{\sum_{i=1}^N \widetilde{G}_{0,\theta}^{l-1}(x_{*},u_{0,l-1}^i)\widetilde{\Lambda}_{0,\theta}^{l-1}(x_{*},u_{0,l-1}^i)}{\sum_{i=1}^N 
				\widetilde{G}_{0,\theta}^{l-1}(x_{*},u_{0,l-1}^i)}.
			\end{equation}
			Set $k=1$ and for $i\in\{1,\dots,N\}$, $\check{u}_{-1,l}^i=\check{u}_{-1,l-1}^i=x_{*}$.}
		\item{For $i\in\{1,\dots,N\}$, sample $(\alpha_l(i),\alpha_{l-1}(i))\in\{1,\dots,N\}^2$ from a coupling of:
			$$
			\sum_{i=1}^N\frac{\widetilde{G}_{k-1,\theta}^l(\check{u}_{k-2,l}^{\alpha_l(i)},u_{k-1,l}^{\alpha_l(i)})}{\sum_{j=1}^N
				\widetilde{G}_{k-1,\theta}^l(\check{u}_{k-2,l}^j,u_{k-1,l}^j)} \quad\quad\textrm{and}\quad\quad
			\sum_{i=1}^N\frac{\widetilde{G}_{k-1,\theta}^{l-1}(\check{u}_{k-2,l-1}^{\alpha_{l-1}(i)},u_{k-1,l}^{\alpha_{l-1}(i)})}{\sum_{j=1}^N
				\widetilde{G}_{k-1,\theta}^{l-1}(\check{u}_{k-2,l-1}^j,u_{k-1,l-1}^j)},
			$$
			and set $(\check{u}_{k-1,l}^i,\check{u}_{k-1,l-1}^i)=(u_{k-1,l}^{\alpha_l(i)},u_{k-1,l-1}^{\alpha_{l-1}(i)})$.
			If $k=1$, for $i\in\{1,\dots,N\}$, $s\in\{l,l-1\}$, set $\widetilde{F}_{k-1,\theta}^{s,N}(\check{u}_{0,s}^i)=\widetilde{\Lambda}_{0,\theta}^s(x_{*},\check{u}_{0,s}^i)$.
		}
		\item{
			\begin{itemize}
				\item{
					For $i\in\{1,\dots,N\}$, sample $(x_{k+1,l}^i,x_{k+1,l-1}^i)$ from a coupling of $\hat{p}_{\theta}(\check{x}_{k,l}^i,\cdot)$ and $\hat{p}_{\theta}(\check{x}_{k,l-1}^i,\cdot)$. }
				\item{For $i\in\{1,\dots,N\}$ sample $z_{k+\Delta_l,l}^i,\dots,z_{k+1-\Delta_l,l}^i$ i.i.d.~from $\prod_{s=1}^{\Delta_l^{-1}-1}\phi_l(\,\cdot\,)$.}
				\item{ For $(i,p)\in\{1,\dots,N\}\times\{1,\dots,\Delta_{l-1}^{-1}-1\}$ set $z_{k+p\Delta_{l-1},l-1}^i=z_{k+p\Delta_{l-1},l}^i+z_{k+p\Delta_{l-1}-\Delta_l,l}^i$.}
				\item{
					For $(i,s)\in\{1,\dots,N\}\times\{l.l-1\}$, compute: 
				$$\hspace{-18pt}	
					\widetilde{F}_{k,\theta}^{s,N}(u_{k,l}^i) = 
					\frac{\sum_{j=1}^N 
						\widetilde{G}_{k,\theta}^s(\check{u}_{k-1,s}^j,u_{k,s}^i)
						\hat{p}_{\theta}(\check{x}_{k,s}^j,x_{k+1,s}^i)\{\widetilde{F}_{k-1,\theta}^{s,N}(\check{u}_{k-1,s}^j)
						+ \widetilde{\Lambda}_{k,\theta}^s(\check{u}_{k-1,s}^j,u_{k,s}^i)
						\}}
					{\sum_{j=1}^N 
						\widetilde{G}_{k,\theta}^s(\check{u}_{k-1,s}^j,u_{k,s}^i)
						\hat{p}_{\theta}(\check{x}_{k,s}^j,x_{k+1,s}^i)
					}.
					$$}
			\end{itemize}
			The estimate of
			$\nabla_{\theta}\log(\widetilde{\gamma}_{k+1,\theta}^l(1))-\nabla_{\theta}\log(\widetilde{\gamma}_{k+1,\theta}^{l-1}(1))$
			%, denoted $\widehat{\nabla_{\theta}\log(\widetilde{\gamma}_{k+1,\theta}^l(1)})-\widehat{\nabla_{\theta}\log(\widetilde{\gamma}_{k+1,\theta}^{l-1}(1)})$ 
			is:
			$$
			\widehat{\nabla_{\theta}\log(\widetilde{\gamma}_{k+1,\theta}^l(1))} 
			- \widehat{\nabla_{\theta}\log(\widetilde{\gamma}_{k+1,\theta}^{l-1}(1))}
			:= \vspace{-0.3cm}
			$$
			\begin{equation}\label{eq:grad_est_cont_disc_diff1}
			\frac{\sum_{i=1}^N 
				\widetilde{G}_{k,\theta}^l(\check{u}_{k-1,l}^i,u_{k,l}^i)\widetilde{F}_{k,\theta}^{l,N}(u_{k,l}^i)}{\sum_{i=1}^N 
				\widetilde{G}_{k,\theta}^l(\check{u}_{k-1,l}^i,u_{k,l}^i)} - \frac{\sum_{i=1}^N 
				\widetilde{G}_{k,\theta}^{l-1}(\check{u}_{k-1,l-1}^i,u_{k,l-1}^i)\widetilde{F}_{k,\theta}^{l-1,N}(u_{k,{l-1}}^i)}{\sum_{i=1}^N 
				\widetilde{G}_{k,\theta}^{l-1}(\check{u}_{k-1,l-1}^i,u_{k,l-1}^i)}.
			\end{equation}
			Set $k=k+1$ and return to the start of 2..}
	\end{enumerate}
\end{algorithm}

We give the approach in \autoref{alg:online_disc_ml}. Before explaining how one can use \autoref{alg:online_disc_ml} to provide online estimates of the score function, several remarks are required to continue. 
The first is related to the
couplings mentioned in \autoref{alg:online_disc_ml} point 2.~and point 3.~bullet 1. The coupling in point 2, requires
a way to resample the indices of the particles so that they have the correct marginals. This topic has been investigated considerably in the literature, see  e.g.~\cite{cpf_clt,coup_pf}, and techniques that have been adopted include sampling maximal coupling, e.g.~\cite{mlpf}, or using the $\mathbb{L}_2$-Wasserstein optimal coupling \cite{wasser}; in general the latter is found to be better in terms of variance reduction, but can only be implemented when $d_x=1$. We rely upon the maximal coupling in this paper, which has a cost of $\mathcal{O}(N)$ per unit time. For point 3.~bullet 1, one again has a considerable degree of flexibility. In this article we sample the maximal coupling which can be achieved at a cost which is at most $\mathcal{O}(N)$ cost per-unit time using the algorithm of \cite{thor}. The second main remark of interest is that the basic filter that is sampled in 
\autoref{alg:online_disc_ml} is an entirely new coupled particle filter for diffusions (i.e.~different to \cite{mlpf,high_freq_ml}). The utility of the approach relative to \cite{high_freq_ml} is of great interest, in the context of filtering.

Set $(l_{*},L)\in\mathbb{N}^2$ with $l_{*}<L$. The idea is to run \autoref{alg:online_disc_ml}, independently, for $l\in\{l_{*},\dots,L\}$ each with $N_l$ particles and, independently, \autoref{alg:online_disc_new} for $l=l_{*}-1$ with $N_{l_*-1}$ particles. We then consider
the estimate, for $k\in\mathbb{N}$
$$
\widehat{\nabla_{\theta}\log(\widetilde{\gamma}_{k,\theta}^L(1))}_{ML} := \sum_{l=l_{*}}^L \Big\{\widehat{\nabla_{\theta}\log(\widetilde{\gamma}_{k,\theta}^l(1))} 
- \widehat{\nabla_{\theta}\log(\widetilde{\gamma}_{k,\theta}^{l-1}(1))}
\Big\} + \widehat{\nabla_{\theta}\log(\widetilde{\gamma}_{k,\theta}^{l_*-1}(1))},
$$
where the summands on the right hand side are defined in \eqref{eq:grad_est_cont_disc_diff} and \eqref{eq:grad_est_cont_disc_diff1}
and the last term on the right hand side is as either \eqref{eq:grad_est_cont_disc} or \eqref{eq:grad_est_cont_disc_1} (depending on $k$). Now, we show in an on-going companion work, under appropriate assumptions, one has the following result for $(k,N_{l_*-1:L})\in\mathbb{N}^{L-l_*+2}$:
\begin{equation}\label{eq:ml_var_thm}
\mathbb{E}_{\theta}\Big[\,\Big\|\,
\widehat{\nabla_{\theta}\log(\widetilde{\gamma}_{k,\theta}^L(1))}_{ML}
-
%\frac{\overline{\mathbb{E}}_{\theta}[\lambda_{k,\theta}Z_{k,\theta}|\mathcal{Y}_k]}{\overline{\mathbb{E}}_{\theta}[Z_{k,\theta}|\mathcal{Y}_k]}
\nabla_{\theta}\log(\gamma_{k,\theta}(1))\,
\Big\|_2^2\,\Big] \leq C\Big(\sum_{l=l_*-1}^L\frac{\Delta_l^{\beta}}{N_l}+\Delta_L\Big),
\end{equation}
for constant  $C$  that does not depend on $N$, $l$; also, $\beta=1$ if $\sigma$ is a constant function, else $\beta=1/2$. Choose: i) $L$ so that $\Delta_L=\mathcal{O}( \epsilon^2)$, for $\epsilon>0$ given; ii) if $\beta=1$, $N_l=\mathcal{O}(\epsilon^{-2}\Delta_l^{1/2+\rho})$ for some $0<\rho<1/2$. These selections yield an MSE of $\mathcal{O}(\epsilon^2)$ for a cost of $\mathcal{O}(\epsilon^{-4})$. If $\beta=1/2$, one can set $N_l=\mathcal{O}(\epsilon^{-2}\Delta_l^{1/2+\rho}\Delta_L^{-\rho})$
for some $\rho>0$. This will yield an MSE of $\mathcal{O}(\epsilon^2)$ for a cost of $\mathcal{O}(\epsilon^{-4(1+\rho)})$. Such results are 
at least as good as the method in Section \ref{sec:first_alg_disc}, assuming that latter approach does not collapse with $l$.

We remark that it is possible to produce an almost-surely unbiased estimator of the score function, when $\theta$ is the true parameter, using a combination of the multilevel method that has been developed here and the approach in \cite{ub_filt}. This is left for future work.

\section{Numerical Results}\label{sec:numerics}

In this section, we consider four models to investigate the various properties of our algorithms. The score function is estimated using both Algorithms \ref{alg:online_disc} and \ref{alg:online_disc_new} for a fixed $\theta$. We will show, as expected, that they are equivalent for a large number of particles $N$ and a high level of descritization $l$. We then compare the cost of \autoref{alg:online_disc_new} and its multilevel version \autoref{alg:online_disc_ml}. As an application of our methods, we use Algorithms \ref{alg:online_disc} and \ref{alg:online_disc_ml} for parameter estimation via stochastic gradient. The code is written in MATLAB and it can be downloaded from \url{https://github.com/ruzayqat/score_based_par_est}.

We remark that we will not use \autoref{alg:online_disc_new} for parameter estimation because it is `slow' compared to the algorithms as illustrated in the previous section and in \autoref{fig:cost_old_vs_new}. In \autoref{fig:cost_old_vs_new} we consider the Model 1, as described in the next section, with $T=40$, $l=10$, $\theta = (-0.4, -0.5)$, $\kappa = 2$ and $x_*=0.2$. \autoref{fig:cost_old_vs_new} provides a comparison between the cost of Algorithms \ref{alg:online_disc} and \ref{alg:online_disc_new}, which is the average machine time measured in seconds needed per each simulation, versus the number of particles $N$. As predicted by our theoretical conjectures, we see that  the cost of \autoref{alg:online_disc} is significantly lower than that of \autoref{alg:online_disc_new}.

\begin{figure}[h]
\centering
 \includegraphics[height=0.3\textwidth]{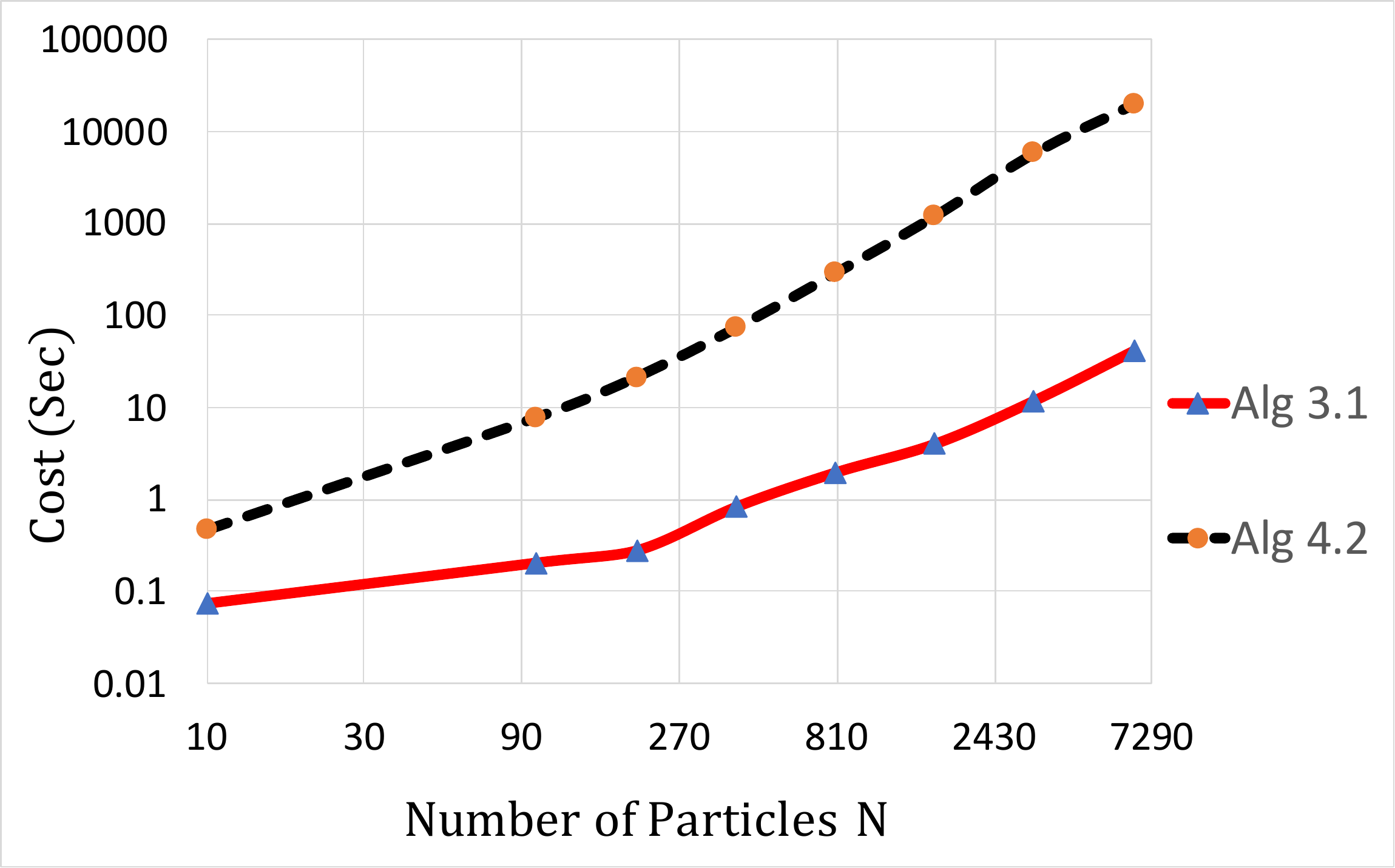} 
 \caption{Comparison between the cost of  Algorithms \ref{alg:online_disc} and \ref{alg:online_disc_new} per each simulation versus the number of particles $N$. We run both algorithms on Model 1.}
 \label{fig:cost_old_vs_new}
\end{figure}

\subsection{Models}

In the following, parameters $(\kappa,\sigma)$ are fixed.

\noindent \textbf{Model 1:} Let $d_x=d_y=1$, $d_\theta=2$ and consider the following linear SDEs:
\begin{align*}
dX_t &= \theta_1 X_t dt + \sigma dW_t;\\
dY_t &= \theta_2 (\kappa - X_t) dt + dB_t.
\end{align*}

\noindent \textbf{Model 2:} Let $d_x=d_y=1$, $d_\theta=3$ and consider a nonlinear diffusion process along with a linear diffusion process of observations:
\begin{align*}
dX_t &= \big( \tfrac{\theta_1}{X_t}+\theta_2 X_t \big) dt + \sigma dW_t;\\
dY_t &= \theta_3 (\kappa - X_t) dt + dB_t.
\end{align*}

\noindent \textbf{Model 3:} Let $d_x=d_y=1$, $d_\theta=3$ and consider a nonlinear signal  along with a nonlinear diffusion process of observations. The first SDE is a Cox-Ingersoll-Ross process after an 1-1 transform. Thus:
\begin{align*}
dX_t &= \frac{1}{2}  \big( \tfrac{\theta_1 \theta_2 - \sigma^2}{X_t}-\theta_2 X_t \big) dt + \sigma dW_t;\\
dY_t &= \theta_3 (\kappa - X_t^2) dt + dB_t.
\end{align*}
This model has a solution if and only if $\theta_1 \theta_2 > 2\sigma^2$.

\noindent \textbf{Model 4:} 
Let $d_x=d_y=1$, $d_\theta=3$ and consider a type of Black-Scholes model with a stochastic volatility:
\begin{align*}
dX_t &= \theta_1(\theta_2 - X_t) dt + \sigma(X_t) dW_t;\\
dY_t &=(\theta_3 - \frac{1}{2} X_t^2) dt + dB_t.
\end{align*}
where $\sigma(X_t)=\beta/\sqrt{X_t^2+1}$ and $\beta$ is fixed. In the hidden process, $\theta_1$ and $\theta_2$ are the speed and level of mean reversion and $\theta_3$ is
a mean type level for the observation process,
We will apply our methodology (see \autoref{fig:par_est_model4} later on) on the log mid-price of Tesla Inc. stock in 2018. The dataset shown in \autoref{fig:Tesla_stock} represents the log mid-price at every second during a trading day for a total of 250 trading days.

\begin{figure}[h]
\centering
 \includegraphics[height=0.4\textwidth]{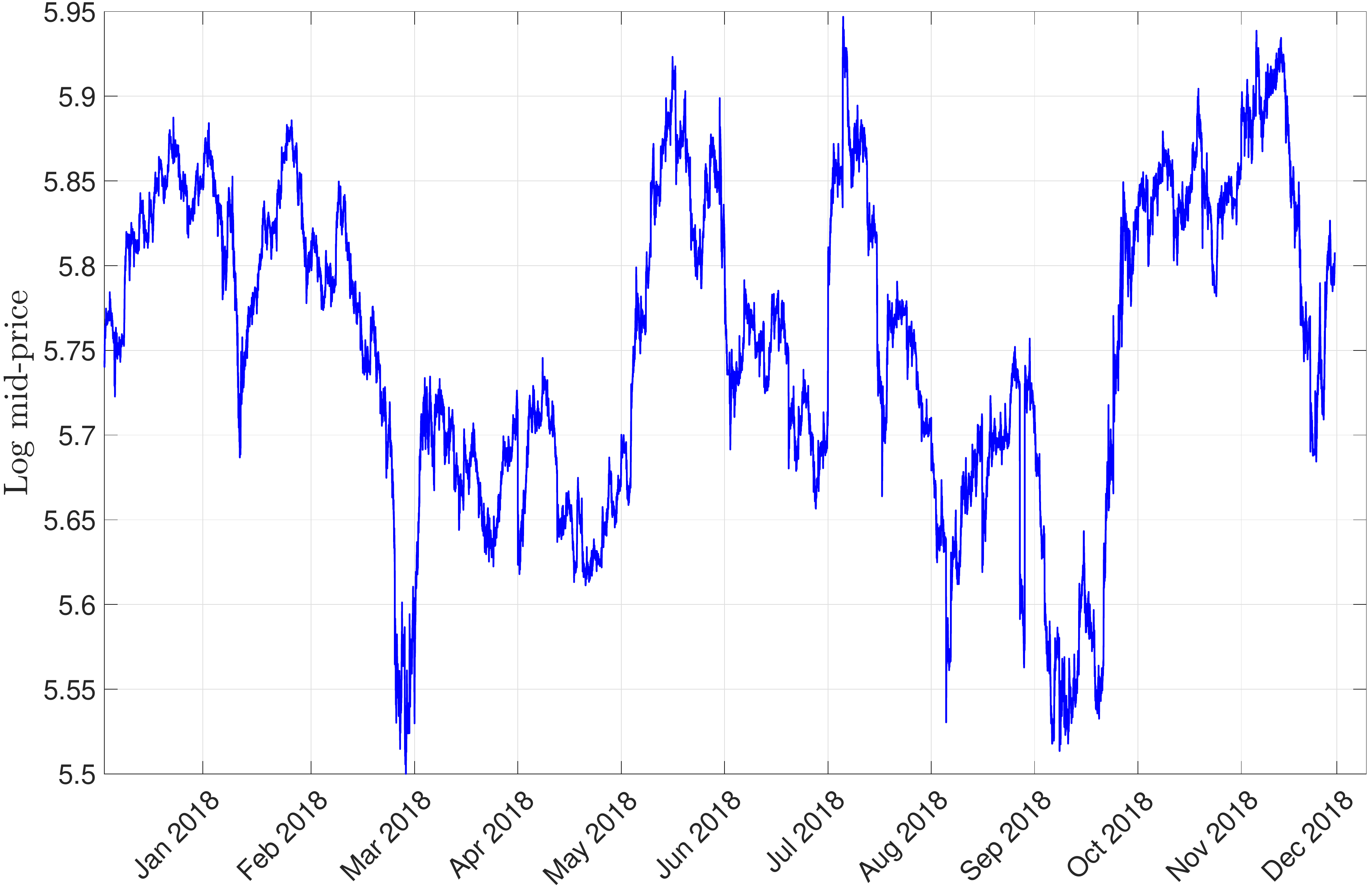} 
 \caption{The log of mid-price for every second of Tesla stock in 2018. The dataset contains $5.85 \times 10^6$ data points.}
 \label{fig:Tesla_stock}
\end{figure}

\subsection{Simulation Results}
In all our results data are generated from the model under the finest discretization considered except in model 4, where we use a real data. In \autoref{alg:online_disc_new}, we consider the auxiliary linear process following:
\begin{align*}
d\tilde{X}_t = \sigma dW_t
\end{align*}
in models 1-3 and in model 4 it follows:
\begin{align*}
d\tilde{X}_t = \sigma(x') dW_t
\end{align*}
 In models 1-3, $\tilde{p}_{\theta,x'}(x,t;x',1)=\mathcal{N}(x';x, (1-t)\sigma^2)$, hence $\tilde{p}_{\theta,x'}(x,x')=\mathcal{N}(x';x,\sigma^2)$, which is easy to sample $x'$ from, and therefore, $\hat{p}_\theta(x,x') = \tilde{p}_{\theta,x'}(x,x')$. But in model 4, $\tilde{p}_{\theta,x'}(x,x')=\mathcal{N}(x';x,\sigma^2(x'))$ which is not easy to sample $x'$ from. Therefore, we take $\hat{p}_\theta(x,x') = \mathcal{N}(x';x,\sigma^2(x))$.

\subsubsection{Estimation of the Score Function}

For each model, we fix parameter $\theta$ and estimate the score function using Algorithms \ref{alg:online_disc} and \ref{alg:online_disc_new}. 
In \autoref{alg:online_disc}, $N\in\{3000, 7000, 4000, 5000\}$ in the 1st, 2nd, 3rd \& 4th models, respectively. In \autoref{alg:online_disc_new}, $N\in\{1000, 2000, 1000, 1500\}$ in the 1st, 2nd, 3rd \& 4th models, respectively. In both algorithms, we set the discretization level to $l= 10$. In Models 1, 2 and 3, we set $\kappa = 2$, $2.2$, $1.5$, $x_*=0.2$, $1$, $2$ and $\sigma = 0.3$, $0.25$, $0.25$, respectively. While in model 4, we set $x_* = 1.3$ and $\beta=2$; $T=50$ for all 4 models ($T=50$ in model 4 corresponds to 14.22 hours of trading). 

\autoref{fig:score_func} summarizes the results of 56 replications of estimates of the score function for each model and for each unit time point. These simulations are implemented in parallel using 8 CPUs. The figure illustrates that both algorithms are equivalent for large $N$ and $l$ as one would expect.

\begin{figure}[h]
\centering
  \subfloat{\includegraphics[height=0.3\textwidth]{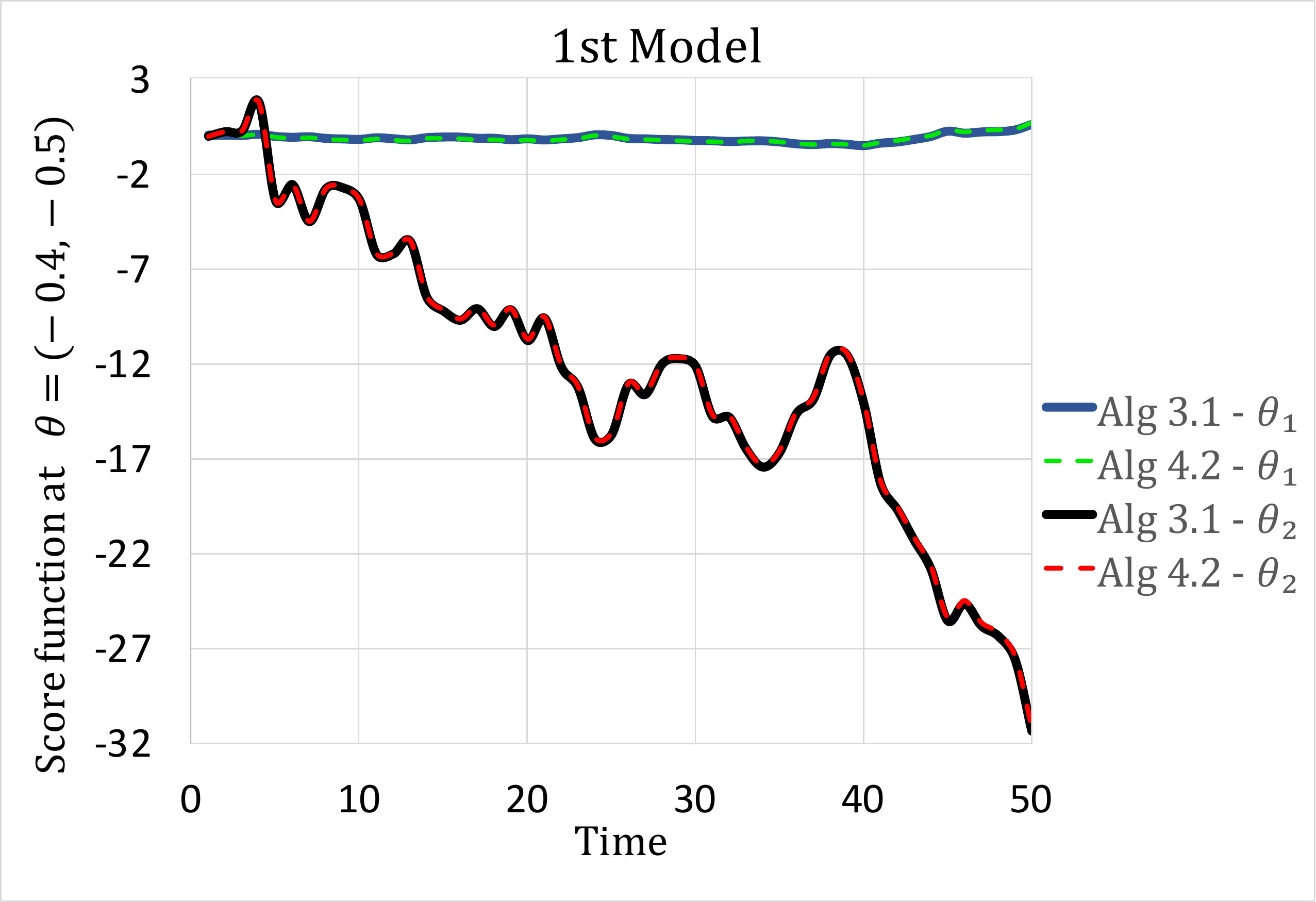}}\,
  \subfloat{\includegraphics[height=0.3\textwidth]{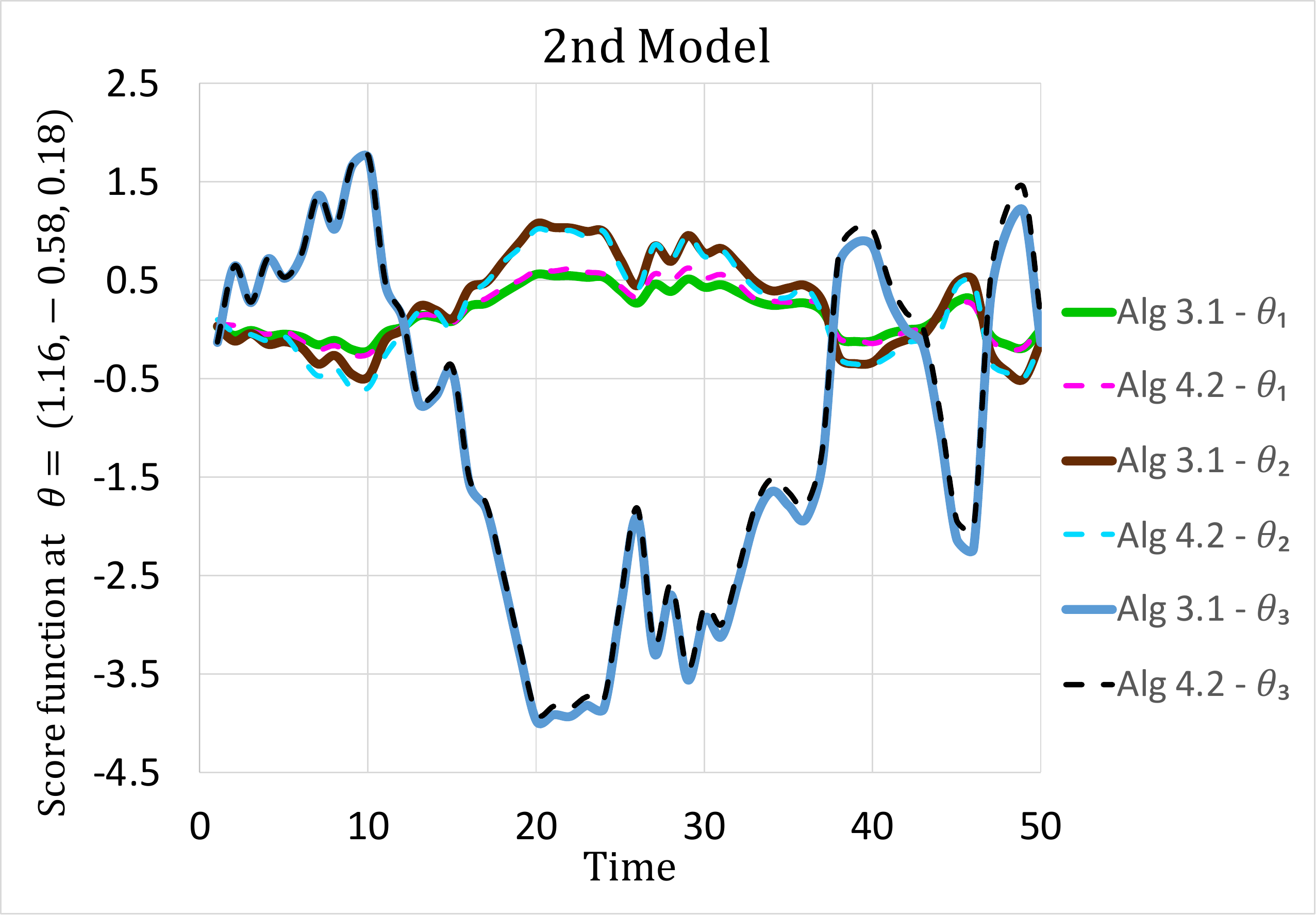}}\\
  \subfloat{\includegraphics[height=0.3\textwidth]{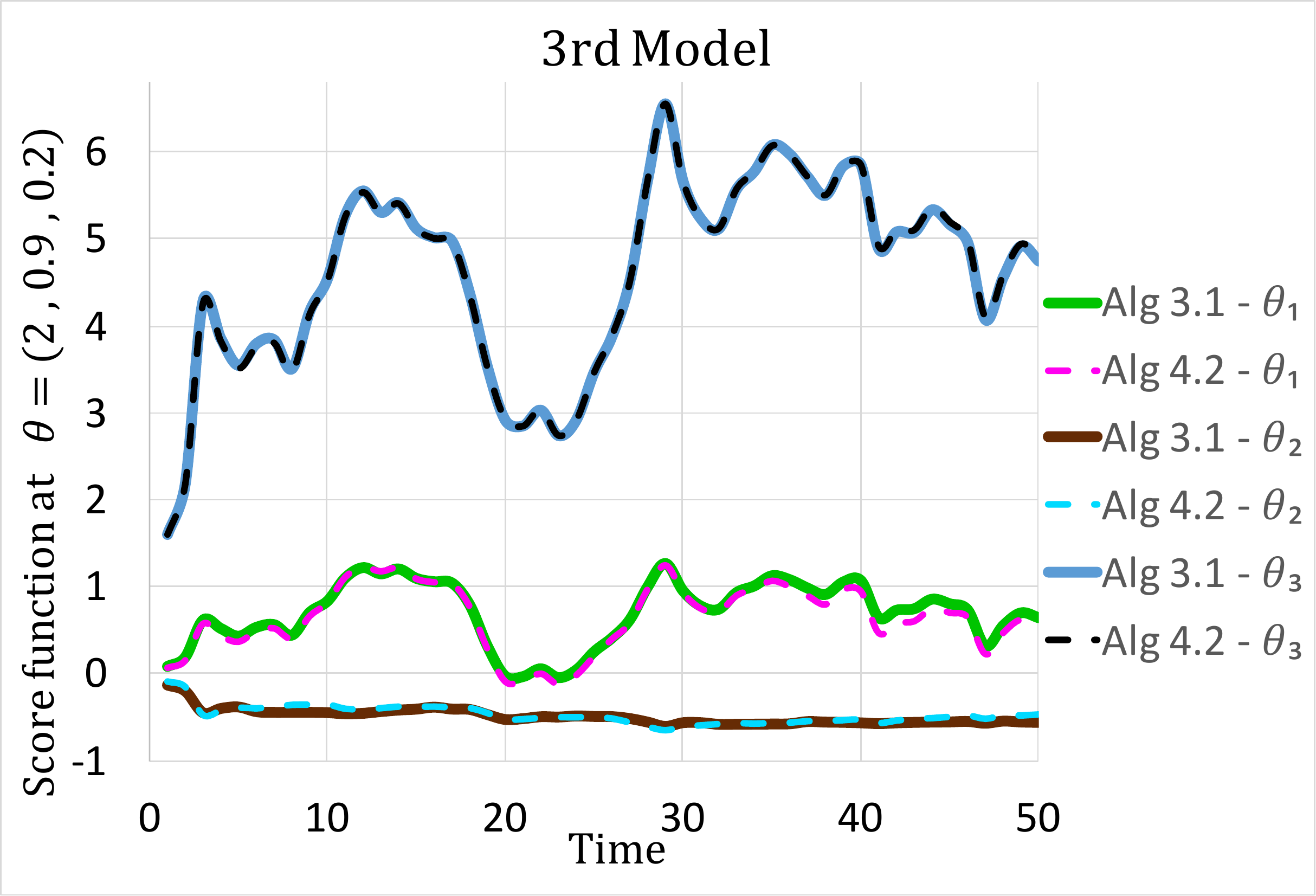}}\,
  \subfloat{\includegraphics[height=0.3\textwidth]{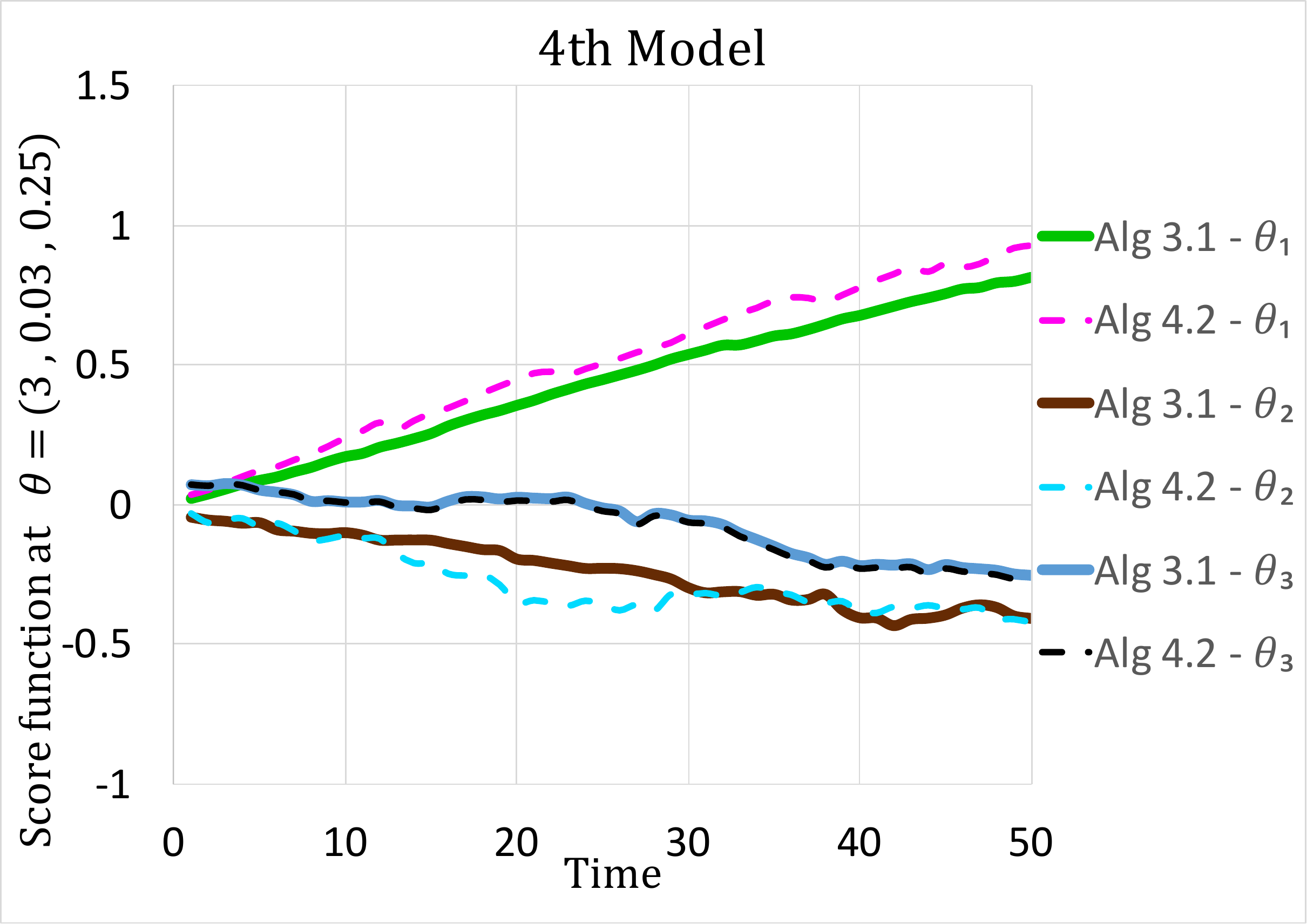}}
   \caption{Trajectories from the execution of Algorithms \ref{alg:online_disc} and \ref{alg:online_disc_new} for the estimation of the score function in Models  1-4.} 
     \label{fig:score_func}
\end{figure}

%\setcounter{figure}{3}
%\begin{figure} [H]
%\centering
%\subfloat{\includegraphics[height=0.25\textwidth]{figures/Ex1_socre_New_vs_Old.pdf}}
%\end{figure}
%   \vspace{-15pt}
%\begin{figure} [H]\ContinuedFloat
%    \centering
%\subfloat{\includegraphics[height=0.25\textwidth]{figures/Ex2_socre_New_vs_Old.pdf}}
%\end{figure}
%   \vspace{-15pt}
%\begin{figure} [H]\ContinuedFloat
%    \centering
%\subfloat{\includegraphics[height=0.25\textwidth]{figures/Ex3_socre_New_vs_Old.pdf}}
%   \caption{Trajectories from the execution of \autoref{alg:online_disc} and \autoref{alg:online_disc_new} for the estimation of the score-function in the three models. In \autoref{alg:online_disc}, we use $N=3000, 7000, 4000$ in the 1st, 2nd \& 3rd models, respectively. In \autoref{alg:online_disc_new},  we use $N=1000, 2000, 1000$ in the 1st, 2nd \& 3rd models, respectively. In both algorithms, we set the discretization level to be $l= 10$. In models 1, 2 \& 3, we set $\kappa = 2$, $2.2$, $1.5$, $x_*=0.2$, $1$, $2$ and $\sigma = 0.3$, $0.25$, $0.25$, respectively.}
%    \label{fig:score_func}
%\end{figure}

\subsubsection{Cost Comparison}

We now consider comparing the costs of Algorithms  \ref{alg:online_disc_new} and \ref{alg:online_disc_ml}. We take $l_*$ to be 7 in the models 1, 2 \& 3 and 8 in the 4th model. The parameters of the model are as in the previous section. The ground truth is computed at level 11 with $N=2000$ using \autoref{alg:online_disc_new}. We run 56 simulations of both algorithms for $L\in\{l_*-1,\cdots,10\}$. For each $L$, the number of particles are carefully chosen to give similar MSE values from both algorithms. Particularly, the number of particles in \autoref{alg:online_disc_new} is $N_L=\floor*{C_1 2^L}$  and for each level $l$ in \autoref{alg:online_disc_new}. In \autoref{fig:pf_ml_cost} the number of particles is $N_l=\floor*{C_2 2^L(L-l_*+2)\Delta_l^{1/2+\rho}}$ (in models 1 to 3) and $N_l=\floor*{C_2 2^L(L-l_*+2)\Delta_l^{1/2+\rho}\Delta_L^{-\rho}}$ (in model 4), where $C_1$ and $C_2$ are constants. In  we can observe the cost against MSE curve, that appear to follow our conjectures over algorithmic costs earlier in the article.

%For $m\in \{0,1,\cdots,L-1\}$, we have the following telescopic sum for the multilevel estimator
%\begin{align*}
%\widehat{\nabla_{\theta}\log(\widetilde{\gamma}_{k,\theta}^L(1))}_{ML} :=  \widehat{\nabla_{\theta}\log(\widetilde{\gamma}_{k,\theta}^m(1))} + \sum_{l=m+1}^L \Big\{\widehat{\nabla_{\theta}\log(\widetilde{\gamma}_{k,\theta}^l(1))} 
%- \widehat{\nabla_{\theta}\log(\widetilde{\gamma}_{k,\theta}^{l-1}(1))}
%\Big\}.
%\end{align*}
%The error from the lower discretization levels dominates the error from the higher levels, therefore, we take $m$ to be 6, 5 and 6 in the first, second and third models, respectively. 

\begin{figure}[h] 
\centering
  \subfloat{\includegraphics[height=0.3\textwidth]{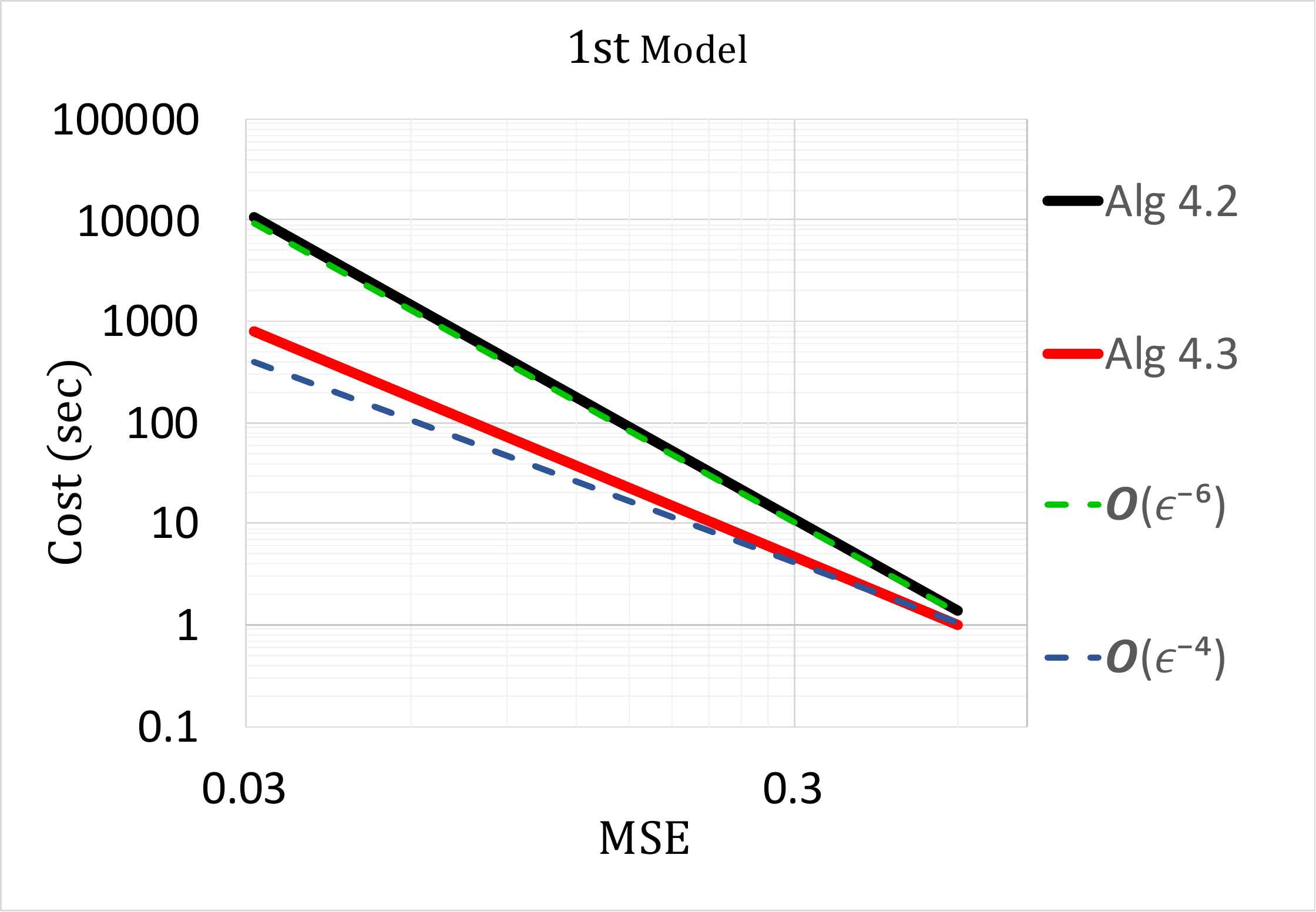}}\,\,\,
  \subfloat{\includegraphics[height=0.3\textwidth]{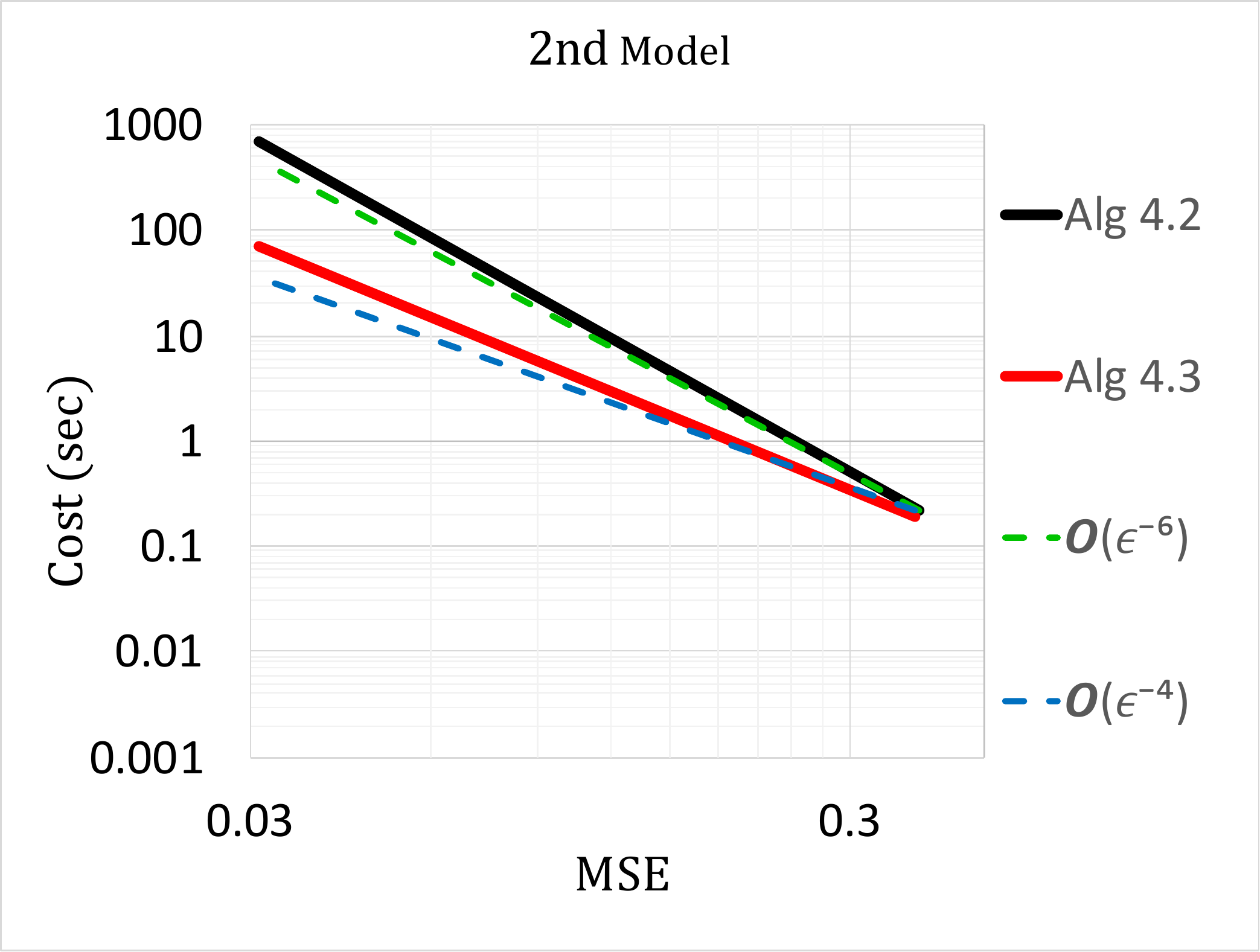}}\\
  \vspace{-8pt}
  \subfloat{\includegraphics[height=0.3\textwidth]{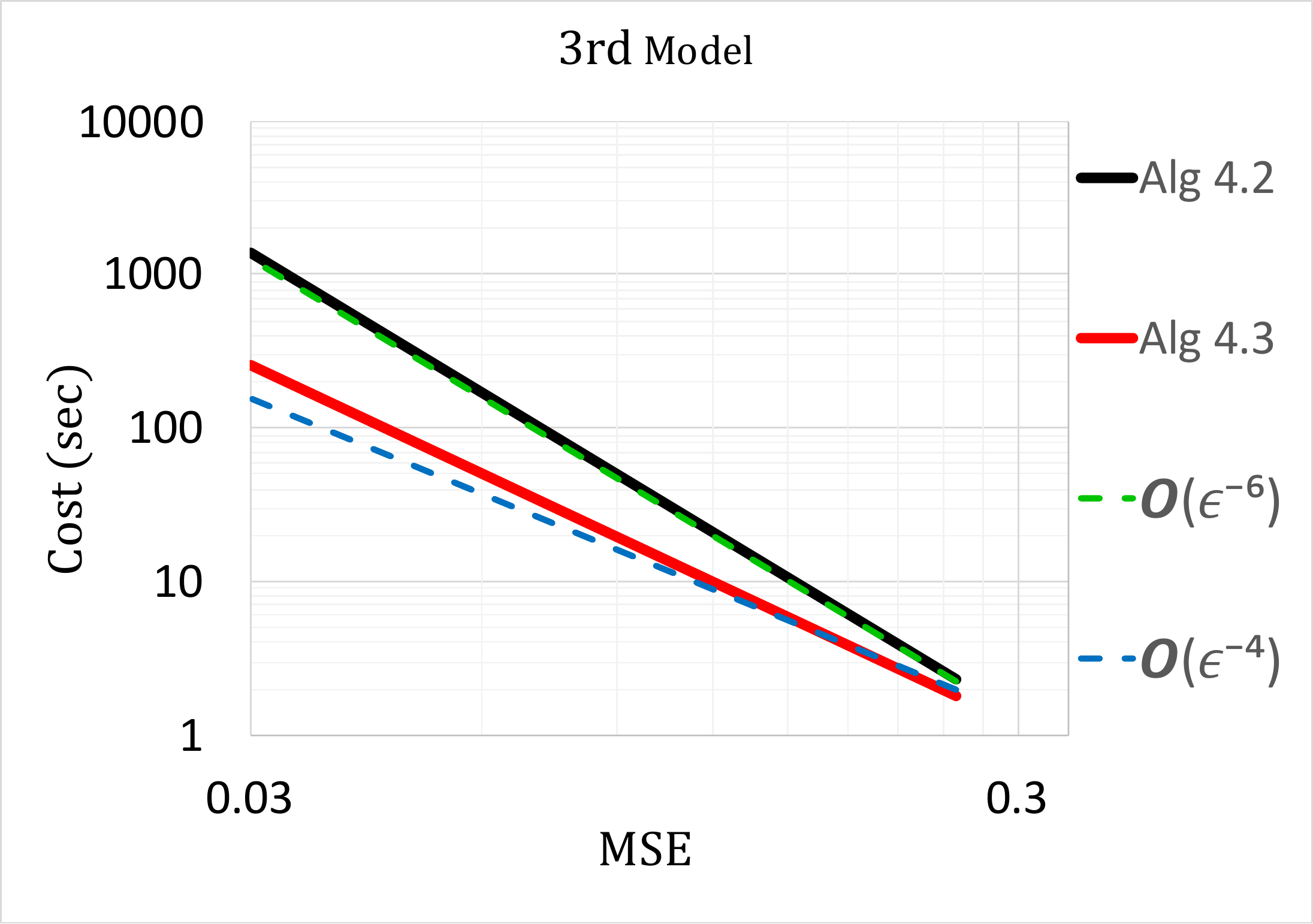}}\,\,\,
  \subfloat{\includegraphics[height=0.3\textwidth]{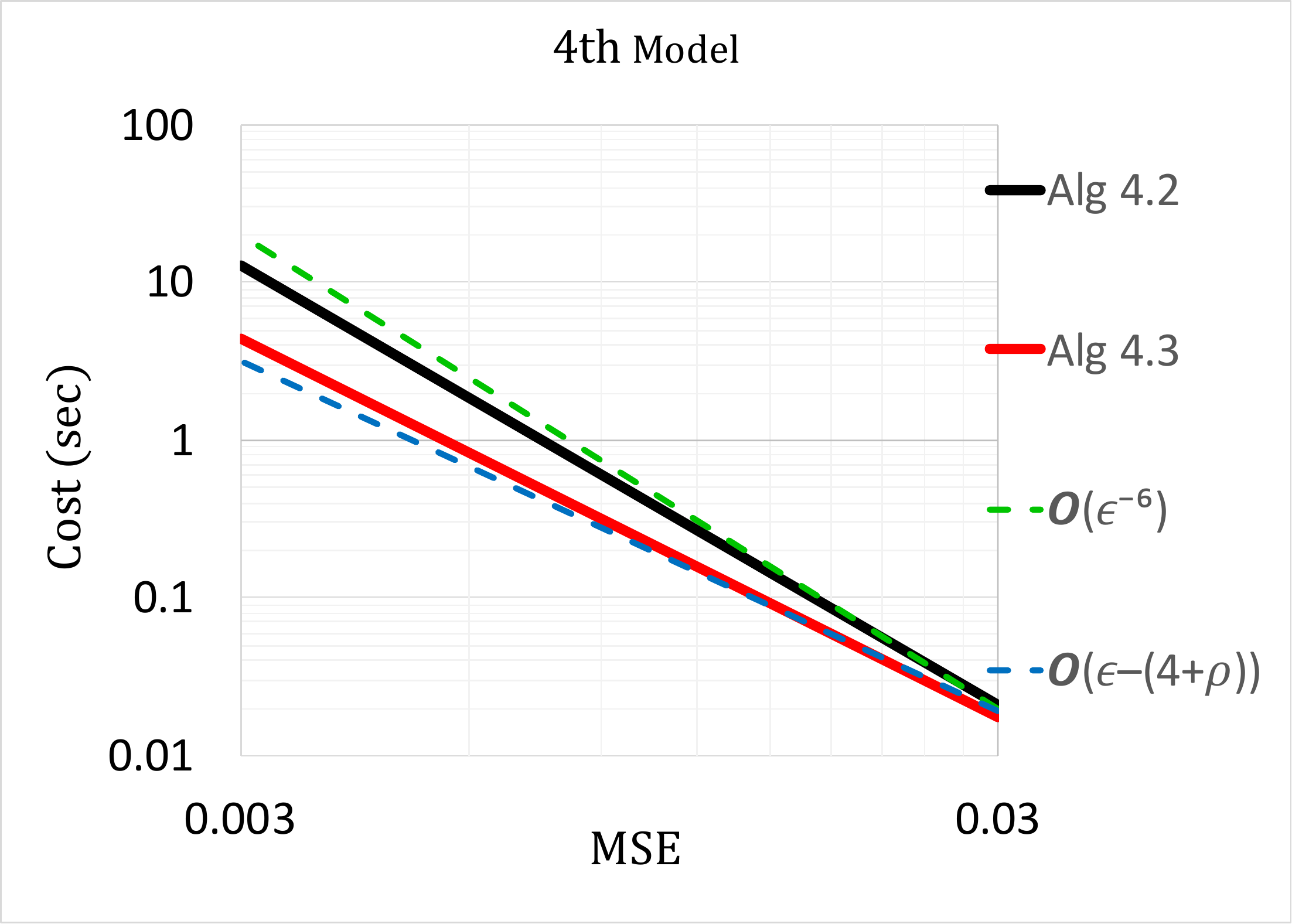}}
   \caption{Cost per each simulation versus MSE on a log-log scale for Algorithms \ref{alg:online_disc_new} and \ref{alg:online_disc_ml}. The dashed lines are for reference.}
    \label{fig:pf_ml_cost}
\end{figure}

\subsubsection{Parameter Estimation}
We use Algorithms \ref{alg:online_disc}, \ref{alg:online_disc_ml} to estimate the parameters in each model. In \autoref{alg:online_disc}, the level of discretization, $l$, is 10 for models 1-3 and 9 for model 4, and the number of particles, $N$, is 2,000 for models 1-3 and 2500 for model 4. In \autoref{alg:online_disc_ml}, we use $l_*=7$, $L=10$ and the number of particles on each level $l\in \{l_{*}-1,\cdots,L\}$ is $N_l= 2^L (L-l_*+2) \Delta_l^{1/2+\rho}$, where  $\rho\in\{0.14, \, 0.09, \,0.11\}$ in Models 1, 2 and 3, respectively. In model 4, $l_* = 8$, $L=9$ and the number of particles on each level $l\in \{l_{*}-1,\cdots,L\}$ is $N_l=1.4 \times 2^L (L-l_*+2) \Delta_l^{1/2+\rho} \Delta_L^{-\rho}$ where $\rho=0.1$.

\autoref{fig: par_est_model1} considers Model 1. We fix $x_*=0.2$, $\sigma = 0.3$, $\kappa= 2$, $T=20,000$.
The parameter values used to generate the data are $(\theta_1^\star,\theta_2^\star)=(-0.7,-0.5)$. For the stochastic gradient algorithm,
we used an initial value $(-0.05, -1.5)$ and step-size $\alpha_k=k^{-0.85}$. 
\autoref{fig:par_est_model2} considers Model 2. We fix $x_*=1.8$, $\sigma = 0.25$, $\kappa= 2.2$, $T=20,000$.
The parameter values used to generate the data are $(\theta_1^\star,\theta_2^\star,\theta_3^\star)=(1.3,-0.5,0.18)$. For the stochastic gradient algorithm,
we used initial value $(0.8,-1, 0.8)$ and step-size $\alpha_k=k^{-0.95}$. 
\autoref{fig:par_est_model3} considers Model 3. We fix $x_*=1.5$, $\sigma = 0.25$, $\kappa= 2$, $T=20,000$.
The parameter values used to generate the data are $(\theta_1^\star,\theta_2^\star,\theta_3^\star)=(2,1,0.45)$. For the stochastic gradient algorithm,
we used an initial value $(1.24, 0.6, 1.11)$ and step size $\alpha_k=k^{-0.9}$. 
\autoref{fig:par_est_model4} considers Model 4 applied to the data in \autoref{fig:Tesla_stock}. We fix $x_*=1.3$, $\beta = 2$, $T=11425$ (there is a rescaling of the time parameter). For the stochastic gradient algorithm,
we used an initial value $(2.4, 0.5, 0.4)$ and step size $\alpha_k=k^{-0.82}$.
In all cases considered (Figures \ref{fig: par_est_model1}-\ref{fig:par_est_model4}) our selected  settings allow for an accurate estimation of the parameter values over long time periods.

\begin{figure} 
\centering
    \subfloat{\includegraphics[height=0.2\textwidth]{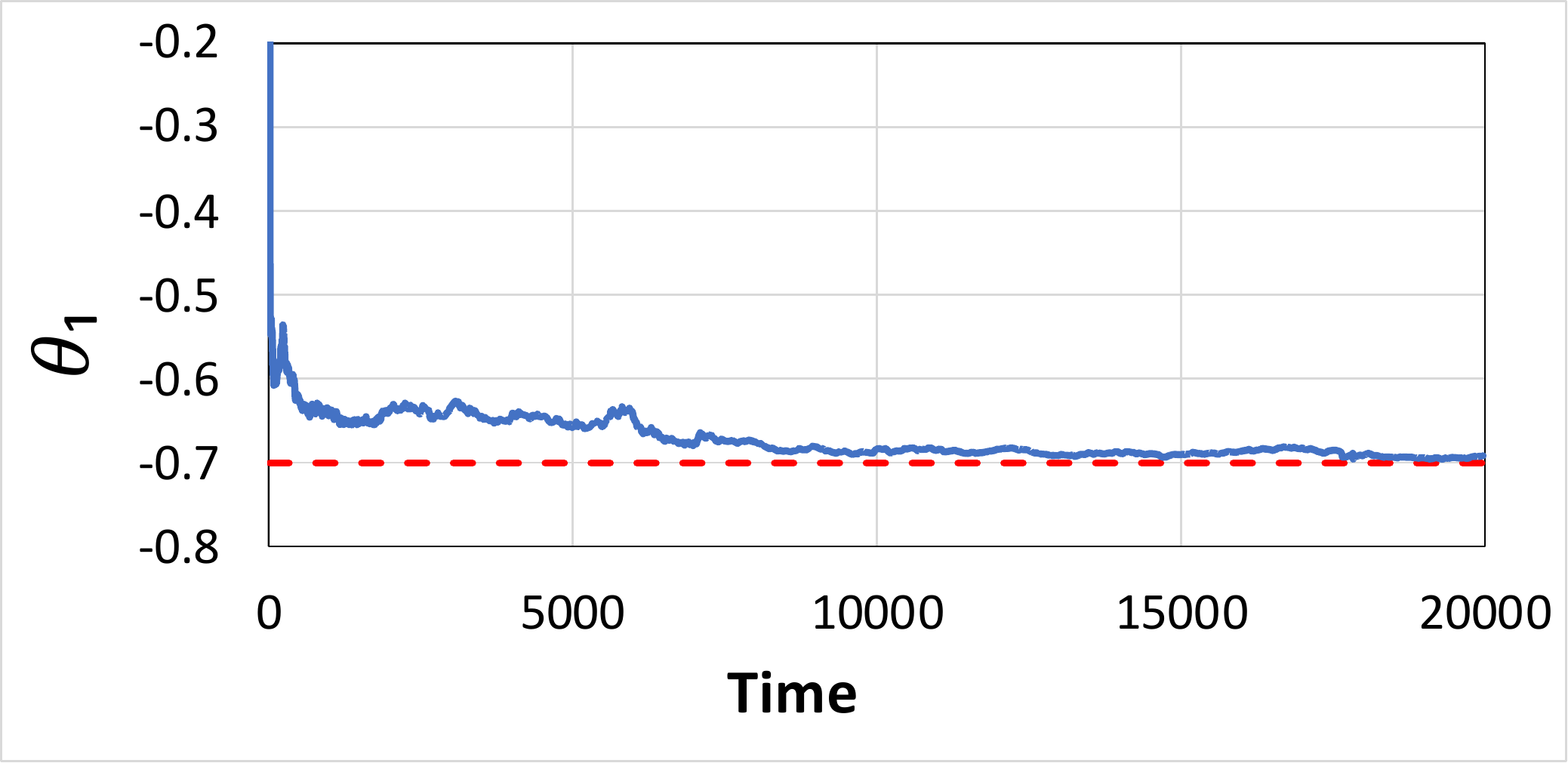} }
    \subfloat{\includegraphics[height=0.2\textwidth]{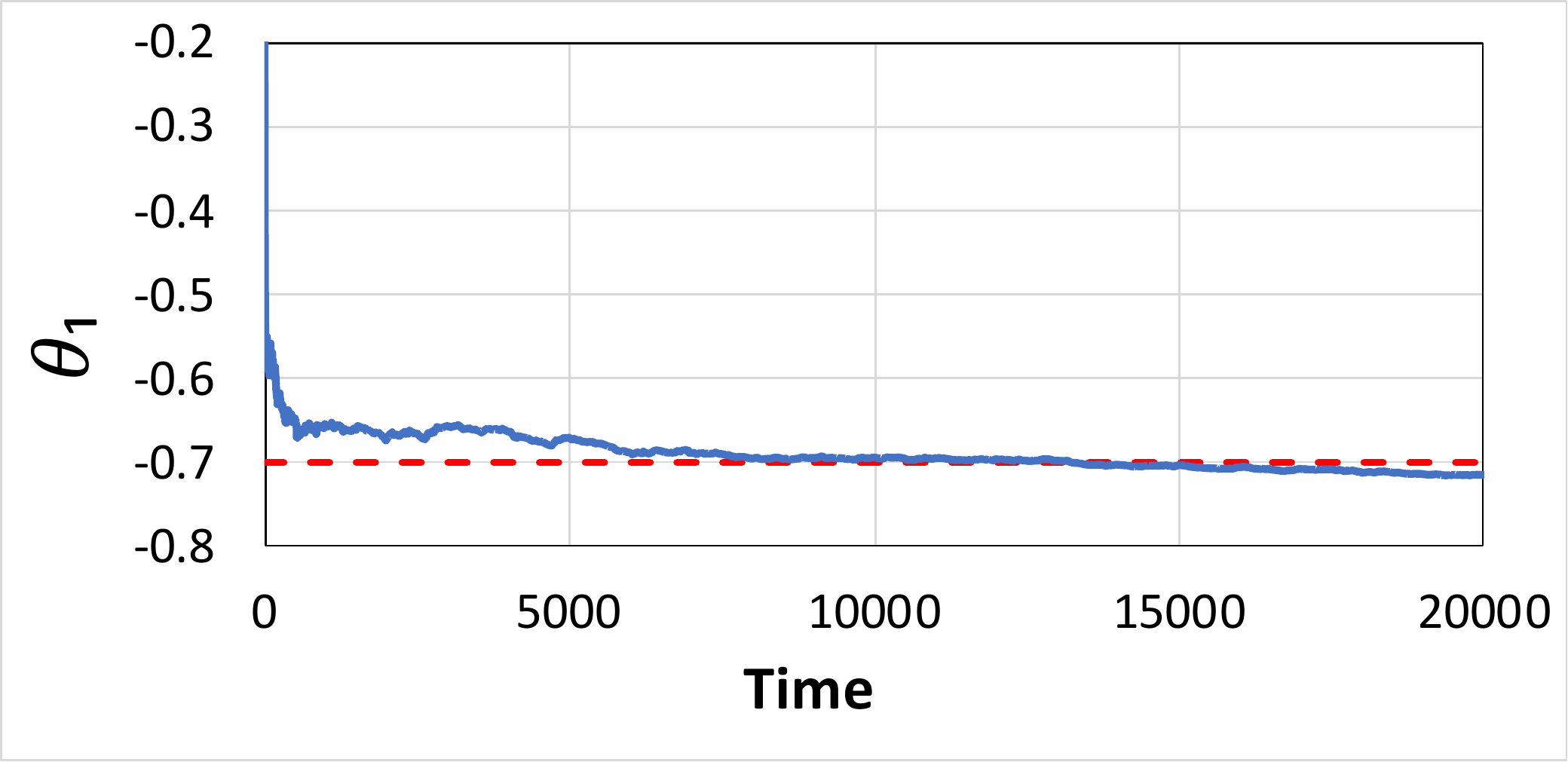} }\\ 
   \vspace{-10pt} 
   \subfloat{\includegraphics[height=0.2\textwidth]{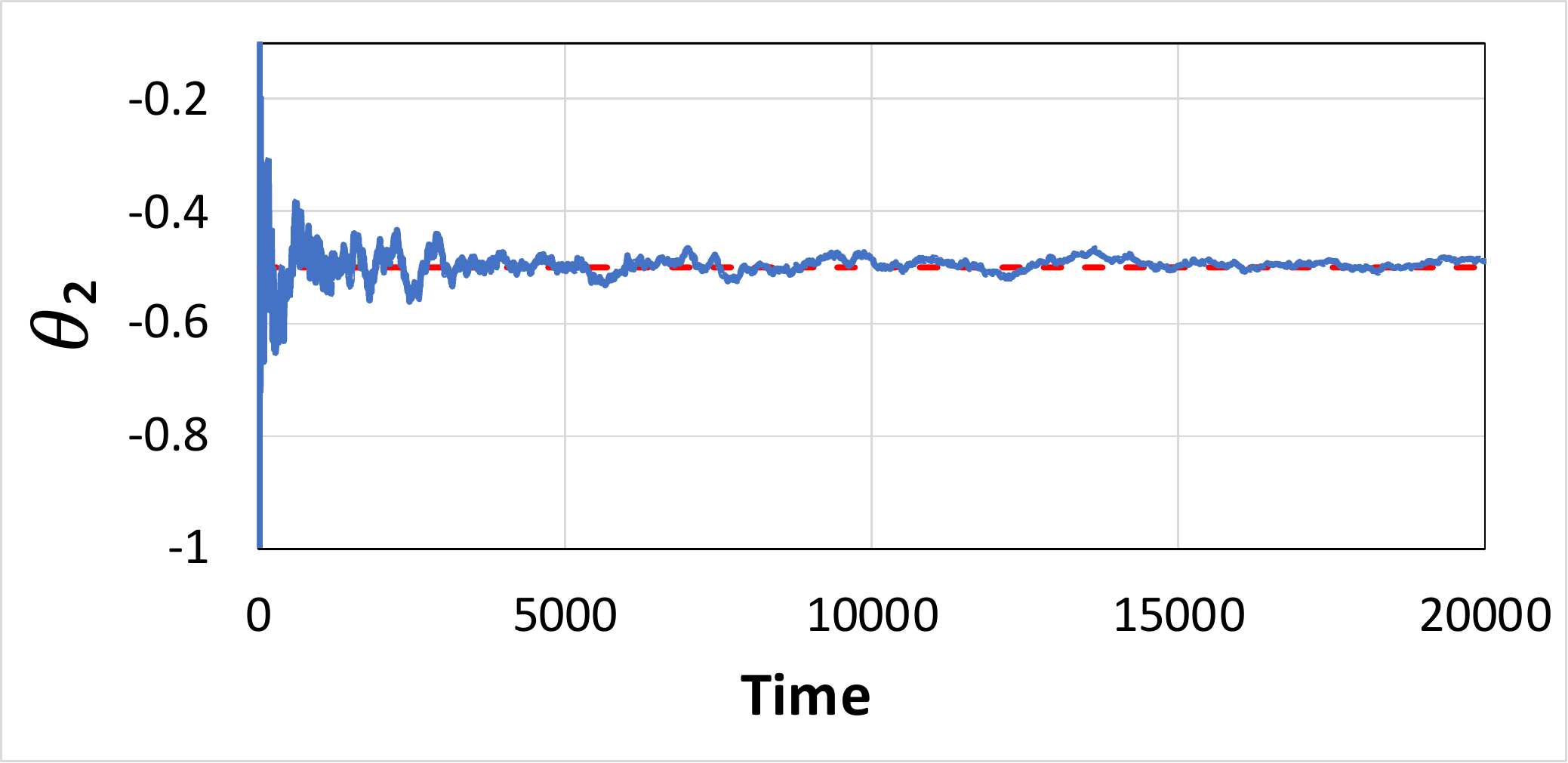}} \,
   \subfloat{\includegraphics[height=0.2\textwidth]{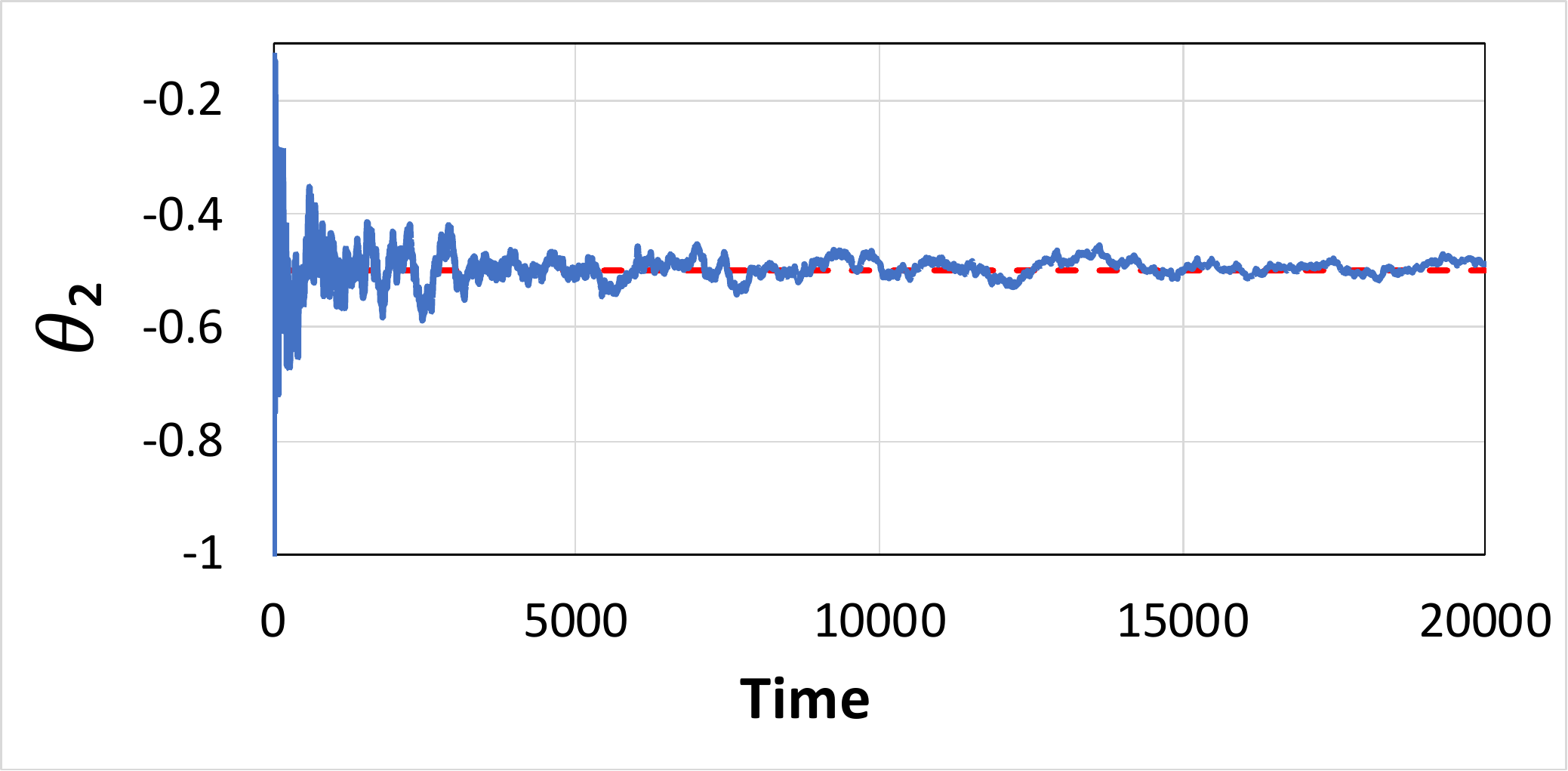} }
   \caption{Trajectories from the execution of \autoref{alg:online_disc} (left panel) and \autoref{alg:online_disc_ml} (right panel) for the estimation of $(\theta_1, \theta_2)$ from Model 1. The horizontal dashed lines in the plots show the true parameter values $(\theta_1^\star,\theta_2^\star)=(-0.7,-0.5)$.}
    \label{fig: par_est_model1}
\end{figure}

\begin{figure} 
\centering
    \subfloat{\includegraphics[height=0.2\textwidth]{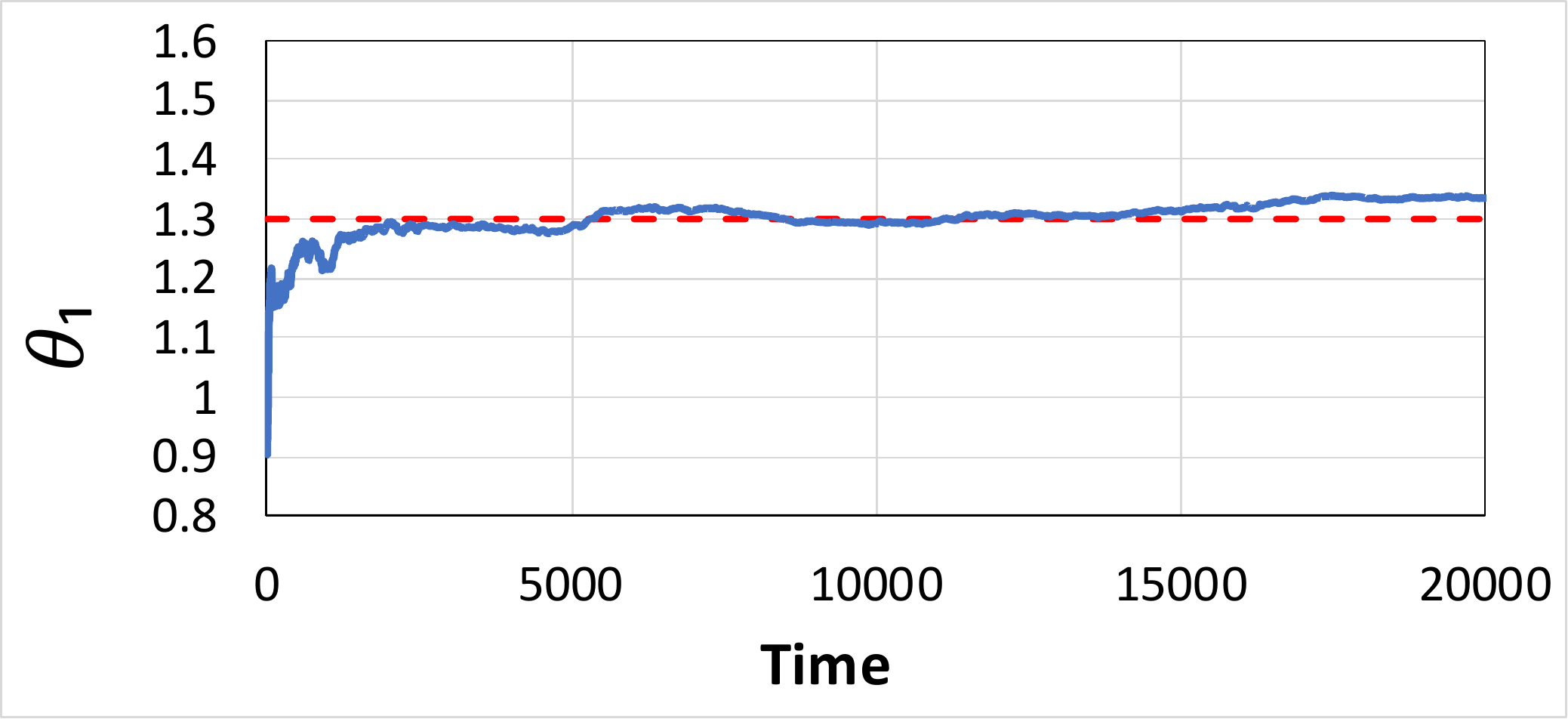} }
    \subfloat{\includegraphics[height=0.2\textwidth]{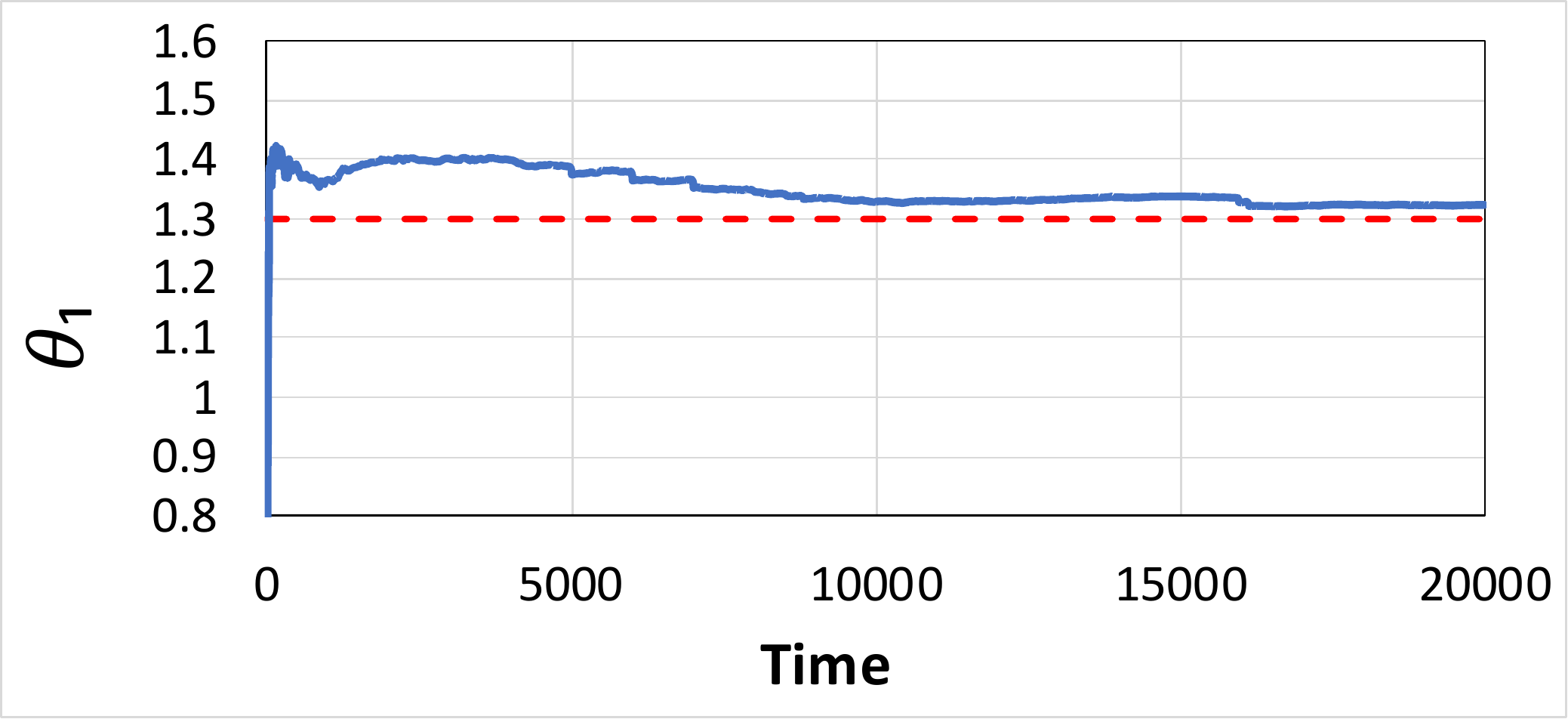} }\\
\vspace{-10pt}   \subfloat{\includegraphics[height=0.2\textwidth]{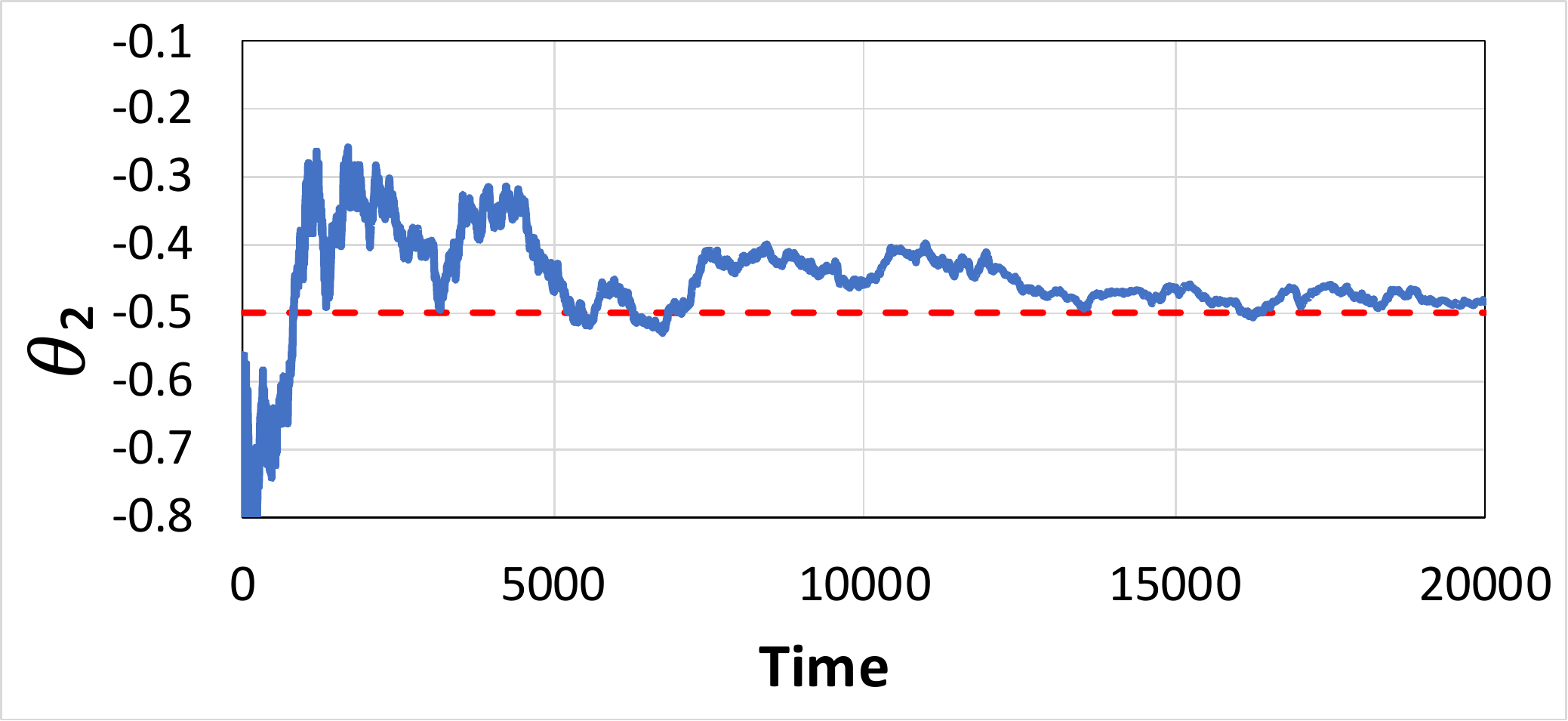} }
   \subfloat{\includegraphics[height=0.2\textwidth]{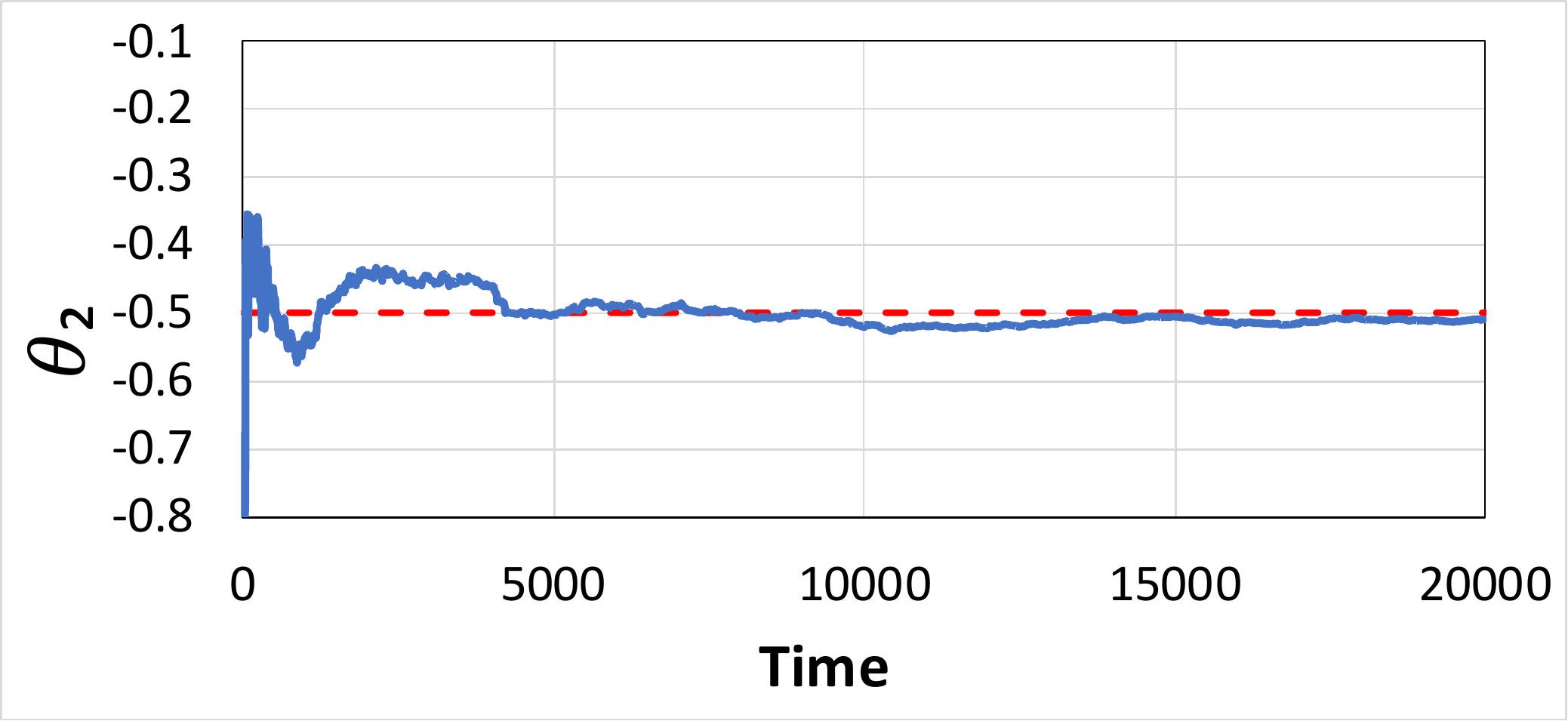} }\\
\vspace{-10pt}  
    \subfloat{\includegraphics[height=0.2\textwidth]{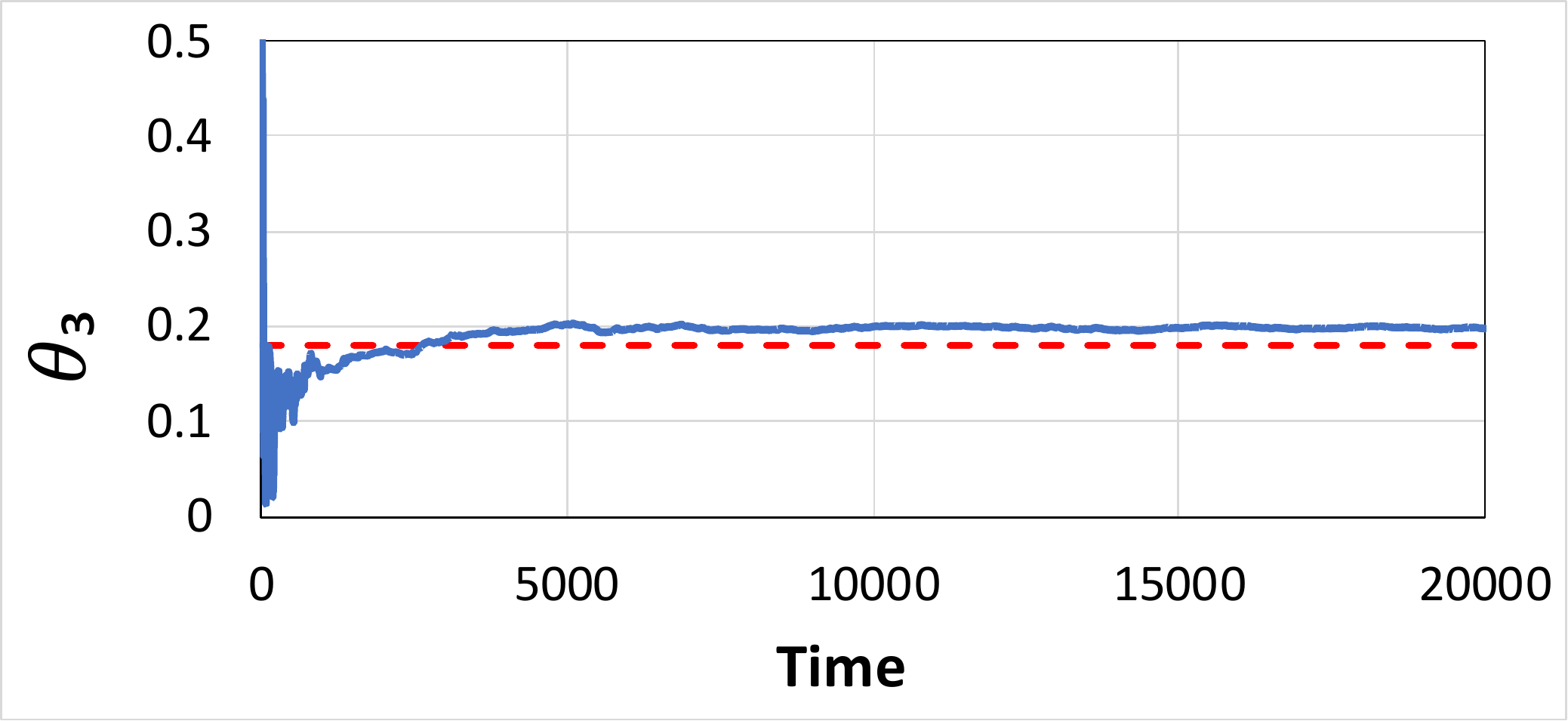}} \hspace{0.5pt}
    \subfloat{\includegraphics[height=0.2\textwidth]{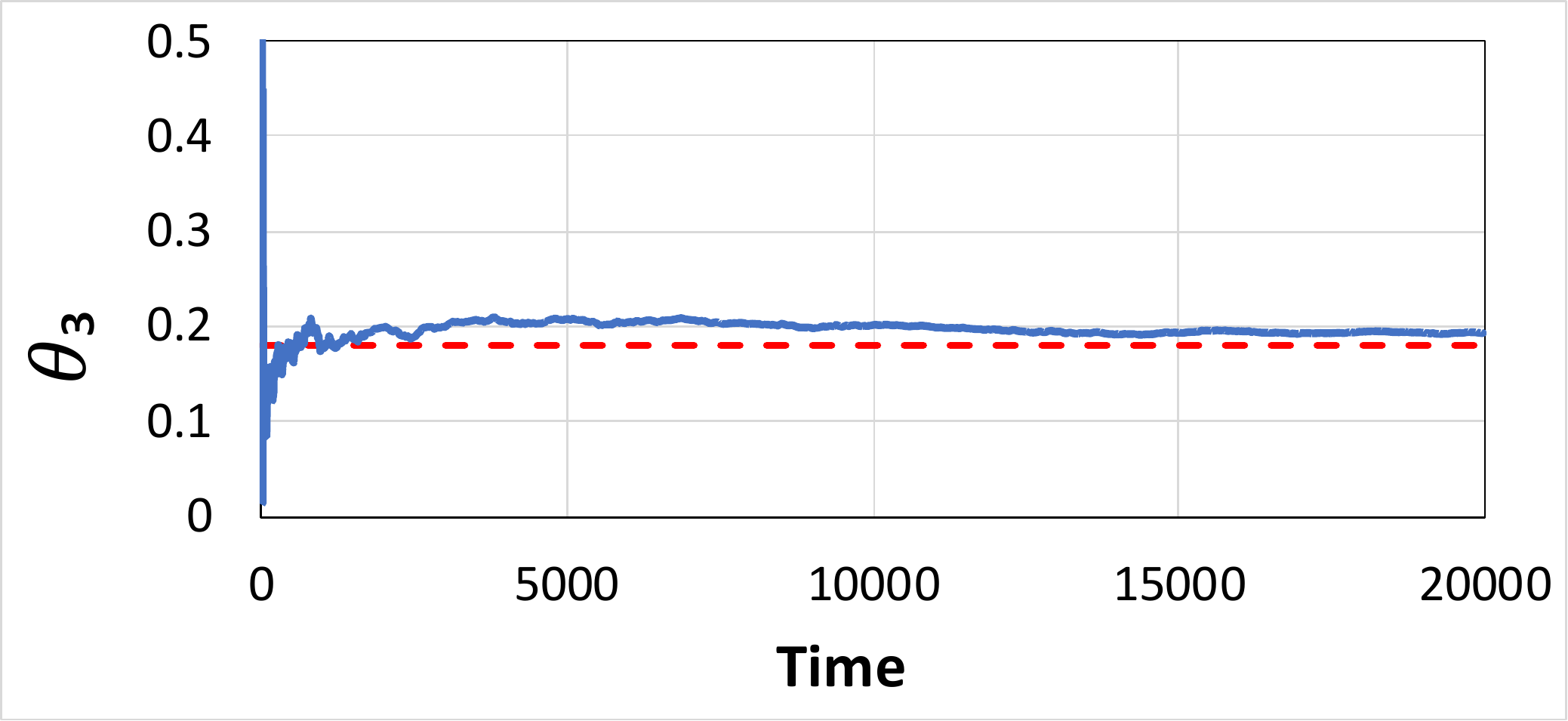} }
   \caption{Trajectories from the execution of \autoref{alg:online_disc} (left panel) and \autoref{alg:online_disc_ml} (right panel) for the estimation of $(\theta_1, \theta_2, \theta_3)$ of the second model. We used an initial value $(0.8,-1, 0.8)$. The horizontal dashed lines in the plots show the true parameter values $(\theta_1^\star,\theta_2^\star,\theta_3^\star)=(1.3,-0.5,0.18)$. }
    \label{fig:par_est_model2}
\end{figure}

\begin{figure}
\centering
    \subfloat{\includegraphics[height=0.2\textwidth]{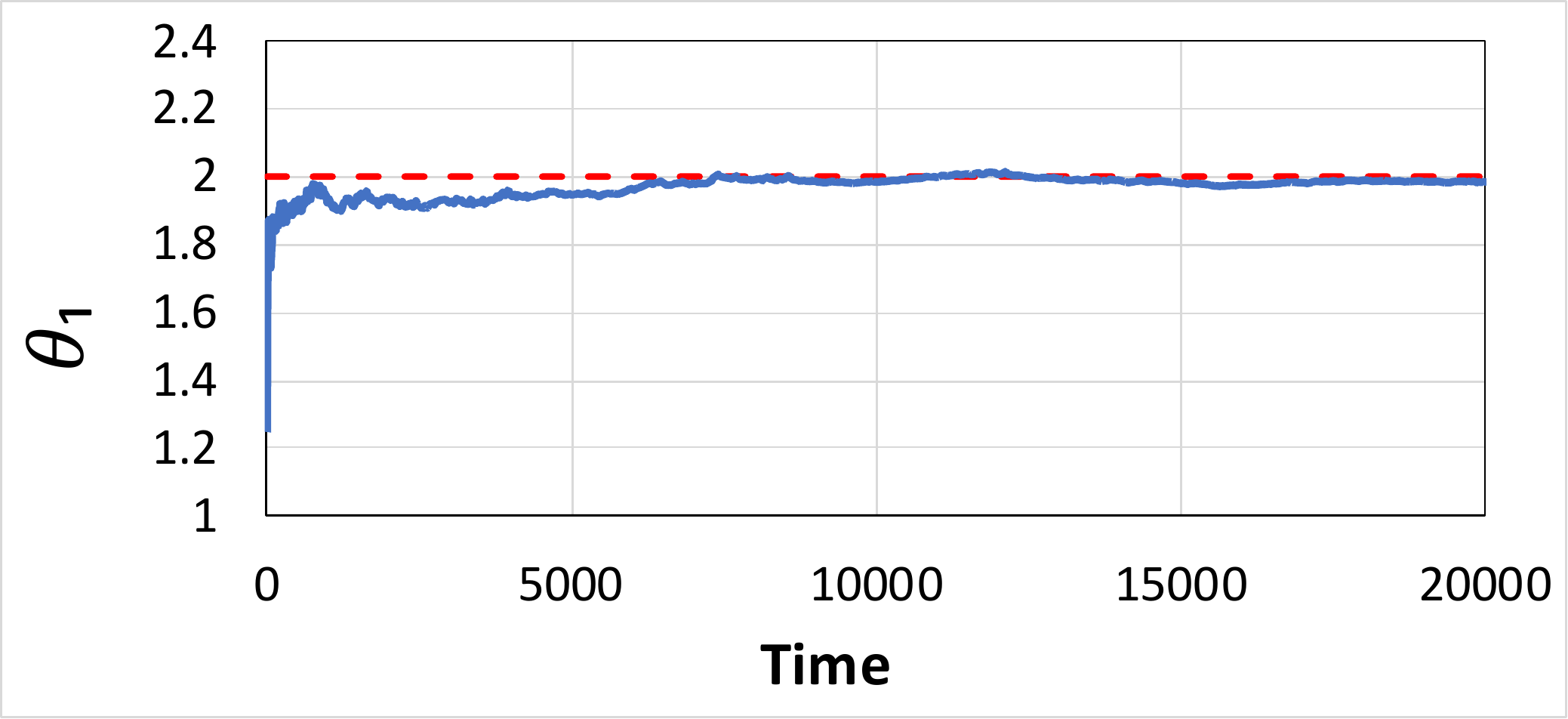} }
    \subfloat{\includegraphics[height=0.2\textwidth]{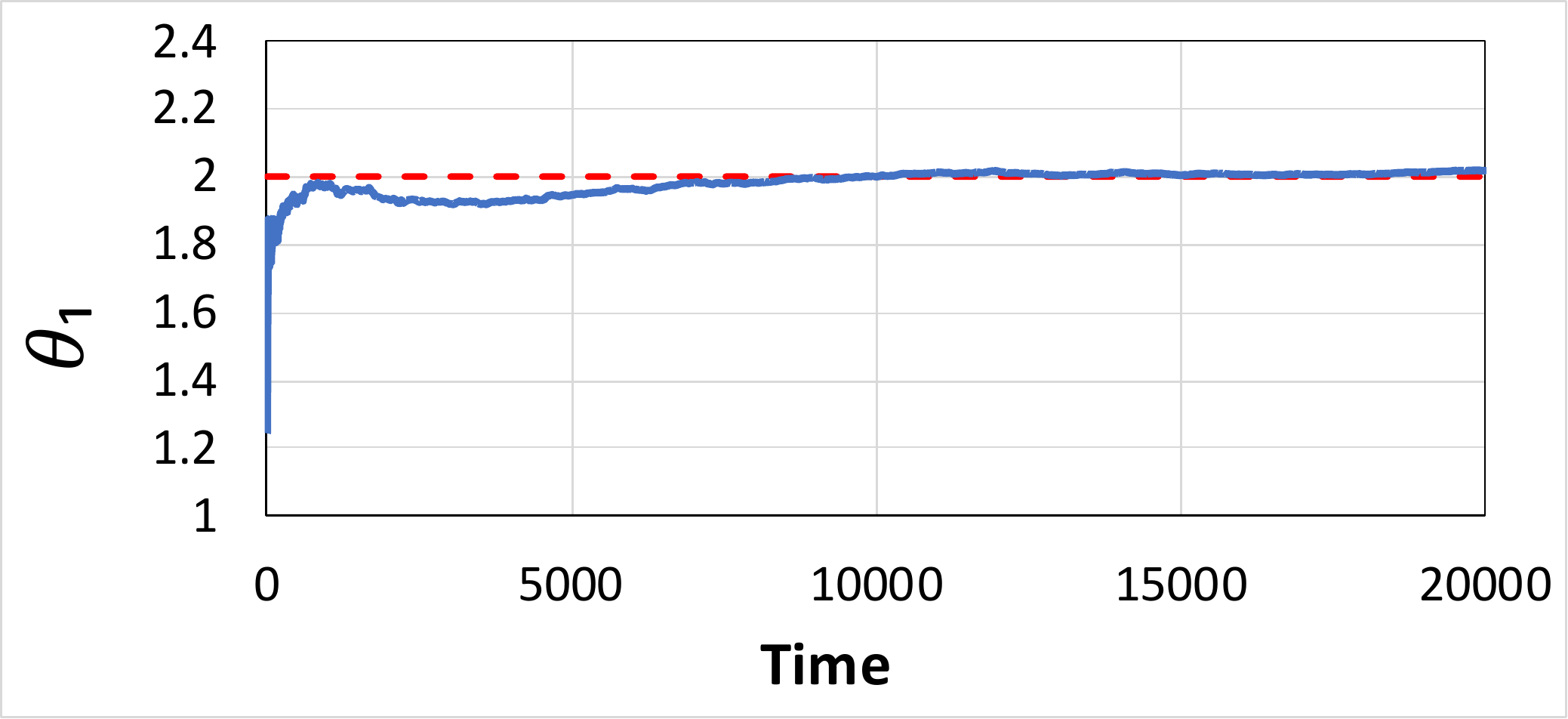} }\\
  \vspace{-10pt} \subfloat{\includegraphics[height=0.2\textwidth]{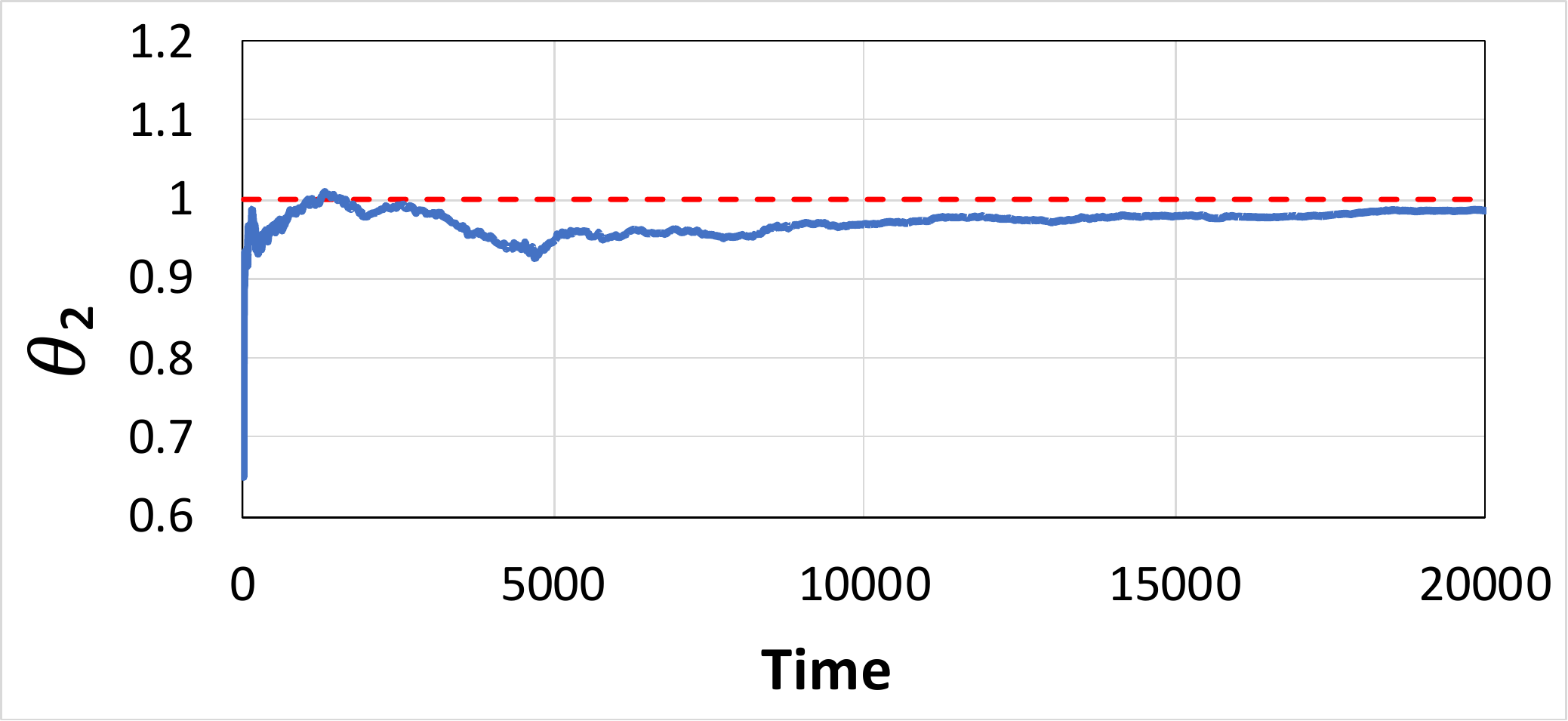}}\, 
   \subfloat{\includegraphics[height=0.2\textwidth]{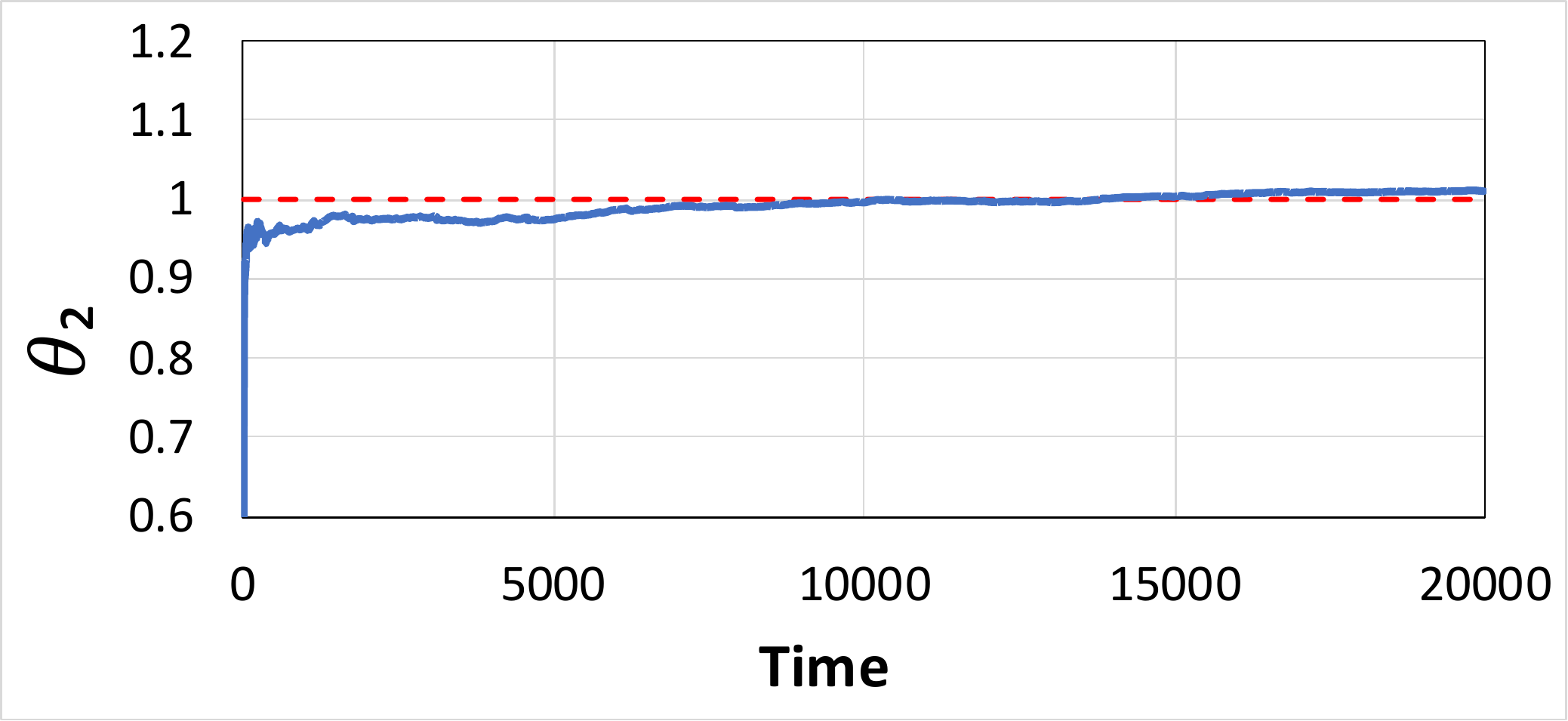} }\\
   \vspace{-10pt}
    \subfloat{\includegraphics[height=0.2\textwidth]{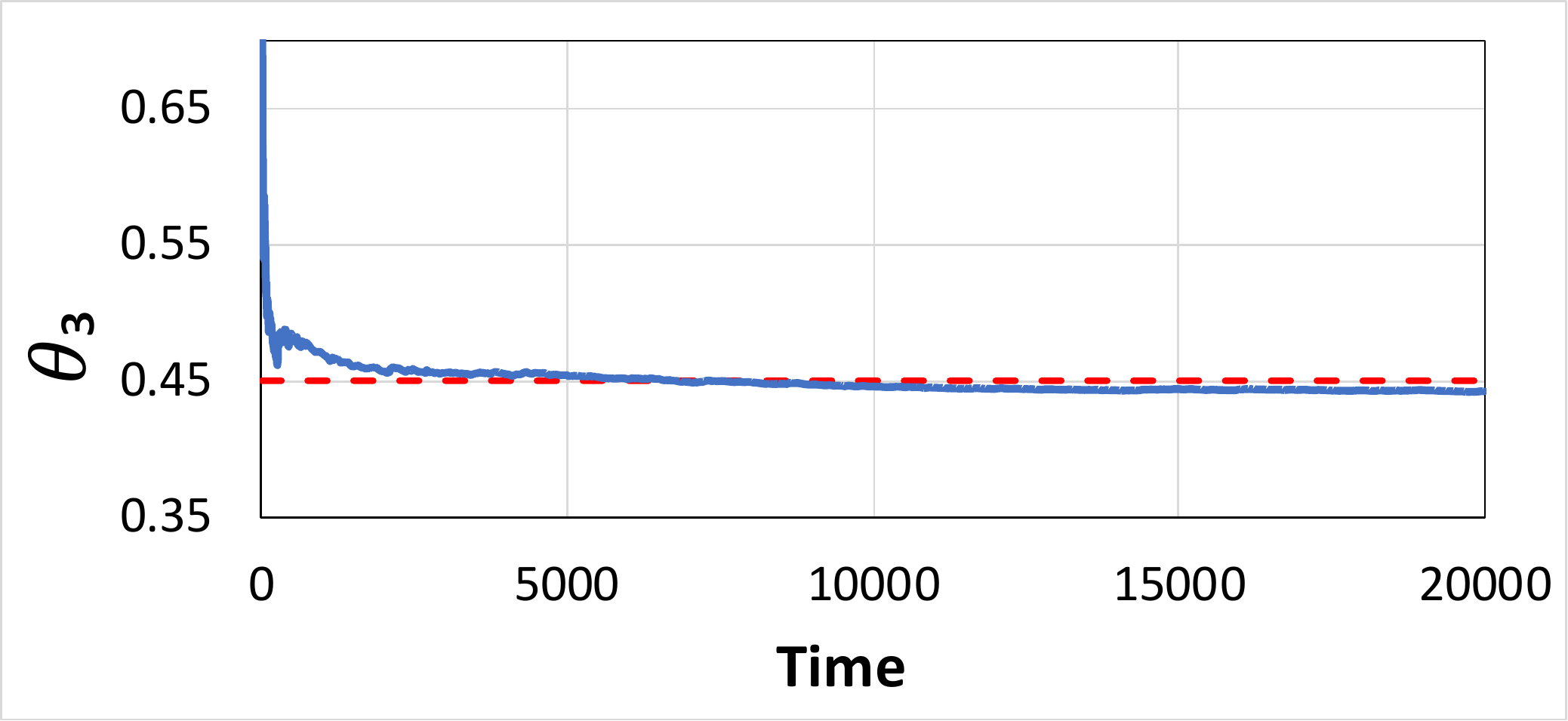}} \,
   \subfloat{\includegraphics[height=0.2\textwidth]{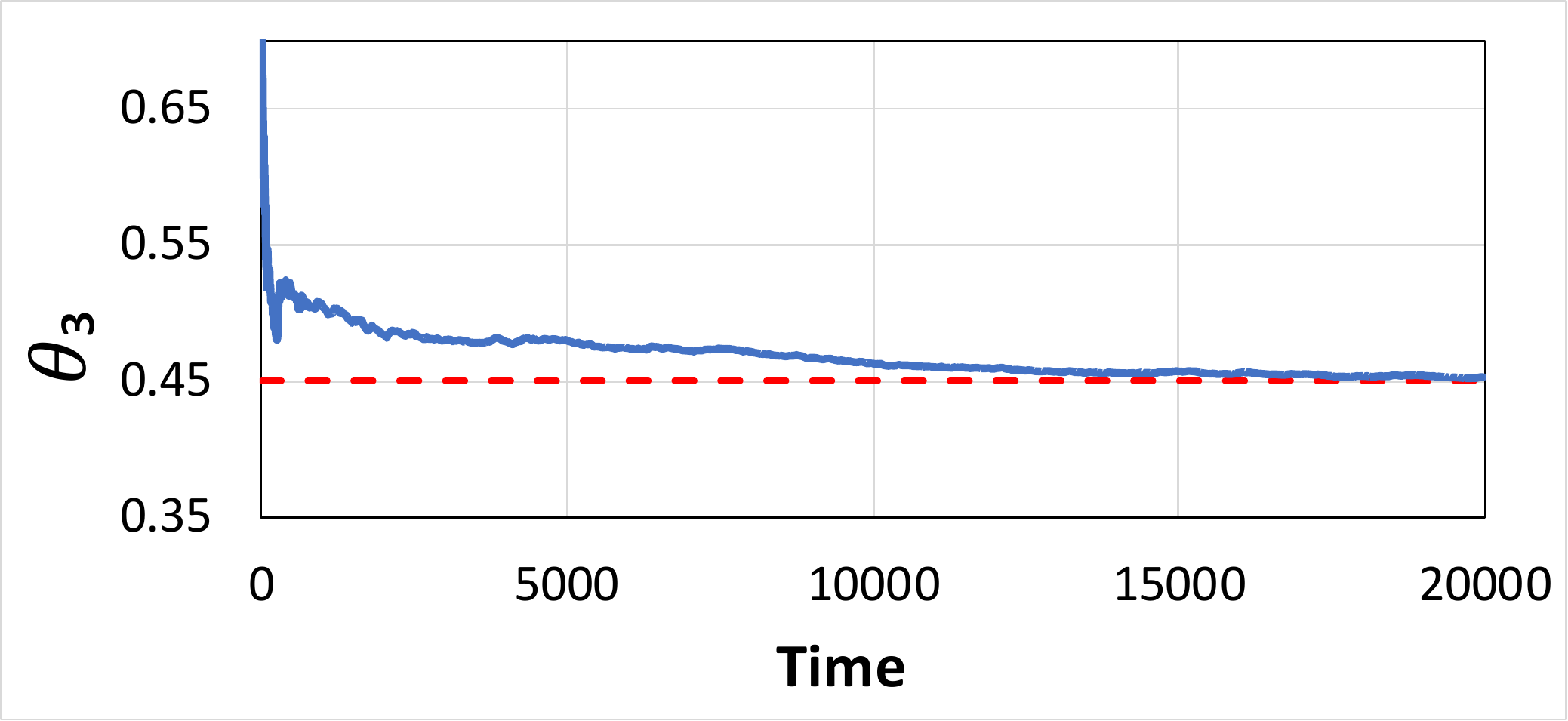} }
   \caption{Trajectories from the execution of \autoref{alg:online_disc} (left panel) and \autoref{alg:online_disc_ml} (right panel) for the estimation of $(\theta_1, \theta_2, \theta_3)$ from Model 3. The horizontal dashed lines in the plots show the true parameter values $(\theta_1^\star,\theta_2^\star,\theta_3^\star)=(2,1,0.45)$. }
    \label{fig:par_est_model3}
\end{figure}

\begin{figure}
\centering
    \subfloat{\includegraphics[height=0.2\textwidth]{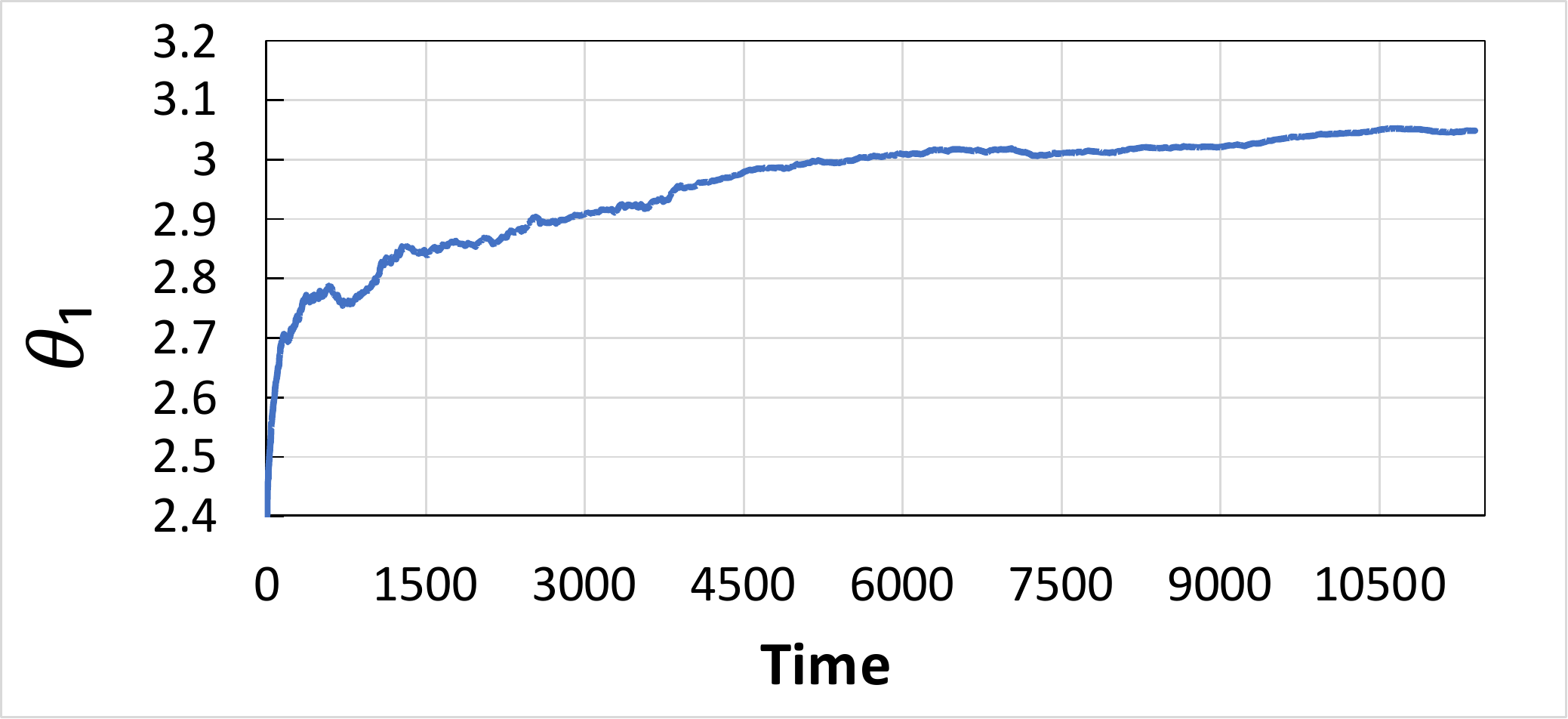} }
    \subfloat{\includegraphics[height=0.2\textwidth]{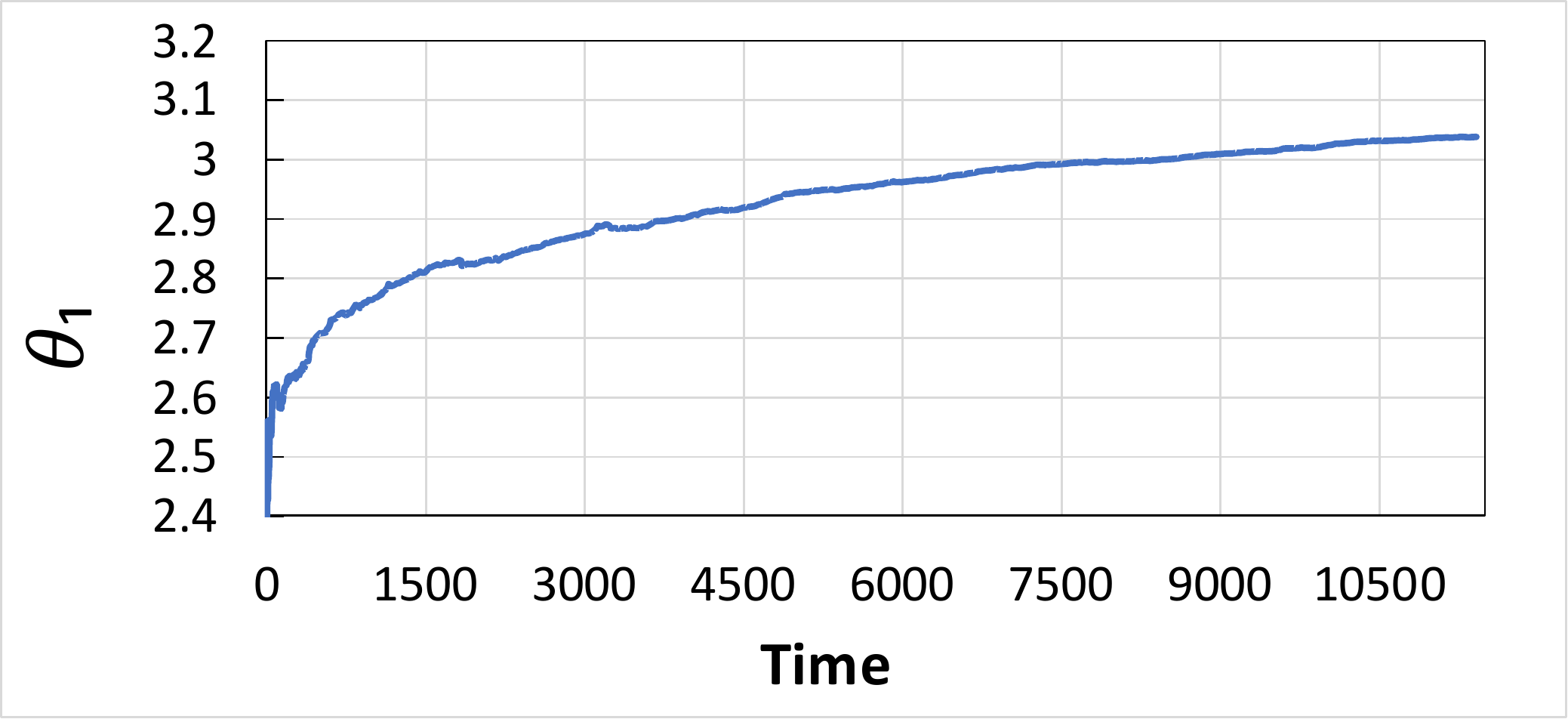} }\\
  \vspace{-10pt} \subfloat{\includegraphics[height=0.2\textwidth]{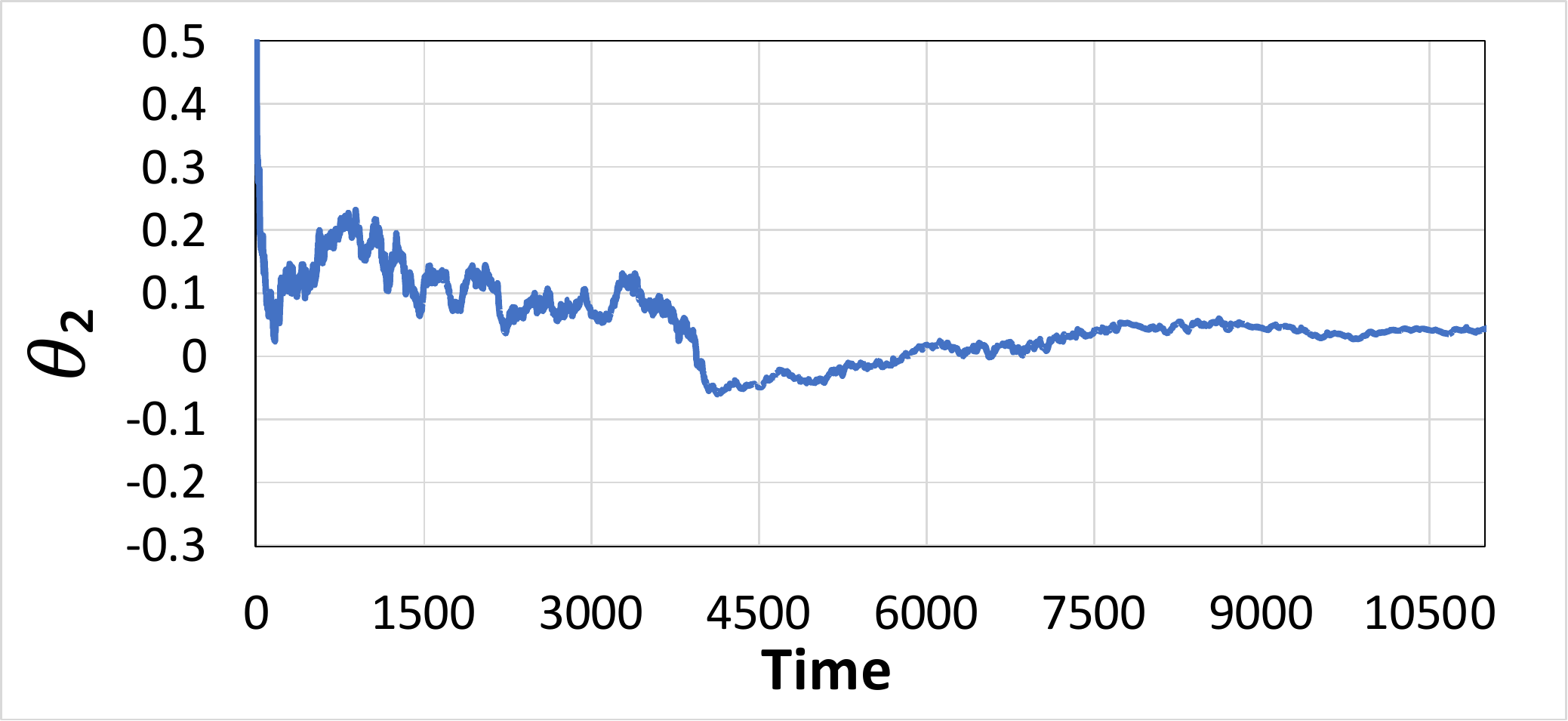}}\, 
   \subfloat{\includegraphics[height=0.2\textwidth]{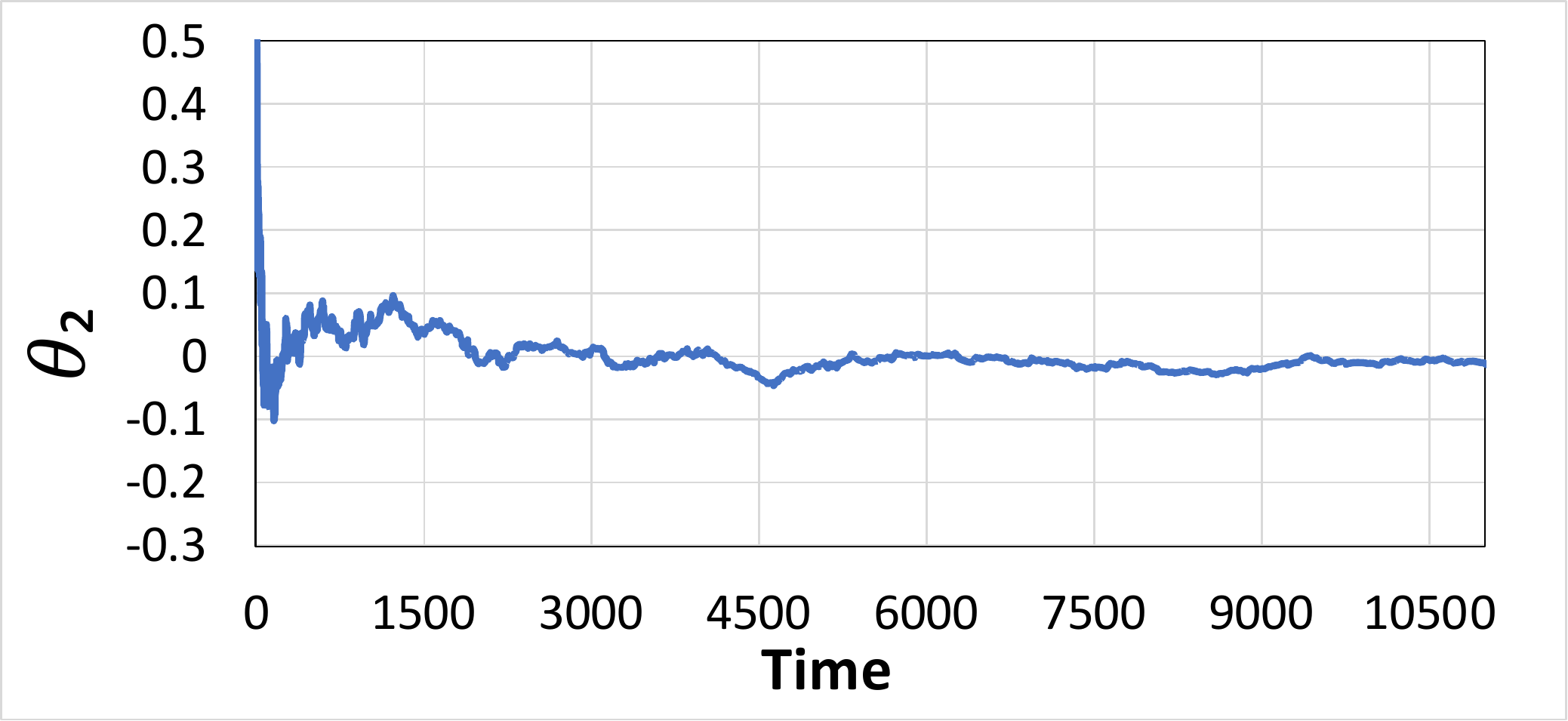} }\\
   \vspace{-10pt}
    \subfloat{\includegraphics[height=0.2\textwidth]{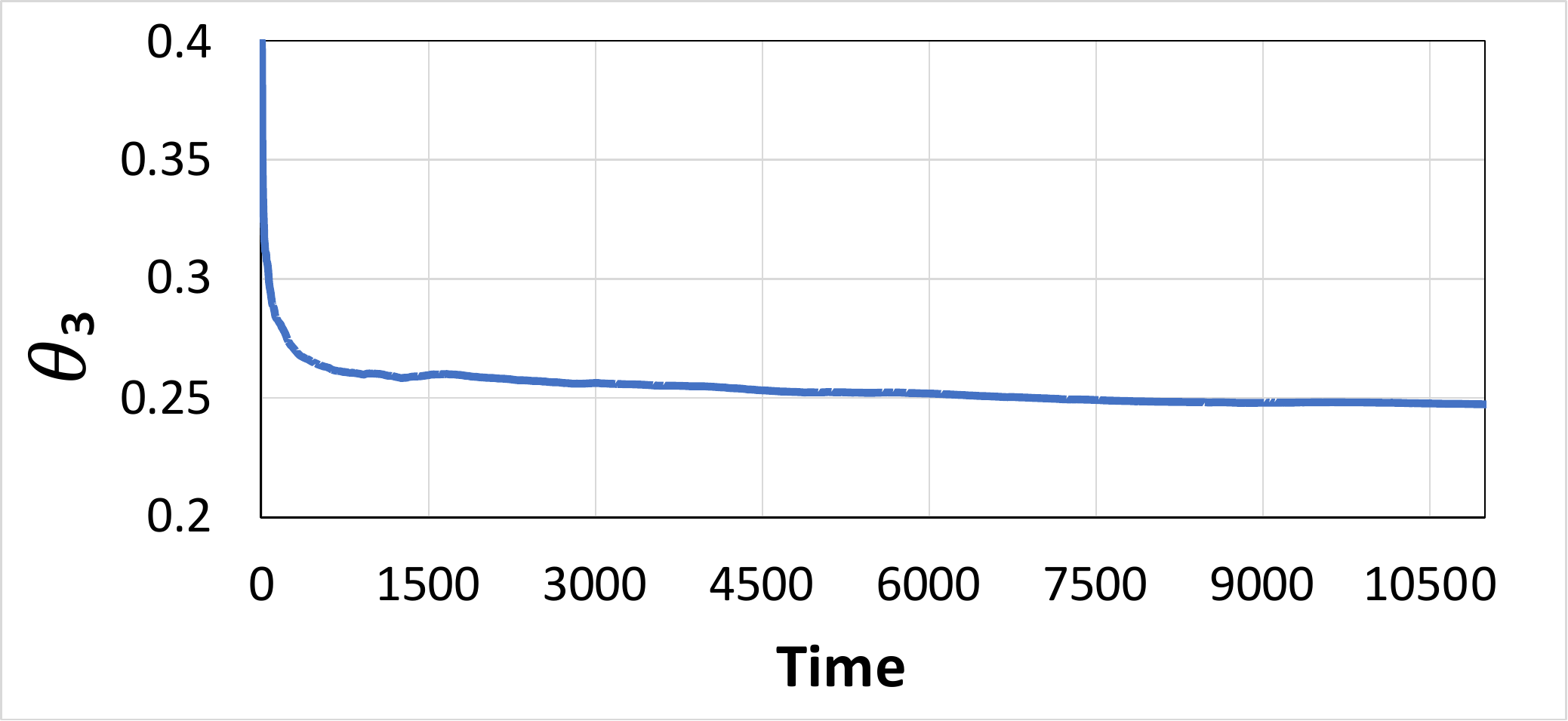}} \,
   \subfloat{\includegraphics[height=0.2\textwidth]{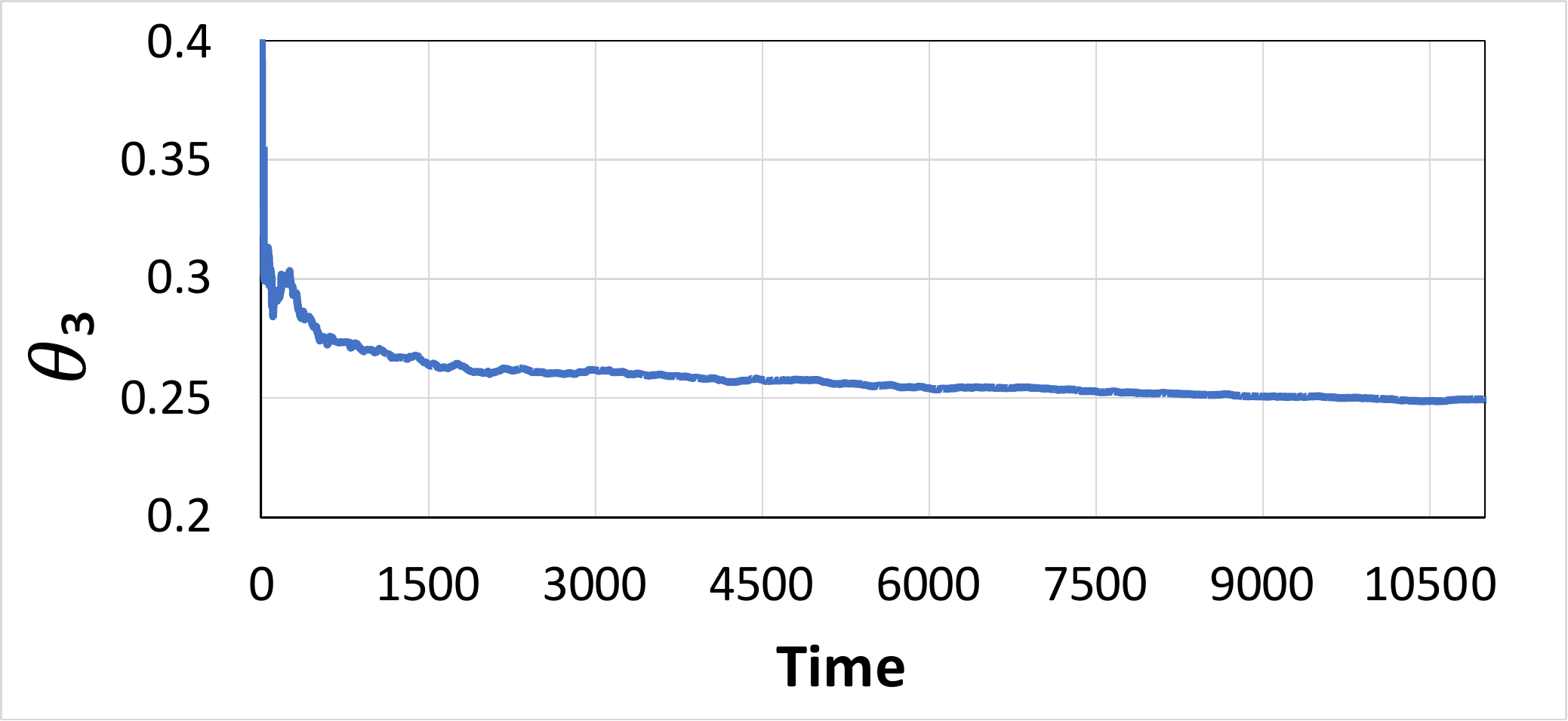} }
   \caption{Trajectories from the execution of \autoref{alg:online_disc} (left panel) and \autoref{alg:online_disc_ml} (right panel) for the estimation of $(\theta_1, \theta_2, \theta_3)$ from Model 4.  }
    \label{fig:par_est_model4}
\end{figure}

\subsubsection*{Acknowledgements}

AJ \& HR were supported by KAUST baseline funding. AB  acknowledges support from a Leverhulme Trust Prize. DC was partially supported by EU Synergy project STUOD - DLV-856408.
NK acknowledges funding by a JP Morgan A.I. Faculty award.

\appendix

\section{Derivation of \eqref{eq:gll}}
\label{sec:Deriv_Eq4}
%The derivation follows \cite{charalambous}.
Recall that  under $\overline{\mathbb{P}}_{\theta}$ the original processes (2.1)-(2.2) have dynamics:
\begin{align*}
dY_t  = dB_t; \quad dX_t  = b_\theta(X_t)dt+ \sigma(X_t)dW_t.
\end{align*}
We consider the processes:
\begin{align*}
dY_t  = dB_t;\quad 
dX_t  = \sigma(X_t)dW_t ,
\end{align*}
so that if $ \mathbb{P}$ denotes their law, 
then we have the Radon-Nikodym derivative: 
%now introduce another probability measure $\widehat{ \mathbb{P}}$ that is equivalent to $\overline{\mathbb{P}}_{\theta}$ under which the original model becomes:
%
%\begin{align*}
%dY_t & = dB_t,\\
%dX_t & = \sigma(X_t)dW_t ,
%\end{align*}
%where the drift terms with the parameter $\theta$ have disappeared, and it is defined by the Radon-Nikodym derivative
\begin{align*}
\frac{d\overline{\mathbb{P}}_{\theta}}{d \mathbb{P}} = \exp\Big\{\int_{0}^T b_{\theta}(X_s)^* a(X_s)^{-1} dX_s - \tfrac{1}{2}\int_{0}^T b_{\theta}(X_s)^*  a(X_s)^{-1} b_{\theta}(X_s)ds\Big\}.
\end{align*}
The log-likelihood is $\log(\gamma_{T,\theta}(1)) = \log \overline{\mathbb{E}}_\theta \left[Z_{T,\theta} | \mathcal{Y}_T \right] $, with 
$Z_{T,\theta}=d\mathbb{P}_{\theta}/d\overline{\mathbb{P}}_\theta$,
thus:
\begin{align*}
\nabla_{\theta}\log(\gamma_{T,\theta}(1)) 
%\nabla_{\theta} \log \overline{\mathbb{E}}_\theta \left[ \frac{d\mathbb{P}_{\theta}}{d\overline{\mathbb{P}}_\theta} \Big| \mathcal{Y}_T \right] \\
= \frac{1}{\overline{\mathbb{E}}_\theta \left[Z_{T,\theta} | \mathcal{Y}_T \right]} \nabla_{\theta} \overline{\mathbb{E}}_\theta \big[ d\mathbb{P}_{\theta}/d\overline{\mathbb{P}}_\theta \big| \mathcal{Y}_T \big].
\end{align*}
For convenience we use the notation $\overline{\mathbb{E}}_{\theta,X}$, $\mathbb{E}_{X}$
for the marginal expectations w.r.t.~the original process $X$ and the $\theta$-free process $X$ defined above, respectively.
Notice  that we can write: 
\begin{align*}
\nabla_{\theta} \overline{\mathbb{E}}_\theta \big[ d\mathbb{P}_{\theta}/d\overline{\mathbb{P}}_\theta \big| \mathcal{Y}_T \big]  &=
\nabla_{\theta} \overline{\mathbb{E}}_{\theta,X} \big[ d\mathbb{P}_{\theta}/d\overline{\mathbb{P}}_\theta\big] \\
&= \nabla_{\theta} \mathbb{E}_{X} \big[ (d\mathbb{P}_{\theta}/d\overline{\mathbb{P}}_\theta) (d\overline{\mathbb{P}}_\theta/d\mathbb{P})\big] \\
& = \mathbb{E}_{X} \big[ Z_{T,\theta}\nabla_{\theta}\log(Z_{T,\theta}) (d\overline{\mathbb{P}}_\theta/d\mathbb{P})\big] +
  \mathbb{E}_{X} \big[ Z_{T,\theta}
  \nabla_{\theta}\log(d\overline{\mathbb{P}}_\theta/d\mathbb{P})
   (d\overline{\mathbb{P}}_\theta/d\mathbb{P})\big] \\
   &= 
  \overline{\mathbb{E}}_\theta[ (\nabla_{\theta}\log(Z_{T,\theta})+  \nabla_{\theta}\log(d\overline{\mathbb{P}}_\theta/d\mathbb{P})) Z_{T,\theta}]  . 
\end{align*}
One can now verify that, for $\lambda_{T,\theta}$ as defined in the main text:
\begin{align*}
\nabla_{\theta}\log(Z_{T,\theta})+  \nabla_{\theta}\log(d\overline{\mathbb{P}}_\theta/d\mathbb{P}) = \lambda_{T,\theta}.
\end{align*}

\section{$\mathbb{L}_r$ Bound for the Discretization Error}
\label{proof_thm_bias}

\subsection{Formulation}

We now consider proving a bound on ($\|\cdot\|_2$ is the $L_2-$norm for vectors)
$$
\mathbb{E}\left[\left\|\nabla_{\theta}\log(\gamma_{T,\theta}(1))  - \nabla_{\theta}\log(\gamma_{T,\theta}^l(1))\right\|_2^r\right]^{1/r}
$$
where
\begin{eqnarray*}
\nabla_{\theta}\log(\gamma_{T,\theta}(1)) & = & \frac{\overline{\mathbb{E}}_{\theta}\,[\,\lambda_{T,\theta}Z_{T,\theta}\,|\,\mathcal{Y}_T\,]}{\overline{\mathbb{E}}_{\theta}\,[\,Z_{T,\theta}\,|\,\mathcal{Y}_T\,]}\\
\nabla_{\theta}\log(\gamma_{T,\theta}^l(1)) & = & \frac{\overline{\mathbb{E}}_{\theta}\,[\,\lambda_{T,\theta}^l(\widetilde{X}_0,\widetilde{X}_{\Delta_l},\dots,\widetilde{X}_T)
\,Z_{T,\theta}^l(\widetilde{X}_0,\widetilde{X}_{\Delta_l},\dots,\widetilde{X}_T)\,|\,\mathcal{Y}_T\,]}{\overline{\mathbb{E}}_{\theta}\,[\,Z_{T,\theta}^l(\widetilde{X}_0,\widetilde{X}_{\Delta_l},\dots,\widetilde{X}_T)\,|\,\mathcal{Y}_T\,]}.
\end{eqnarray*}
We begin by noting that for any random vector $X$ of dimension $d$ and finite $r-$moments that
$$
\mathbb{E}[\|X\|_2^r]^{1/r} \leq \left(\sum_{i=1}^d\mathbb{E}[|X^{(i)}|^r]^{2/r}\right)^{1/2}.
$$
As a result, it will suffice to control for each $i\in\{1,\dots,d_{\theta}\}$
$$
\mathbb{E}_{\theta}\left[\left|\nabla_{\theta}\log(\gamma_{T,\theta}(1))^{(i)}  - \nabla_{\theta}\log(\gamma_{T,\theta}^l(1))^{(i)}\right|^r\right]^{1/r}.
$$
Throughout all of our proofs, $C$ is a deterministic constant whose value will change upon each appearance. In addition we supress any dependencies on $\theta$ below.

\subsection{Technical Results}

\begin{lem}\label{lem:diff1}
Assume (D\ref{hyp_diff:1}). Then for any $(r,T)\in[1,\infty)\times\mathbb{N}$ there exists a $C<+\infty$ such that 
$$
\overline{\mathbb{E}}_{\theta}\left[\frac{1}{\overline{\mathbb{E}}_{\theta}\,[\,Z_{T,\theta}\,|\,\mathcal{Y}_T\,]^r}\right] \leq C.
$$
\end{lem}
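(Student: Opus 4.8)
The plan is to bound the likelihood $\overline{\mathbb{E}}_{\theta}[Z_{T,\theta}\mid\mathcal{Y}_T]=\gamma_{T,\theta}(1)$ from below by an explicit, strictly positive, $\mathcal{Y}_T$-measurable random variable whose negative moments can be computed directly, and then integrate. Writing $Z_{T,\theta}=\exp\{\xi_T\}$ with $\xi_T:=\int_0^T h_{\theta}(X_s)^{*}dY_s-\tfrac12\int_0^T h_{\theta}(X_s)^{*}h_{\theta}(X_s)\,ds$, which is $\overline{\mathbb{P}}_{\theta}$-integrable since $h_{\theta}$ is bounded by (D\ref{hyp_diff:1})(ii), the first step is the conditional Jensen inequality applied to the convex map $\exp$:
\[
\overline{\mathbb{E}}_{\theta}[Z_{T,\theta}\mid\mathcal{Y}_T]\;\geq\;\exp\big\{\overline{\mathbb{E}}_{\theta}[\xi_T\mid\mathcal{Y}_T]\big\}\qquad\text{a.s.}
\]

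Next I would evaluate $\overline{\mathbb{E}}_{\theta}[\xi_T\mid\mathcal{Y}_T]$. Under $\overline{\mathbb{P}}_{\theta}$ the signal $\{X_t\}$ is independent of $\{Y_t\}$, and $\{Y_t\}$ is a standard Brownian motion; hence $g(s):=\overline{\mathbb{E}}_{\theta}[h_{\theta}(X_s)]$ is a deterministic, measurable, bounded ($\|g\|_{\infty}\le\|h_{\theta}\|_{\infty}$) $\mathbb{R}^{d_y}$-valued function, and a Fubini-type interchange for stochastic integrals gives $\overline{\mathbb{E}}_{\theta}[\int_0^T h_{\theta}(X_s)^{*}dY_s\mid\mathcal{Y}_T]=\int_0^T g(s)^{*}dY_s$, while $\overline{\mathbb{E}}_{\theta}[\int_0^T h_{\theta}(X_s)^{*}h_{\theta}(X_s)\,ds\mid\mathcal{Y}_T]\le T\|h_{\theta}\|_{\infty}^2$. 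With $c_0:=\tfrac12 T\|h_{\theta}\|_{\infty}^2$ this yields
\[
\overline{\mathbb{E}}_{\theta}[Z_{T,\theta}\mid\mathcal{Y}_T]^{-r}\;\leq\;\exp\Big\{-r\!\int_0^T g(s)^{*}dY_s+rc_0\Big\}.
\]

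For the last step, under $\overline{\mathbb{P}}_{\theta}$ the Wiener integral $\int_0^T g(s)^{*}dY_s$ is a centred Gaussian with variance $\int_0^T\|g(s)\|_2^2\,ds\le T\|h_{\theta}\|_{\infty}^2$, so its exponential moment is $\exp\{\tfrac{r^2}{2}\int_0^T\|g(s)\|_2^2\,ds\}\le\exp\{\tfrac{r^2}{2}T\|h_{\theta}\|_{\infty}^2\}$; taking $\overline{\mathbb{E}}_{\theta}$ of the previous display then gives the claim with $C=\exp\{rc_0+\tfrac{r^2}{2}T\|h_{\theta}\|_{\infty}^2\}<+\infty$. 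The only non-routine ingredient is the conditional Fubini identity for the It\^o integral: I would justify it by approximating the bounded, progressively measurable integrand $h_{\theta}(X_s)$ by elementary processes and passing to the $L^2(\overline{\mathbb{P}}_{\theta})$ limit, using $X\perp Y$ so that each elementary-process conditional expectation factorizes into a product of a deterministic $\overline{\mathbb{E}}_{\theta}[h_{\theta}(X_{t_i})]$ term and a $Y$-increment; this is the step where I expect the (mild) technical care to go, everything else being elementary given (D\ref{hyp_diff:1}). As a sanity check, the bound is tight up to constants: conditional Jensen for $x\mapsto x^{-r}$ gives $\overline{\mathbb{E}}_{\theta}[\gamma_{T,\theta}(1)^{-r}]\ge\overline{\mathbb{E}}_{\theta}[\gamma_{T,\theta}(1)]^{-r}=1$.
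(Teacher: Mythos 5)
Your proof is correct, but it takes a different (and slightly longer) route than the paper. The paper applies conditional Jensen to the convex map $x\mapsto x^{-r}$, giving $\overline{\mathbb{E}}_{\theta}[Z_{T,\theta}\mid\mathcal{Y}_T]^{-r}\le\overline{\mathbb{E}}_{\theta}[Z_{T,\theta}^{-r}\mid\mathcal{Y}_T]$, so that the tower property immediately reduces the claim to bounding the \emph{unconditional} negative moment $\overline{\mathbb{E}}_{\theta}[Z_{T,\theta}^{-r}]$; that quantity is then controlled by conditioning on the signal path, under which $\int_0^T h_{\theta}(X_s)^{*}dY_s$ is Gaussian with variance at most $T\|h_{\theta}\|_{\infty}^2$. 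You instead apply Jensen to $\exp$ inside the conditional expectation, which forces you to actually evaluate $\overline{\mathbb{E}}_{\theta}[\xi_T\mid\mathcal{Y}_T]$ and hence to justify the conditional stochastic-Fubini identity $\overline{\mathbb{E}}_{\theta}[\int_0^T h_{\theta}(X_s)^{*}dY_s\mid\mathcal{Y}_T]=\int_0^T \overline{\mathbb{E}}_{\theta}[h_{\theta}(X_s)]^{*}dY_s$ --- a correct but avoidable piece of machinery (your $L^2$ elementary-process argument for it is the standard one and works, using $X\perp Y$ under $\overline{\mathbb{P}}_{\theta}$). What your route buys is a fully explicit constant $C=\exp\{rc_0+\tfrac{r^2}{2}T\|h_{\theta}\|_{\infty}^2\}$ and a self-contained reduction to a Wiener integral of a deterministic function; what the paper's route buys is brevity, since after the $x^{-r}$ Jensen step no conditional expectation given $\mathcal{Y}_T$ ever needs to be computed. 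Both yield constants of the same order, and your sanity check ($\overline{\mathbb{E}}_{\theta}[\gamma_{T,\theta}(1)^{-r}]\ge 1$, since $Z$ is a mean-one martingale under Novikov) is a nice touch.
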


\begin{proof}
By the conditional Jensen inequality
$$
\overline{\mathbb{E}}_{\theta}\left[\frac{1}{\overline{\mathbb{E}}_{\theta}\,[\,Z_{T,\theta}\,|\,\mathcal{Y}_T\,]^r}\right] \leq 
\overline{\mathbb{E}}_{\theta}\left[Z_{T,\theta}^{-r}\right].
$$
It is now simple to use the properties of the process under study to deduce the result.
\end{proof}

\begin{rem}\label{rem:z_bd}
A standard result is that $\overline{\mathbb{E}}_{\theta}\left[Z_{T,\theta}^{r}\right]\leq C$ for any fixed $r\in\mathbb{R}$.
\end{rem}

\begin{lem}\label{lem:diff2}
Assume (D\ref{hyp_diff:1}). Then for any $(r,T)\in[1,\infty)\times\mathbb{N}$ there exists a $C<+\infty$ such that for any $l\in\mathbb{N}_0$
$$
\overline{\mathbb{E}}_{\theta}\left[\frac{1}{\overline{\mathbb{E}}_{\theta}\,[\,Z_{T,\theta}^l(\widetilde{X}_0,\widetilde{X}_{\Delta_l},\dots,\widetilde{X}_T)\,|\,\mathcal{Y}_T\,]^r}\right] \leq C.
$$
\end{lem}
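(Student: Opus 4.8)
The plan is to reproduce the argument for \autoref{thm:bias}'s auxiliary \ Lemma~\ref{lem:diff1} almost verbatim, replacing the bound of Remark~\ref{rem:z_bd} by its discretized — and, crucially, $l$-uniform — analogue. First, since $Z_{T,\theta}^l>0$ and the map $x\mapsto x^{-r}$ is convex on $(0,\infty)$ for $r\ge 1$, the conditional Jensen inequality gives, $\overline{\mathbb{P}}_{\theta}$-a.s.,
$$
\frac{1}{\overline{\mathbb{E}}_{\theta}\,[\,Z_{T,\theta}^l(\widetilde{X}_0,\dots,\widetilde{X}_T)\,|\,\mathcal{Y}_T\,]^r}\ \le\ \overline{\mathbb{E}}_{\theta}\,\big[\,(Z_{T,\theta}^l(\widetilde{X}_0,\dots,\widetilde{X}_T))^{-r}\,\big|\,\mathcal{Y}_T\,\big].
$$
Taking $\overline{\mathbb{E}}_{\theta}$ of both sides and using the tower property reduces the claim to showing $\overline{\mathbb{E}}_{\theta}[(Z_{T,\theta}^l)^{-r}]\le C$ with $C<\infty$ independent of $l$.

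To control this last quantity I would use the explicit product form of $Z_{T,\theta}^l$ together with the fact that, under $\overline{\mathbb{P}}_{\theta}$, the process $\{Y_t\}$ is a Brownian motion independent of $\{\widetilde{X}_t\}$ (the Euler path \eqref{eq:disc_state} being a measurable functional of $\{W_t\}$, which is $\overline{\mathbb{P}}_{\theta}$-independent of $\{Y_t\}$). Writing $\xi_k:=h_{\theta}(\widetilde{X}_{k\Delta_l})$ and $\Delta Y_k:=Y_{(k+1)\Delta_l}-Y_{k\Delta_l}$, conditioning on $\mathcal{G}:=\sigma(\{\widetilde{X}_t\}_{0\le t\le T})$ makes the $\Delta Y_k$ independent $\mathcal{N}_{d_y}(0,\Delta_l I_{d_y})$ vectors that are jointly independent of all the $\xi_j$; the Gaussian moment generating function then yields
$$
\overline{\mathbb{E}}_{\theta}\big[\,(Z_{T,\theta}^l)^{-r}\,\big|\,\mathcal{G}\,\big]\ =\ \exp\Big\{\tfrac{r(r+1)}{2}\sum_{k=0}^{T/\Delta_l-1}\Delta_l\,\|\xi_k\|_2^2\Big\}.
$$
By (D\ref{hyp_diff:1})(ii), $\|\xi_k\|_2\le\|h_{\theta}\|_\infty<\infty$, and $\sum_{k}\Delta_l=T$, so the right-hand side is bounded by the deterministic constant $\exp\{\tfrac{r(r+1)}{2}T\|h_{\theta}\|_\infty^2\}$, uniformly in $l$; taking expectations completes the argument.

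I do not expect a genuine obstacle: the computation is routine, and the one point requiring attention is the uniformity in $l$, which is automatic because the piecewise-constant drift $h_{\theta}(\widetilde{X}_{k\Delta_l})$ is bounded by the \emph{same} constant $\|h_{\theta}\|_\infty$ for every $l$ and the Riemann sum $\sum_k\Delta_l$ telescopes to $T$ regardless of the mesh size. Equivalently, $Z_{T,\theta}^l=\mathcal{E}\big(\int_0^{\cdot}H^l_s\,dY_s\big)_T$ for the bounded, predictable step process $H^l_t=h_{\theta}(\widetilde{X}_{\Delta_l\lfloor t/\Delta_l\rfloor})$, so the standard negative-moment estimate for stochastic exponentials with uniformly bounded integrand applies with a constant not depending on $l$, exactly as in the argument underlying Remark~\ref{rem:z_bd}.
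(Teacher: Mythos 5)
Your proof is correct and follows the same route as the paper: the paper likewise begins with the conditional Jensen inequality to reduce the claim to bounding $\overline{\mathbb{E}}_{\theta}[(Z_{T,\theta}^l)^{-r}]$ uniformly in $l$, and then delegates that bound to a citation (\cite[eq.~(20)--(21)]{high_freq_ml}). Your explicit Gaussian moment-generating-function computation, conditioning on the signal path and using the boundedness of $h_\theta$ together with $\sum_k\Delta_l=T$, is exactly the content of that cited step, so your argument is a self-contained version of the paper's proof.
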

\begin{proof}
By the conditional Jensen inequality
$$
\overline{\mathbb{E}}_{\theta}\left[\frac{1}{\overline{\mathbb{E}}_{\theta}\,[\,Z_{T,\theta}^l(\widetilde{X}_0,\widetilde{X}_{\Delta_l},\dots,\widetilde{X}_T)\,|\,\mathcal{Y}_T\,]^r}\right] \leq 
\overline{\mathbb{E}}_{\theta}\left[Z_{T,\theta}^l(\widetilde{X}_0,\widetilde{X}_{\Delta_l},\dots,\widetilde{X}_T)^{-r}\right].
$$
The result now follows by the arguments stated in \cite[eq.~(20)-(21)]{high_freq_ml}.
\end{proof}

\begin{rem}\label{rem:z_lbd}
By the arguments stated in \cite[eq.~(20)-(21)]{high_freq_ml}  $\overline{\mathbb{E}}_{\theta}\left[Z_{T,\theta}^l(\widetilde{X}_0,\widetilde{X}_{\Delta_l},\dots,\widetilde{X}_T)^{r}\right]\leq C$ for any fixed $r\in\mathbb{R}$ and $C$ does not depend on $l$.
\end{rem}

\begin{lem}\label{lem:diff3}
Assume (D\ref{hyp_diff:1}). Then for any $(r,T)\in[1,\infty)\times\mathbb{N}$ there exists a $C<+\infty$ such that for any $(l,i)\in\mathbb{N}_0\times\{1,\dots,d_{\theta}\}$
$$
\overline{\mathbb{E}}_{\theta}\left[\left|\overline{\mathbb{E}}_{\theta}\,[\,\lambda_{T,\theta}^l(\widetilde{X}_0,\widetilde{X}_{\Delta_l},\dots,\widetilde{X}_T)^{(i)}
\,Z_{T,\theta}^l(\widetilde{X}_0,\widetilde{X}_{\Delta_l},\dots,\widetilde{X}_T)\,|\,\mathcal{Y}_T\,]\right|^r\right] \leq C.
$$
\end{lem}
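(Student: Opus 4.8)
The plan is to reduce the statement, via the conditional Jensen inequality and Cauchy--Schwarz, to separate uniform-in-$l$ moment bounds for $\lambda_{T,\theta}^l$ and $Z_{T,\theta}^l$ under $\overline{\mathbb{P}}_\theta$. Write $\lambda^{(i)}:=\lambda_{T,\theta}^l(\widetilde{X}_0,\ldots,\widetilde{X}_T)^{(i)}$ and $Z^l:=Z_{T,\theta}^l(\widetilde{X}_0,\ldots,\widetilde{X}_T)\geq 0$. Since $x\mapsto|x|^r$ is convex for $r\geq 1$, conditional Jensen followed by the tower property gives
\[
\overline{\mathbb{E}}_{\theta}\Big[\big|\overline{\mathbb{E}}_{\theta}[\lambda^{(i)}Z^l\,|\,\mathcal{Y}_T]\big|^r\Big]\le \overline{\mathbb{E}}_{\theta}\big[|\lambda^{(i)}|^r (Z^l)^r\big]\le \overline{\mathbb{E}}_{\theta}\big[|\lambda^{(i)}|^{2r}\big]^{1/2}\,\overline{\mathbb{E}}_{\theta}\big[(Z^l)^{2r}\big]^{1/2}.
\]
By Remark~\ref{rem:z_lbd} the last factor is bounded by a constant independent of $l$, so the whole claim reduces to showing $\sup_{l\in\mathbb{N}_0}\overline{\mathbb{E}}_{\theta}[|\lambda^{(i)}|^{2r}]<\infty$.

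For that, decompose $\lambda^{(i)}=A^l+B^l+C^l$ according to the three sums in \eqref{eq:defL}. Under $\overline{\mathbb{P}}_\theta$ the process $\{\widetilde{X}_{k\Delta_l}\}_k$ is adapted to $\mathcal{G}_k:=\sigma(\{W_s\}_{s\le k\Delta_l})$, while $\{Y_{(k+1)\Delta_l}-Y_{k\Delta_l}\}_k$ are i.i.d.\ $\mathcal{N}_{d_y}(0,\Delta_l I_{d_y})$ and independent of $\{W_t\}_t$. Hence, with respect to the discrete filtration $\mathcal{H}_k:=\sigma\big(\mathcal{G}_{k},\{Y_{(j+1)\Delta_l}-Y_{j\Delta_l}\}_{0\le j\le k-1}\big)$, the sums $A^l=\sum_{k}\big((\nabla_\theta b_\theta(\widetilde{X}_{k\Delta_l}))^*a(\widetilde{X}_{k\Delta_l})^{-1}\sigma(\widetilde{X}_{k\Delta_l})(W_{(k+1)\Delta_l}-W_{k\Delta_l})\big)^{(i)}$ and $B^l=\sum_k \big((\nabla_\theta h_\theta(\widetilde{X}_{k\Delta_l}))^*(Y_{(k+1)\Delta_l}-Y_{k\Delta_l})\big)^{(i)}$ are sums of martingale differences. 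Assumption (D\ref{hyp_diff:1}) gives deterministic bounds on $\sigma$, on $a^{-1}$ (uniform ellipticity), and on $\nabla_\theta b_\theta,\nabla_\theta h_\theta$, so the martingale-difference coefficients are uniformly bounded; the discrete Burkholder--Davis--Gundy inequality then yields
\[
\overline{\mathbb{E}}_{\theta}[|A^l|^{2r}]\le C\,\overline{\mathbb{E}}_{\theta}\Big[\Big(\sum_{k=0}^{T/\Delta_l-1}\big\|W_{(k+1)\Delta_l}-W_{k\Delta_l}\big\|_2^2\Big)^{r}\Big]\le C(r,T),
\]
where the final bound follows either by a second Rosenthal/BDG estimate on the i.i.d.\ chi-squared sum, or simply by noting that this sum has deterministic mean $d_xT$ and variance $O(\Delta_l)$; the same argument bounds $\overline{\mathbb{E}}_{\theta}[|B^l|^{2r}]$ uniformly in $l$. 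The remaining term $C^l=-\sum_k \big((\nabla_\theta h_\theta(\widetilde{X}_{k\Delta_l}))^*h_\theta(\widetilde{X}_{k\Delta_l})\big)^{(i)}\Delta_l$ satisfies $|C^l|\le C\sum_k\Delta_l=CT$ pointwise, since $\nabla_\theta h_\theta$ and $h_\theta$ are bounded. Combining the three bounds by Minkowski's inequality gives $\sup_l\overline{\mathbb{E}}_{\theta}[|\lambda^{(i)}|^{2r}]<\infty$, completing the argument.

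The only delicate point is the filtration bookkeeping under $\overline{\mathbb{P}}_\theta$: one must exploit that $Y$ is a Brownian motion independent of $X$ and that $\widetilde{X}_{k\Delta_l}$ is $\mathcal{G}_k$-measurable, so that $A^l$ and $B^l$ are genuine discrete martingales, which is precisely what allows the martingale moment inequalities to control a sum of $\mathcal{O}(\Delta_l^{-1})$ terms each of ``size'' $\mathcal{O}(\Delta_l^{1/2})$ without any blow-up in $l$. Everything else is a routine consequence of the boundedness assumptions in (D\ref{hyp_diff:1}), so I do not anticipate a genuine obstacle, only the need for care with the constants.
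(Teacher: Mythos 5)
Your proposal is correct and follows essentially the same route as the paper: reduce via conditional Jensen and Cauchy--Schwarz together with Remark~\ref{rem:z_lbd} to a uniform-in-$l$ bound on $\overline{\mathbb{E}}_{\theta}[|\lambda_{T,\theta}^{l,(i)}|^{2r}]$, split $\lambda^{l,(i)}$ into its three constituent sums (the paper uses the $C_{2r}$-inequality where you use Minkowski), control the two stochastic-increment sums by the Burkholder--Davis--Gundy inequality plus the boundedness of the integrands under (D\ref{hyp_diff:1}), and bound the Riemann-sum term deterministically. Your explicit filtration bookkeeping just fills in what the paper dismisses as ``a standard argument in the literature.''
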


\begin{proof}
Using the conditional Jensen inequality, Cauchy-Schwarz and Remark \ref{rem:z_lbd} it suffices to bound
$$
\overline{\mathbb{E}}_{\theta}\left[\left|\overline{\mathbb{E}}_{\theta}\,[\,\lambda_{T,\theta}^l(\widetilde{X}_0,\widetilde{X}_{\Delta_l},\dots,\widetilde{X}_T)^{(i)}\right|^{2r}\right].
$$
Via the $C_{2r}-inequality$ one can then simply focus on the 3 terms
\begin{eqnarray*}
T_1 & = & \overline{\mathbb{E}}_{\theta}\left[\left|\sum_{k=0}^{T/\Delta_l-1}\,\left(\frac{\partial}{\partial\theta_i} b_{\theta}(x_{k\Delta_l})\right)^*a(x_{k\Delta_l})^{-1}\sigma(x_{k\Delta_l})(W_{(k+1)\Delta_l}-W_{k\Delta_l})\right|^{2r}\right] \\
T_2 & = & \overline{\mathbb{E}}_{\theta}\left[\left|\sum_{k=0}^{T/\Delta_l-1}
\left(\frac{\partial}{\partial\theta_i}h_{\theta}(x_{k\Delta_l})\right)^*(Y_{(k+1)\Delta_l}-Y_{k\Delta_l})
\right|^{2r}\right]\\
T_3 & = & \overline{\mathbb{E}}_{\theta}\left[\left|\sum_{k=0}^{T/\Delta_l-1}
\left(\frac{\partial}{\partial\theta_i} h_{\theta}(x_{k\Delta_l})\right)^*h_{\theta}(x_{k\Delta_l})\Delta_l
\right|^{2r}\right].
\end{eqnarray*}
To bound $T_1$ and $T_2$ one can simply apply the Burkholder-Gundy-Davis inequality and combine this with the boundedness of the terms which are functions of $x$; this is a standard argument in the literature. The bound on $T_3$ is immediate by the boundedness of the summands. This concludes the proof.
\end{proof}

\begin{rem}\label{rem:lam_lbd}
It is more-or-less the same argument as in the proof of Lemma \ref{lem:diff3} to deduce that $\overline{\mathbb{E}}_{\theta}\left[\left|\lambda_{T,\theta}^{(i)}\right|^r\right]\leq C$.
\end{rem}

\begin{lem}\label{lem:diff4}
Assume (D\ref{hyp_diff:1}). Then for any $(r,T)\in[1,\infty)\times\mathbb{N}$ there exists a $C<+\infty$ such that for any $(l,i)\in\mathbb{N}_0\times\{1,\dots,d_{\theta}\}$
$$
\overline{\mathbb{E}}_{\theta}\left[\left|
\overline{\mathbb{E}}_{\theta}\,[\,\lambda_{T,\theta}^{(i)}Z_{T,\theta}\,|\,\mathcal{Y}_T\,]-
\overline{\mathbb{E}}_{\theta}\,[\,\lambda_{T,\theta}^l(\widetilde{X}_0,\widetilde{X}_{\Delta_l},\dots,\widetilde{X}_T)^{(i)}
\,Z_{T,\theta}^l(\widetilde{X}_0,\widetilde{X}_{\Delta_l},\dots,\widetilde{X}_T)\,|\,\mathcal{Y}_T\,]\right|^r\right]^{1/r} \leq C\Delta_l^{1/2}.
$$
\end{lem}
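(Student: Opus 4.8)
The plan is to reduce the whole estimate to two inputs about the Euler scheme: the strong $\mathbb{L}_q$ bound $\sup_{0\le k\le T/\Delta_l}\overline{\mathbb{E}}_\theta[\|X_{k\Delta_l}-\widetilde X_{k\Delta_l}\|_2^q]^{1/q}\le C\Delta_l^{1/2}$ for every $q\in[1,\infty)$, and the elementary modulus-of-continuity bound $\overline{\mathbb{E}}_\theta[\|X_t-X_s\|_2^q]^{1/q}\le C|t-s|^{1/2}$, which follows from \eqref{eq:state}, boundedness of $b_\theta,\sigma$ and the Burkholder--Davis--Gundy (BDG) inequality. Writing $\lfloor t\rfloor_l:=\Delta_l\lfloor t/\Delta_l\rfloor$, these combine to $\sup_{t\in[0,T]}\overline{\mathbb{E}}_\theta[\|X_t-\widetilde X_{\lfloor t\rfloor_l}\|_2^q]^{1/q}\le C\Delta_l^{1/2}$. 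Abbreviating $\lambda_{T,\theta}^l:=\lambda_{T,\theta}^l(\widetilde X_0,\dots,\widetilde X_T)$ and $Z_{T,\theta}^l:=Z_{T,\theta}^l(\widetilde X_0,\dots,\widetilde X_T)$, I would first decompose
\begin{equation*}
\lambda_{T,\theta}^{(i)}Z_{T,\theta}-(\lambda_{T,\theta}^l)^{(i)}Z_{T,\theta}^l=\big(\lambda_{T,\theta}^{(i)}-(\lambda_{T,\theta}^l)^{(i)}\big)Z_{T,\theta}+(\lambda_{T,\theta}^l)^{(i)}\big(Z_{T,\theta}-Z_{T,\theta}^l\big),
\end{equation*}
apply conditional Jensen and the tower property to pass from $|\overline{\mathbb{E}}_\theta[\,\cdot\,|\mathcal{Y}_T]|^r$ to $\overline{\mathbb{E}}_\theta[|\cdot|^r]$, and then use the $C_r$-inequality and Cauchy--Schwarz to bound the $r$-th moment of the difference by
\begin{equation*}
C\Big(\overline{\mathbb{E}}_\theta\big[|\lambda_{T,\theta}^{(i)}-(\lambda_{T,\theta}^l)^{(i)}|^{2r}\big]^{1/2}\overline{\mathbb{E}}_\theta\big[Z_{T,\theta}^{2r}\big]^{1/2}+\overline{\mathbb{E}}_\theta\big[|(\lambda_{T,\theta}^l)^{(i)}|^{2r}\big]^{1/2}\overline{\mathbb{E}}_\theta\big[|Z_{T,\theta}-Z_{T,\theta}^l|^{2r}\big]^{1/2}\Big).
\end{equation*}
By Remark \ref{rem:z_bd} and (the proof of) Lemma \ref{lem:diff3}, $\overline{\mathbb{E}}_\theta[Z_{T,\theta}^{2r}]$ and $\overline{\mathbb{E}}_\theta[|(\lambda_{T,\theta}^l)^{(i)}|^{2r}]$ are bounded uniformly in $l$, so it remains to show that both $\overline{\mathbb{E}}_\theta[|\lambda_{T,\theta}^{(i)}-(\lambda_{T,\theta}^l)^{(i)}|^{2r}]^{1/(2r)}$ and $\overline{\mathbb{E}}_\theta[|Z_{T,\theta}-Z_{T,\theta}^l|^{2r}]^{1/(2r)}$ are $\mathcal{O}(\Delta_l^{1/2})$.

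For the $\lambda$-difference I would split into the three summands of $\lambda$ via the $C_{2r}$-inequality. The Lebesgue term equals $\int_0^T[\psi_h(X_t)-\psi_h(\widetilde X_{\lfloor t\rfloor_l})]\,dt$ with $\psi_h:=(\nabla_\theta h_\theta)^*h_\theta$; since $h_\theta,\nabla_\theta h_\theta$ are bounded and Lipschitz by (D\ref{hyp_diff:1}), $\psi_h$ is componentwise Lipschitz, so Minkowski together with the combined Euler bound gives $\mathcal{O}(\Delta_l^{1/2})$. For the two stochastic-integral terms, note that under $\overline{\mathbb{P}}_\theta$ both $W$ and $Y$ are Brownian motions, hence the Euler sums are exactly Itô integrals of the piecewise-constant integrands $\phi_\theta(\widetilde X_{\lfloor t\rfloor_l})$ and $\nabla_\theta h_\theta(\widetilde X_{\lfloor t\rfloor_l})$ (with $\phi_\theta=(\nabla_\theta b_\theta)^*a^{-1}\sigma$ as in (D\ref{hyp_diff:1})(iv)); each difference is then a martingale integral $\int_0^T[\phi_\theta(X_t)-\phi_\theta(\widetilde X_{\lfloor t\rfloor_l})]\,dW_t$, and BDG followed by the Lipschitz property of $\phi_\theta$ (resp.\ of $\nabla_\theta h_\theta$) and the combined Euler bound again yields $\mathcal{O}(\Delta_l^{1/2})$ — this is precisely where assumption (D\ref{hyp_diff:1})(iv) is used, so that the composite-coefficient difference is controlled by $\|X_t-\widetilde X_{\lfloor t\rfloor_l}\|_2$. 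For the $Z$-difference, use $|e^a-e^b|\le(e^a+e^b)|a-b|$, Cauchy--Schwarz, the exponential-moment bounds of Remarks \ref{rem:z_bd}--\ref{rem:z_lbd}, and the estimate $\overline{\mathbb{E}}_\theta[|\log Z_{T,\theta}-\log Z_{T,\theta}^l|^{4r}]^{1/(4r)}\le C\Delta_l^{1/2}$, which has the same structure as the $\lambda$-difference bound (one Lebesgue term involving $\|h_\theta\|_2^2$, one Itô term in $dY$ involving $h_\theta$) and is handled identically. Collecting the pieces gives the claimed $C\Delta_l^{1/2}$ bound.

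The main obstacle is the strong Euler estimate $\overline{\mathbb{E}}_\theta[\|X_{k\Delta_l}-\widetilde X_{k\Delta_l}\|_2^q]^{1/q}\le C\Delta_l^{1/2}$ itself: since (D\ref{hyp_diff:1}) only asks $\sigma$ continuous (not Lipschitz) and $b_\theta$ merely bounded measurable, this is not the textbook Euler--Maruyama statement and has to be obtained through the uniform ellipticity of $a$ together with a change of measure removing the drift, following the arguments behind \cite{high_freq_ml} (see also the filtering-oriented SDE estimates of \cite{picard,talay}). Everything else in the argument — the moment bounds, BDG, and the reductions to $\|X_t-\widetilde X_{\lfloor t\rfloor_l}\|_2$ via the Lipschitz conditions in (D\ref{hyp_diff:1})(ii)--(iv) — is routine once this input is granted.
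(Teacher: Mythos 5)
Your proposal is correct and follows essentially the same route as the paper: the paper likewise splits $\lambda_{T,\theta}^{(i)}Z_{T,\theta}-(\lambda_{T,\theta}^l)^{(i)}Z_{T,\theta}^l$ into the two cross terms (pairing $Z$ and $\lambda$ with the opposite levels to yours, which is immaterial since both sets of moment bounds are available), applies conditional Jensen, Minkowski and Cauchy--Schwarz, and then reduces to $\overline{\mathbb{E}}_\theta[|\lambda_{T,\theta}^{(i)}-(\lambda_{T,\theta}^l)^{(i)}|^{2r}]^{1/(2r)}\le C\Delta_l^{1/2}$ (via BDG, Lipschitz properties and Euler discretization estimates, citing Crisan's filtering-discretization calculations) and $\overline{\mathbb{E}}_\theta[|Z_{T,\theta}-Z_{T,\theta}^l|^{2r}]^{1/(2r)}\le C\Delta_l^{1/2}$ (citing Lemma A.5 of the multilevel particle filter paper). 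Your additional detail on the $Z$-difference via $|e^a-e^b|\le(e^a+e^b)|a-b|$ and your flagging of the non-Lipschitz strong Euler rate as the substantive external input are consistent with what those cited results supply.
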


\begin{proof}
Applying conditional Jensen and Minkowski we have the upper-bound
$$
\overline{\mathbb{E}}_{\theta}\left[\left|
\overline{\mathbb{E}}_{\theta}\,[\,\lambda_{T,\theta}^{(i)}Z_{T,\theta}\,|\,\mathcal{Y}_T\,]-
\overline{\mathbb{E}}_{\theta}\,[\,\lambda_{T,\theta}^l(\widetilde{X}_0,\widetilde{X}_{\Delta_l},\dots,\widetilde{X}_T)^{(i)}
\,Z_{T,\theta}^l(\widetilde{X}_0,\widetilde{X}_{\Delta_l},\dots,\widetilde{X}_T)\,|\,\mathcal{Y}_T\,]\right|^r\right]^{1/r}\leq 
T_1 + T_2
$$
where
\begin{eqnarray*}
T_1 & = & \overline{\mathbb{E}}_{\theta}\left[\left|\left\{\lambda_{T,\theta}^l(\widetilde{X}_0,\widetilde{X}_{\Delta_l},\dots,\widetilde{X}_T)^{(i)}-\lambda_{T,\theta}^{(i)}\right\}
\,Z_{T,\theta}^l(\widetilde{X}_0,\widetilde{X}_{\Delta_l},\dots,\widetilde{X}_T)\right|^r\right]^{1/r}\\
T_2 & = & \overline{\mathbb{E}}_{\theta}\left[\left|
\left\{Z_{T,\theta}^l(\widetilde{X}_0,\widetilde{X}_{\Delta_l},\dots,\widetilde{X}_T)-Z_{T,\theta}\right\}\lambda_{T,\theta}^{(i)}
\right|^r\right]^{1/r}.
\end{eqnarray*}
For $T_1$ one can use Cauchy-Schwarz and the result in Remark \ref{rem:z_lbd} to deduce the upper-bound
$$
T_1 \leq C\overline{\mathbb{E}}_{\theta}\left[\left|\lambda_{T,\theta}^l(\widetilde{X}_0,\widetilde{X}_{\Delta_l},\dots,\widetilde{X}_T)^{(i)}-\lambda_{T,\theta}^{(i)}\right|^{2r}\right]^{1/(2r)}
$$
Then the term on the R.H.S.~can be dealt with by using standard results in the discretization of Riemann-integrals coupled with the Burkholder-Gundy-Davis inequality, Lipschitz properties of the various functions and Euler discretizations. As the results are almost identical to the calculations in \cite[pp.589]{crisan} they are omitted. That is, one can deduce that
$$
T_1\leq C\Delta_l^{1/2}.
$$
For $T_2$, again, using Cauchy-Schwarz and the result in Remark \ref{rem:lam_lbd} we have
$$
T_2 \leq C \overline{\mathbb{E}}_{\theta}\left[\left|Z_{T,\theta}^l(\widetilde{X}_0,\widetilde{X}_{\Delta_l},\dots,\widetilde{X}_T)-Z_{T,\theta}
\right|^{2r}\right]^{1/(2r)}.
$$
Then by \cite[Lemma A.5.]{high_freq_ml}
$$
T_2\leq C\Delta_l^{1/2}.
$$
The end of the proof is now clear.
\end{proof}

\subsection{Proof of \autoref{thm:bias} }
\begin{proof}
We need only to bound
$$
T:=\mathbb{E}_{\theta}\left[\left|
\frac{\overline{\mathbb{E}}_{\theta}\,[\,\lambda_{T,\theta}^{(i)}Z_{T,\theta}\,|\,\mathcal{Y}_T\,]}{\overline{\mathbb{E}}_{\theta}\,[\,Z_{T,\theta}\,|\,\mathcal{Y}_T\,]}-
\frac{\overline{\mathbb{E}}_{\theta}\,[\,\lambda_{T,\theta}^l(\widetilde{X}_0,\widetilde{X}_{\Delta_l},\dots,\widetilde{X}_T)^{(i)}
\,Z_{T,\theta}^l(\widetilde{X}_0,\widetilde{X}_{\Delta_l},\dots,\widetilde{X}_T)\,|\,\mathcal{Y}_T\,]}{\overline{\mathbb{E}}_{\theta}\,[\,Z_{T,\theta}^l(\widetilde{X}_0,\widetilde{X}_{\Delta_l},\dots,\widetilde{X}_T)\,|\,\mathcal{Y}_T\,]}
\right|^r\right]^{1/r}.
$$
Now we have by Cauchy-Schwarz
$$
T \leq T_1T_2
$$
where
\begin{eqnarray*}
T_1 & = & \overline{\mathbb{E}}_{\theta}[Z_{T,\theta}^{2r}]^{1/(2r)}\\
T_2 & = & \overline{\mathbb{E}}_{\theta}\left[\left|
\frac{\overline{\mathbb{E}}_{\theta}\,[\,\lambda_{T,\theta}^{(i)}Z_{T,\theta}\,|\,\mathcal{Y}_T\,]}{\overline{\mathbb{E}}_{\theta}\,[\,Z_{T,\theta}\,|\,\mathcal{Y}_T\,]}-
\frac{\overline{\mathbb{E}}_{\theta}\,[\,\lambda_{T,\theta}^l(\widetilde{X}_0,\widetilde{X}_{\Delta_l},\dots,\widetilde{X}_T)^{(i)}
\,Z_{T,\theta}^l(\widetilde{X}_0,\widetilde{X}_{\Delta_l},\dots,\widetilde{X}_T)\,|\,\mathcal{Y}_T\,]}{\overline{\mathbb{E}}_{\theta}\,[\,Z_{T,\theta}^l(\widetilde{X}_0,\widetilde{X}_{\Delta_l},\dots,\widetilde{X}_T)\,|\,\mathcal{Y}_T\,]}
\right|^{2r}\right]^{1/(2r)}.
\end{eqnarray*}
By the result in Remark \ref{rem:z_bd} $T_1\leq C$, so we need only consider $T_2$. We have by using a standard decomposition and the Minkowski inequality that
$$
T_2 \leq T_3 + T_4
$$
where
\begin{eqnarray*}
T_3 & = & \overline{\mathbb{E}}_{\theta}\Bigg[\Bigg|\frac{\overline{\mathbb{E}}_{\theta}\,[\,\lambda_{T,\theta}^l(\widetilde{X}_0,\widetilde{X}_{\Delta_l},\dots,\widetilde{X}_T)^{(i)}
\,Z_{T,\theta}^l(\widetilde{X}_0,\widetilde{X}_{\Delta_l},\dots,\widetilde{X}_T)\,|\,\mathcal{Y}_T\,]}{\overline{\mathbb{E}}_{\theta}\,[\,Z_{T,\theta}\,|\,\mathcal{Y}_T\,]\overline{\mathbb{E}}_{\theta}\,[\,Z_{T,\theta}^l(\widetilde{X}_0,\widetilde{X}_{\Delta_l},\dots,\widetilde{X}_T)\,|\,\mathcal{Y}_T\,]}
\times \\ & &
\left\{\overline{\mathbb{E}}_{\theta}[Z_{T,\theta}^l(\widetilde{X}_0,\widetilde{X}_{\Delta_l},\dots,\widetilde{X}_T)|\mathcal{Y}_t]-\overline{\mathbb{E}}_{\theta}[Z_{T,\theta}|\mathcal{Y}_T]\right\}\Bigg|^{2r}\Bigg]^{1/(2r)}\\
T_4 & = & \overline{\mathbb{E}}_{\theta}\Bigg[\Bigg|\frac{1}{\overline{\mathbb{E}}_{\theta}\,[\,Z_{T,\theta}\,|\,\mathcal{Y}_T\,]}\Big\{
\overline{\mathbb{E}}_{\theta}\,[\,\lambda_{T,\theta}^{(i)}Z_{T,\theta}\,|\,\mathcal{Y}_T\,]-\\ & &
\overline{\mathbb{E}}_{\theta}\,[\,\lambda_{T,\theta}^l(\widetilde{X}_0,\widetilde{X}_{\Delta_l},\dots,\widetilde{X}_T)^{(i)}Z_{T,\theta}^l(\widetilde{X}_0,\widetilde{X}_{\Delta_l},\dots,\widetilde{X}_T)\,|\,\mathcal{Y}_T\,]
\Big\}\Bigg|^{2r}\Bigg]^{1/(2r)}.
\end{eqnarray*}
For $T_3$ one can apply Cauchy-Schwarz and \cite[Lemma A.5.]{high_freq_ml} we have the upper-bound
$$
T_3 \leq C\Delta_l^{1/2}\overline{\mathbb{E}}_{\theta}\Bigg[\Bigg|\frac{\overline{\mathbb{E}}_{\theta}\,[\,\lambda_{T,\theta}^l(\widetilde{X}_0,\widetilde{X}_{\Delta_l},\dots,\widetilde{X}_T)^{(i)}
\,Z_{T,\theta}^l(\widetilde{X}_0,\widetilde{X}_{\Delta_l},\dots,\widetilde{X}_T)\,|\,\mathcal{Y}_T\,]}{\overline{\mathbb{E}}_{\theta}\,[\,Z_{T,\theta}\,|\,\mathcal{Y}_T\,]\overline{\mathbb{E}}_{\theta}\,[\,Z_{T,\theta}^l(\widetilde{X}_0,\widetilde{X}_{\Delta_l},\dots,\widetilde{X}_T)\,|\,\mathcal{Y}_T\,]}\Bigg|^{4r}\Bigg]^{1/(4r)}.
$$
Now for the expectation on the R.H.S.~one can use the H\"older inequality along with Lemmata \ref{lem:diff1}-\ref{lem:diff3} to deduce that
$$
T_3 \leq C\Delta_l^{1/2}.
$$
For $T_4$ using Cauchy-Schwarz and Lemma \ref{lem:diff4} we have the upper-bound
$$
T_4 \leq C\Delta_l^{1/2}\overline{\mathbb{E}}_{\theta}\Bigg[\Bigg|\frac{1}{\overline{\mathbb{E}}_{\theta}\,[\,Z_{T,\theta}\,|\,\mathcal{Y}_T\,]}\Bigg|^{4r}\Bigg]^{1/(4r)}.
$$
Applying Lemma \ref{lem:diff1} we have
$$
T_4 \leq C\Delta_l^{1/2}
$$
from which we conclude.
\end{proof}

\end{document}